\definecolor{shadecolor}{rgb}{.8,.8,.8}
\newcommand{\ket}[1]{\left\vert{#1}\right\rangle}
\newcommand{\qw}[1][-1]{\ar @{-} [0,#1]}
\newcommand{\gate}[1]{*{\xy *+<.6em>{#1};p\save+LU;+RU **\dir{-}\restore\save+RU;+RD **\dir{-}\restore\save+RD;+LD **\dir{-}\restore\POS+LD;+LU **\dir{-}\endxy} \qw}
\newcommand{\measureD}[1]{*{\xy*+=+<.5em>{\vphantom{\rule{0em}{.1em}#1}}*\cir{r_l};p\save*!R{#1} \restore\save+UC;+UC-<.5em,0em>*!R{\hphantom{#1}}+L **\dir{-} \restore\save+DC;+DC-<.5em,0em>*!R{\hphantom{#1}}+L **\dir{-} \restore\POS+UC-<.5em,0em>*!R{\hphantom{#1}}+L;+DC-<.5em,0em>*!R{\hphantom{#1}}+L **\dir{-} \endxy} \qw}
\newcommand{\multimeasureD}[2]{*+<1em,.9em>{\hphantom{#2}}\save[0,0].[#1,0];p\save !C *{#2},p+LU+<0em,0em>;+RU+<-.8em,0em> **\dir{-}\restore\save +LD;+LU **\dir{-}\restore\save +LD;+RD-<.8em,0em> **\dir{-} \restore\save +RD+<0em,.8em>;+RU-<0em,.8em> **\dir{-} \restore \POS !UR*!UR{\cir<.9em>{r_d}};!DR*!DR{\cir<.9em>{d_l}}\restore \qw}
\newcommand{\qswap}{*=<0em>{\times} \qw}
\newcommand{\multigate}[2]{*+<1em,.9em>{\hphantom{#2}} \qw \POS[0,0].[#1,0];p !C *{#2},p \save+LU;+RU **\dir{-}\restore\save+RU;+RD **\dir{-}\restore\save+RD;+LD **\dir{-}\restore\save+LD;+LU **\dir{-}\restore}
\newcommand{\ghost}[1]{*+<1em,.9em>{\hphantom{#1}} \qw}
\newcommand{\Qcircuit}[1][0em]{\xymatrix @*=<#1>}
\newcommand{\pureghost}[1]{*+<1em,.9em>{\hphantom{#1}}}
\newcommand{\multiprepareC}[2]{*+<1em,.9em>{\hphantom{#2}}\save[0,0].[#1,0];p\save !C
  *{#2},p+RU+<0em,0em>;+LU+<+.8em,0em> **\dir{-}\restore\save +RD;+RU **\dir{-}\restore\save
  +RD;+LD+<.8em,0em> **\dir{-} \restore\save +LD+<0em,.8em>;+LU-<0em,.8em> **\dir{-} \restore \POS
  !UL*!UL{\cir<.9em>{u_r}};!DL*!DL{\cir<.9em>{l_u}}\restore}
\newcommand{\prepareC}[1]{*{\xy*+=+<.5em>{\vphantom{#1\rule{0em}{.1em}}}*\cir{l^r};p\save*!L{#1} \restore\save+UC;+UC+<.5em,0em>*!L{\hphantom{#1}}+R **\dir{-} \restore\save+DC;+DC+<.5em,0em>*!L{\hphantom{#1}}+R **\dir{-} \restore\POS+UC+<.5em,0em>*!L{\hphantom{#1}}+R;+DC+<.5em,0em>*!L{\hphantom{#1}}+R **\dir{-} \endxy}}
\newcommand{\poloFantasmaCn}[1]{{{}^{#1}_{\phantom{#1}}}}
\newcommand\Ket[1]{|#1\rangle}
\newcommandx\KetBra[2][usedefault, addprefix=\global, 2=]{|#1\rangle\!\langle\ifthenelse{\equal{#2}{}}{#1}{#2}|}
\newcommand\State[1]{|#1)}
\newcommand\Effect[1]{(#1|}
\newcommand{\RBraKet}[2]{(#1|#2)}
\newcommand\SetStates{\mathrm{St}}
\newcommand\SetEffects{\mathrm{Eff}}
\newcommand\SetTransf{\mathrm{Transf}}
\newcommand\Tr{\operatorname{Tr}}
\newcommand\Test[1]{\bm{\mathcal #1}}
\newcommand\PreparationTest[1]{\bm{#1}}
\newcommand\ObservationTest[1]{\bm{#1}}
\newcommand\Transformation[1]{\mathcal{#1}}
\newcommandx\STransformation[5][usedefault, addprefix=\global, 2=, 3=, 4=,
\NOT\isempty{#4}}{^{#4   
\newcommandx\NSState[5][usedefault, addprefix=\global, 2=, 3=, 4=,
\newcommandx\NSEffect[5][usedefault, addprefix=\global, 2=, 3=, 4=,
\newcommandx\SState[5][usedefault, addprefix=\global, 2=, 3=, 4=,
\newcommandx\SEffect[5][usedefault, addprefix=\global, 2=, 3=, 4=, 5=]{
\Effect{\NSEffect{#1}[#2][#3][#4][#5]} }
\newcommandx\ProbCond[4][usedefault, addprefix=\global, 1=,
\newcommand{\ustickcool}[1]{*!D!<0em,-.1em>=<0em>{\scriptstyle #1}}
\newcommand{\System}[1]{\mathrm{#1}}
\newcommand\Hilb[1]{\mathscr{#1}}
\newcommand\Herm{\operatorname{Herm}}
\newcommand\Reals{\mathbb R}
\newcommand\Complexes{\mathbb C}
\newcommandx\F[1][usedefault, addprefix=\global, 1=]{\varphi\ifthenelse{\NOT\isempty{#1}}{_{#1}}{}}
\newcommand{\sA}{\System{A}}
\newcommand{\sB}{\System{B}}
\newcommand{\sC}{\System{C}}
\newcommand{\sD}{\System{D}}
\newcommand{\sE}{\System{E}}
\newcommand{\sF}{\System{F}}
\newcommand{\sI}{\System{I}}
\newcommand{\sQ}{\System{Q}}
\newcommand{\sX}{\System{X}}
\newcommand{\sY}{\System{Y}}
\newcommand{\set}[1]{\mathsf{#1}}
\newcommand\Opt{\textsc{opt}\xspace}
\newcommand\Qt{\textsc{qt}\xspace}
\newcommand\Rqt{\textsc{rqt}\xspace}
\newcommand\Fqt{\textsc{fqt}\xspace}
\newcommand\Povm{\textsc{povm}\xspace}
\newcommand\Lhs{\textsc{lhs}\xspace}
\newcommand\Rhs{\textsc{rhs}\xspace}
\newcommand\Locc{\textsc{locc}\xspace}
\newcommand\Mes{\textsc{mes}\xspace}
\newcommand\Lfm{\textsc{lfm}\xspace}
\newcommand\Jwt{\textsc{jwt}\xspace}
\newcommand\Car{\textsc{car}\xspace}
\newcommand\ie{i.e.\xspace}
\newcommand\eg{e.g.\xspace}
\renewcommand\qswap{S_\mathrm{Q}}
\newcommand\fswap{S_\mathrm{F}}
\newcommand\ppext[1]{\widetilde{#1}}
\begin{document}
\title{The Feynman problem and Fermionic entanglement:\\ Fermionic theory versus qubit theory}

\author{GIACOMO MAURO D'ARIANO}
\address{QUIT group, Dipartimento di Fisica, via Bassi 6,\\
Pavia, 27100, Italy.\\
INFN Gruppo IV, Sezione di Pavia, via Bassi 6,\\
Pavia, 27100, Italy.\\
dariano@unipv.it}

\author{FRANCO MANESSI}
\address{QUIT group, Dipartimento di Fisica, via Bassi 6,\\
Pavia, 27100, Italy.\\
franco.manessi01@ateneopv.it}

\author{PAOLO PERINOTTI}
\address{QUIT group, Dipartimento di Fisica, via Bassi 6,\\
Pavia, 27100, Italy.\\
INFN Gruppo IV, Sezione di Pavia, via Bassi 6,\\
Pavia, 27100, Italy.\\
paolo.perinotti@unipv.it}

\author{ALESSANDRO TOSINI}
\address{QUIT group, Dipartimento di Fisica, via Bassi 6,\\
Pavia, 27100, Italy.\\
alessandro.tosini@unipv.it}

\begin{abstract}
  The present paper is both a review on the Feynman problem, and an original research presentation
  on the relations between Fermionic theories and qubits theories, both regarded in the novel
  framework of operational probabilistic theories. The most relevant results about the Feynman
  problem of simulating Fermions with qubits are reviewed, and in the light of the new original
  results the problem is solved. The answer is twofold. On the computational side the two theories
  are equivalent, as shown by Bravyi and Kitaev \cite{Bravyi2002210}. On the operational side
  the quantum theory of qubits and the quantum theory of Fermions are different, mostly in the
  notion of locality, with striking consequences on entanglement. Thus the emulation does not respect
  locality, as it was suspected by Feynman in Ref.~\citen{feynman1982simulating}.
\end{abstract}

\maketitle
\bibliographystyle{ws-ijmpa}

\section{Introduction}

In the last three decades the relation between Fermionic systems and other quantum systems has been throughly
investigated from both the computational and the physical point of view. In particular the puzzling
anti-commuting nature of the Fermionic systems casts a shadow on the possibility of simulating the physical
evolution of a bunch of Fermionic systems by means of commuting quantum systems---say \emph{qubits}.  This
issue was raised by R.~P.~Feynman in 1982 \cite{feynman1982simulating}, when in his seminal work on physical
computation he wondered about the possibility of simulating Fermions by local quantum systems in
interaction---what we would call nowadays a {\em quantum computer} :
\begin{quote}
  \emph{
    Could we imitate every quantum mechanical system which is discrete and has a finite number of degrees of
    freedom? I know, almost certainly, that we could do that for any quantum mechanical system which involves
    Bose particles. I'm not sure whether Fermi particles could be described by such a system. So I leave that
    open.
  }
\end{quote}
The problem is that of encoding the evolution of Fermionic fields onto localized quantum systems.  A
well-known encoding of $N$ Fermionic systems into $N$ qubits is given by the \emph{Jordan-Wigner transform}
(\Jwt)\cite{jordan1928paulische}. Such an encoding, based on the identification between the Fock space of $N$
Fermions and the Hilbert space of $N$ qubits, provides a $*$-algebra isomorphism between the Fermionic
anticommuting algebra and the commuting algebra of qubits.  Such a correspondence has been a valuable
instrument in modern solid state physics for solving the one dimensional \textsc{xy} spin-chains
\cite{onsager1944crystal, lieb1964two} and then for the understanding of superconductivity and quantum Hall
effect. Moreover, a \emph{time-adaptive \Jwt} has been introduced in Ref.~\citen{PhysRevA.81.050303}, which
allows to contract Fermionic unitary circuits with the same complexity as for the corresponding spin model.
In quantum information science the \Jwt has been used to extend to the Fermionic case notions as entanglement
\cite{PhysRevA.76.022311}, entropic area law \cite{PhysRevLett.96.010404}, and universal computation
\cite{Bravyi2002210}. More recently the \Jwt, which originally regards one dimensional chains of
spin-$\tfrac{1}{2}$ systems, has been generalized to any spin \cite{PhysRevLett.86.1082} and lattice
\cite{PhysRevLett.71.3622} dimension.

Despite its computational power, the \Jwt fails to solve completely the issue established by Feynman:
physically local Fermionic operations are mapped into nonlocal quantum ones and vice versa.  As noticed by
many authors this can lead to ambiguities in defining the partial trace \cite{PhysRevA.83.062323,
PhysRevA.85.016301, PhysRevA.85.016302, PhysRevA.87.022338}, and in assessing the local nature of operations
\cite{verstraete2005mapping}.

Independently on the \Jwt the Fermionic systems are usually assumed to obey the \emph{Wigner superselection
rule}. Based on the simple argument of the impossibility of discriminating a $2\pi$ rotation from the identity
\cite{streater1964pct, weinberg1996quantum}, this superselection rule corresponds to an inhibition to the
superposition rule and  forbids superpositions among states with an odd number and an even number of Fermionic
excitations. Such a constraint on the admitted states for a set of Fermionic systems avoids the ambiguities
connected to the \Jwt \cite{PhysRevA.76.022311}, but it has never been shown to promote the Jordan-Wigner
isomorphism to a ``physical isomorphism''---\ie preserving some sort of locality of the Fermionic operations
through the encoding.

In this paper we tackle the issue of retaining locality of Fermionic operations through a qubit simulation in
a novel way, namely considering the Fermionic modes as the elementary systems of an \emph{operational
probabilistic theory} (\Opt).  The context of {\Opt}s provides a unified framework for studying and comparing
properties of different probabilistic models, such as locality.  Well-known examples of {\Opt}s are: (i)
quantum theory (\Qt) (recently axiomatized within the operational framework \cite{hardy2001quantum,
quit-purification, quit-derivation}), (ii) the classical information theory \cite{quit-derivation}, (iii) the
box-world \cite{PhysRevA.75.032304}, and (iv) the real quantum theory (\Rqt) \cite{stueckelberg1961quantum,
hardy2012limited}.  In Section \ref{sec:opt} we review the operational framework  and present the recent
results of Ref.  \citen{ d2013Fermionic}, where the superselection rule has been formalized in the general
context of {\Opt}s.

In Section \ref{s:fqt}  we build up the largest \Opt corresponding to the Fermionic computation.  We write all
possible events (states, transformations, effects) of the theory achieved with the anticommuting algebra of
the Fermionic field and assuming operations involving fields on some Fermionic modes to be local on those
modes. Locality here is meant in the operational sense, namely operations on systems that are not causally
connected must commute. The derivation leads naturally to the \emph{Wigner superselection rule}. Since there
is not a unique {\Opt} respecting such a superselection rule we then look for the largest theory compatible
with the locality of Fermionic operations, here denoted Fermionic quantum theory (\Fqt). 

In the second part of the paper (see Section \ref{sec:informational-features}) we study the operational
consequences of superselection. Unlike \Qt, \Fqt does not satisfy \emph{local tomography}, \ie the possibility
of  discriminating between two nonlocal states using only local measurements. After proving the correspondence
between Fermionic and qubit local operations with classical communication (\Locc), we study the emerging
notion of entanglement for Fermionic systems, an issue addressed in Ref.~\citen{PhysRevA.76.022311} for the
first time. Here we will identify non-separability as the unique notion of entanglement in \Fqt.  Upon
defining the Fermionic \emph{entanglement of formation} and \emph{concurrence}, we see that in \Fqt there are
states with maximal entanglement of formation that are mixed and that Fermionic entanglement does not satisfy
\emph{monogamy}, \ie the limitation on the sharing of entanglement between many parties.  Moreover the notion
of maximally entangled state must be replaced with the one of \emph{maximally entangled set}\cite{Kraus} also
in the bipartite case, unlike \Qt. Interestingly, while in \Qt  a simple linear criterion for full
separability of states is lacking we will see that \Fqt allows for it.

It is worth mentioning that \Fqt is only a special example of superselected \Qt while the notion of
superselection of Ref.~\citen{d2013Fermionic} allows for many other theories. Among them we will discuss
briefly the case of \Rqt---which also lacks local tomography\cite{hardy2012limited} and monogamy of
entanglement\cite{wootters2012monogamy}---and the theory with \emph{number superselection}---which only admits
superposition of states having the same particle occupation number.

A computational model based on Fermionic systems has already proposed by Bravyi and Kitaev in
Ref.~\citen{Bravyi2002210}. They showed that such a model supports \emph{universal computation} and that it is
equivalent to the qubit computational one. The computational model of Ref.~\citen{Bravyi2002210} is just the
\Fqt with the additional constraint given by the \emph{conservation of parity}; as a consequence the resulting
sets of transformations are strictly included in the \Fqt's ones. In Section~\ref{sec:fermionic-computation}
we compare \Qt and \Fqt from the point of view of computational complexity, and exploiting the results of
Bravyi and Kitaev\cite{Bravyi2002210} (here reviewed) we show the equivalence of the two theories and that
even \Fqt supports universal computation.

\section{Operational Probabilistic Theories}\label{sec:opt}

Before starting we need to review the basic definitions and notations for Operational Probabilistic Theories
(\Opt). For a detailed discussion see Ref.~\citen{quit-purification}. The fundamental notion in the
operational framework is that of \emph{test}, which is the abstract element of the framework corresponding to
a (single use) of a physical device. In more details, a test $\Test{A}=\{\Transformation{A}_{i}\}_{i\in\eta}$
describes an elementary operation that  usually produces an outcome $i$ belonging to the set $\eta$ of all the
possible outcomes. The readout of the outcome $i$ specifies the occurrence of the physical circumstance
identified by the \emph{event} $\Transformation{A}_{i}$.  Tests are also specified by an input and an output
label---\eg $\sA,\sB$---that identify the \emph{system types} (\emph{systems}, for short).  The test \(
\Test{A} \) and its building events \( \Transformation{A}_{i} \in \Test{A} \) can also be represented in the
following pictorial way:
\begin{align*}
  & \Test{A} \equiv 
    \Qcircuit @C=1em @R=.7em @! R { 
      & \ustickcool{\sA}\qw & \gate{\Test{A}} & \ustickcool{\sB}\qw & \qw
    }, 
  & & \Transformation{A}_i \equiv 
    \Qcircuit @C=1em @R=.7em @! R {
      & \ustickcool{\sA}\qw & \gate{\Transformation{A}_{i}} & \ustickcool{\sB}\qw &\qw
    }. 
\end{align*}
If an event $\Transformation{A}$ belongs to a singleton test \( \Test{A} \)---\ie \( \Test{A} = \{
\Transformation{A} \} \)---we say that \( \Transformation{A} \) is \emph{deterministic}.

Physical devices can be connected in sequence, as long as the output system type of each device is the same as
the input system type of the next one. So do tests: two tests $\Test{A} = \{ \Transformation{A}_i
\}_{i\in\eta}$, $\Test{B} = \{ \Transformation{B}_j \}_{j\in\chi}$ can be connected in sequence as long as the
output wire of the first one in the sequence (say \( \Test{A} \)) is of the same type as that of the input
wire of the last one (say \( \Test{B} \)), thus giving the \emph{sequential composition} \( \Test{B} \circ
\Test{A} \coloneqq \{ \Transformation{B}_j \circ \Transformation{A}_i \}_{(i,j)\in\eta\times\chi} \);
pictorially
\begin{align*}
  &
  \Qcircuit @C=1em @R=.7em @! R {
    & \ustickcool{\sA}\qw & \gate {\Test{A}} & \ustickcool{\sB}\qw & \gate{\Test{B}}&\ustickcool{\sC}\qw&\qw
  } 
  && 
  \Qcircuit @C=1em @R=.7em @! R {
    & \ustickcool{\sA}\qw & \gate {\Transformation{A}_{i}} & \ustickcool{\sB}\qw & 
      \gate{\Transformation{B}_{j}}&\ustickcool{\sC}\qw&\qw
  }
\end{align*}
The labels of the input and output systems provide rules for connecting tests in sequences. Notice that the
input/output relation has no causal connotation, and it does not entail an underlying ``time arrow''. As we
will see shortly, only in a causal \Opt it is possible to understand the input/output relation as a time
direction.

For every system $\sA$ there exists a unique singleton test $\{\Transformation{I}_{\sA} \}$ such that
$\Transformation{I}_{\sB} \circ\Transformation{A}=\Transformation{A}\circ\Transformation{I}_{\sA}$ for every
event $\Transformation{A}$ with input $\sA$ and output $\sB$. For every couple of systems $\sA$, $\sB$ we can
form the composite system $\sC\coloneqq\sA\sB$, on which we can perform tests $\Test A\otimes\Test B$ with
events $\Transformation{A}_i\otimes\Transformation{B}_j$ in \emph{parallel composition}, represented as
follows
\begin{equation*}
  \begin{aligned}
    \Qcircuit @C=1em @R=.7em @! R {
      & \qw \poloFantasmaCn \sA & \multigate{1} { \Transformation{A}_i \otimes \Transformation{B}_j } & \qw 
        \poloFantasmaCn \sB &\qw \\
      & \qw \poloFantasmaCn \sC & \ghost { \Transformation{A}_i \otimes \Transformation{B}_j } & \qw 
        \poloFantasmaCn \sD &\qw
    }
  \end{aligned}
  =
  \begin{aligned}
    \Qcircuit @C=1em @R=.7em @! R {
      & \qw \poloFantasmaCn \sA & \gate { \Transformation{A}_i } & \qw \poloFantasmaCn \sB & \qw \\
      & \qw \poloFantasmaCn \sC & \gate { \Transformation{B}_j } & \qw \poloFantasmaCn \sD &\qw
    }
  \end{aligned},
\end{equation*}
and satisfying the condition
\begin{equation*}
  \begin{aligned}
    \Qcircuit @C=1em @R=.7em @! R {
      & \qw \poloFantasmaCn \sA & \gate {\Transformation{A}_i } & \qw \poloFantasmaCn \sB & 
        \gate {\Transformation{B}_j } & \qw \poloFantasmaCn \sC & \qw \\
      & \qw \poloFantasmaCn \sD & \gate {\Transformation{C}_k } & \qw \poloFantasmaCn \sE & 
        \gate {\Transformation{D}_l } & \qw \poloFantasmaCn \sF & \qw
    }
  \end{aligned} 
  = 
  \begin{aligned}
    \Qcircuit @C=1em @R=.7em @! R {
      & \qw \poloFantasmaCn \sA & \gate {\Transformation{B}_j \circ \Transformation{A}_i } & 
        \qw \poloFantasmaCn \sC &\qw \\ 
      & \qw \poloFantasmaCn \sE & \gate {\Transformation{D}_l \circ \Transformation{C}_k} & 
        \qw \poloFantasmaCn \sF &\qw
    }
  \end{aligned},
\end{equation*}
in formulae \( ( \Transformation{B}_{j} \otimes \Transformation{D}_{l} ) \circ ( \Transformation{A}_{i}
\otimes \Transformation{C}_{k} ) = ( \Transformation{B}_j \circ \Transformation{A}_i ) \otimes (
\Transformation{D}_{l} \circ \Transformation{C}_{k} ) \).  Notice that the symbol $\otimes$ is just a formal
way to identify the parallel composition among tests (and events), and it is not the usual tensor product of
linear spaces. Moreover, the previous property implies commutation of tests on different systems, \ie for
every couple of events \( \Transformation{A}_i \), \( \Transformation{B}_j \) it is
\begin{equation}\label{eq:local-transformations-commute}
  \begin{aligned}
    \Qcircuit @C=1em @R=.7em @! R {
      & \qw \poloFantasmaCn \sA & \gate {\Transformation{A}_i} & \qw \poloFantasmaCn \sB &\qw\\ 
      & \qw \poloFantasmaCn \sC & \gate {\Transformation{B}_j } & \qw \poloFantasmaCn \sD &\qw
    }
  \end{aligned}=
  \begin{aligned}
    \Qcircuit @C=1em @R=.7em @! R {
      & \qw \poloFantasmaCn \sA & \gate {\Transformation{A}_i } & \qw \poloFantasmaCn \sB & 
        \gate {\Transformation{I}} & \qw \poloFantasmaCn \sB & \qw \\
      & \qw \poloFantasmaCn \sC & \gate {\Transformation{I} } & \qw \poloFantasmaCn \sC & 
        \gate {\Transformation{B}_j } & \qw \poloFantasmaCn \sD & \qw
    }
  \end{aligned}
  =
  \begin{aligned}
    \Qcircuit @C=1em @R=.7em @! R {
      & \qw \poloFantasmaCn \sA & \gate {\Transformation{I}} & \qw \poloFantasmaCn \sA & 
        \gate {\Transformation{A}_i} & \qw \poloFantasmaCn \sB & \qw \\
      & \qw \poloFantasmaCn \sC & \gate {\Transformation{B}_j } & \qw \poloFantasmaCn \sD & 
        \gate {\Transformation{I} } & \qw \poloFantasmaCn \sD & \qw
    }
  \end{aligned}.
\end{equation}

There is a special system type $\sI$, the \emph{trivial system}, such that $\sA\sI=\sI\sA=\sA$.  The tests
with input system $\sI$ and output $\sA$ are called {\em preparation-tests} of $\sA$, while the tests with
input system $\sA$ and output $\sI$ are called {\em observation-tests} of $\sA$.  Preparation-events of $\sA$
are denoted by the symbols \( \State{\rho}_\sA \) or \( \Qcircuit @C=.5em @R=.5em { \prepareC{\rho} &
\ustickcool{\sA} \qw & \qw } \), and observation-events by \( \Effect{ c}_\sA \) or \( \Qcircuit @C=.5em
@R=.5em { & \ustickcool{\sA} \qw & \measureD{ c} } \).

An arbitrary complex test obtained by parallel and/or sequential composition of ``elementary tests'' is called
\emph{circuit}. An \emph{operational theory} is a collection of systems closed under composition, and a
collection of tests closed under parallel and sequential composition, \ie every circuit belongs to the theory.
Given a circuit we say that an event $\Transformation{H}$ {\em is immediately connected to the input of}
$\Transformation{K}$, and write $\Transformation{H}\prec_1\Transformation{K}$, if there is an output system of
$\Transformation{H}$ that is connected with an input system of $\Transformation{K}$; \eg in
Fig.~\ref{fig:closed-circuit}, \( \Transformation{A}_{i_2} \prec_1 \Transformation{D}_{i_5} \). Moreover we
can introduce the transitive closure $\prec$ of the relation $\prec_1$, and we say that $\Transformation{H}$
{\em is connected to the input of} $\Transformation{K}$ if $\Transformation{H}\prec\Transformation{K}$ (\eg \(
\Transformation{B}_{i_3} \prec \Transformation{E}_{i_6} \)). The two relations \( \prec_1 \) and \( \prec \)
can be extended to tests trivially.

\begin{figure}
  \begin{equation*}
    \begin{aligned}
      \Qcircuit @C=1em @R=.7em @! R {
        \multiprepareC{3}{\Psi_{i_1}} & \qw \poloFantasmaCn{\sA} & \multigate{1}{\Transformation{A}_{i_2}} &
          \qw \poloFantasmaCn{\sB} & \gate{\Transformation{C}_{i_4}} & \qw \poloFantasmaCn{\sC} & 
          \multigate{1}{\Transformation{E}_{i_6}} & \qw \poloFantasmaCn{\sD} & \multimeasureD{2}{G_{i_8}} \\
        \pureghost{\Psi_{i_1}} & \qw \poloFantasmaCn{\System{E}} & \ghost{\Transformation{A}_{i_2}} & 
          \qw \poloFantasmaCn{\System{F}} & \multigate{1}{\Transformation{D}_{i_5}} & 
          \qw \poloFantasmaCn{\System{G}} & \ghost{\Transformation{E}_{i_6}} & & \pureghost{G_{i_8}} \\
        \pureghost{\Psi_{i_1}} & \qw \poloFantasmaCn{\System{H}} & \multigate{1}{\Transformation{B}_{i_3}} &
          \qw \poloFantasmaCn{\System{L}} & \ghost{\Transformation{D}_{i_5}} & \qw \poloFantasmaCn{\System{M}}
          & \multigate{1}{\Transformation{F}_{i_7}} & \qw \poloFantasmaCn{\System N} & \ghost{G_{i_8}} \\
        \pureghost{\Psi_{i_1}} & \qw \poloFantasmaCn{\System{O}} & \ghost{\Transformation{B}_{i_3}} & 
          \qw \poloFantasmaCn{\System{P}} & \qw & \qw & \ghost{\Transformation{F}_{i_7}} \\
      }
    \end{aligned}
  \end{equation*}
  \caption{ The closed circuit in the figure represents the joint probability \( \ProbCond{i_1,i_2,\dots
    i_8}{\PreparationTest{\Psi},\Test A,\dots,\ObservationTest{G}} \) of outcomes $i_1,i_2,\dots i_8$ conditioned
    by the choice of tests ${\PreparationTest{\Psi},\Test A,\dots,\ObservationTest{G}}$. Since the output of the
    event \( \Transformation{A}_{i_2} \) is connected to the input of the event \( \Transformation{D}_{i_5} \)
    through the system \( \System F \), the event \( \Transformation{A}_{i_2} \) immediately precedes the event \(
    \Transformation{D}_{i_5} \) (\( \Transformation{A}_{i_2} \prec_1 \Transformation{D}_{i_5} \)).  Similarly,
    since between the event \( \Transformation{B}_{i_3} \) and the event \( \Transformation{E}_{i_6} \) there is
    \( \Transformation{D}_{i_5} \) such that \( \Transformation{B}_{i_3} \prec_1 \Transformation{D}_{i_5}\prec_1
    \Transformation{E}_{i_6}  \), the event \( \Transformation{B}_{i_3} \) precedes the event \(
    \Transformation{E}_{i_6} \) (\( \Transformation{B}_{i_3} \prec \Transformation{E}_{i_6} \)). If the closed
    circuit of the figure belongs to a causal theory, we have that the marginal probability of the event \(
    \Transformation{D}_{i_5}\in\Test{D} \) cannot depend on the choice of any test \( \Test{X} \) such that \(
    \Test{X} \not\prec \Test{D} \), \ie \( \ProbCond{ i_5 }{ \PreparationTest{\Psi}, \Test{A}, \Test{B}, \Test{C},
    \Test{D}, \Test{E}, \Test{F}, \ObservationTest{G} } = \ProbCond{ i_5 }{ \PreparationTest{\Psi},\Test{A},
    \Test{B} } \).
    \label{fig:closed-circuit}
  }
\end{figure}
A circuit is \emph{closed} if its overall input and output systems are the trivial ones. Figure
\ref{fig:closed-circuit} is an example of a closed circuit.  An {\em operational probabilistic theory} is an
operational theory where every closed circuit represents a probability distribution; \eg the closed circuit in
Fig.~\ref{fig:closed-circuit} represents the probability \( \ProbCond{i_1,i_2,\dots i_8}{\bm\Psi,\Test
A,\dots,\bm G} \) of outcomes $i_1,i_2,\dots i_8$ conditioned by the choice of tests ${\bm\Psi,\Test
A,\dots,\bm G}$. In probabilistic theories we can quotient the set of preparation-events of $\sA$ by the
equivalence relation \( \State{\rho}_\sA\sim\State{\sigma}_\sA \Leftrightarrow \) \emph{the probability of
preparing} \( \State{\rho}_\sA \) \emph{and measuring} \( \Effect{c}_\sA \) \emph{is the same as that of
preparing} \( \State{\sigma}_\sA \) \emph{and measuring} \( \Effect{c}_\sA \) \emph{for every
observation-event} \( \Effect{c}_\sA \) \emph{of} \( \sA \) (and similarly for observation-events).  The
equivalence classes of preparation-events and observation-events of \( \sA\) will be denoted by the same
symbols as their elements $\State{\rho}_\sA$ and $\Effect{c}_\sA$, respectively, and will be called
\emph{state} $\State{\rho}_\sA$ for system $\sA$, and \emph{effect} $\Effect{c}_\sA$ for system $\sA$.  For
every system $\sA$, we will denote by $\SetStates(\sA)$, $\SetEffects(\sA)$ the sets of states and effects,
respectively. States are real-valued functionals over the effects, and vice versa; thus they can be embedded
respectively in the real vector spaces \( \SetStates_{\Reals}(\sA) \), \( \SetEffects_{\Reals}(\sA) \). \(
\SetStates_{\Reals}(\sA) \) is the dual space of \( \SetEffects_{\Reals}(\sA) \), and vice versa since the
dimension $D_\sA \coloneqq \dim \SetEffects_{\Reals}(\sA)$ is assumed to be finite.  The application of the
effect $\Effect{c_i}_\sA$ on the state $\State{\rho}_\sA$ is written as $\RBraKet{ c_i }{ \rho }_\sA$ and
corresponds to the closed circuit $\Qcircuit @C=.5em @R=.5em { \prepareC{\rho} & \ustickcool{\sA} \qw &
\measureD{ c_i } } $, denoting therefore the probability of the $i$th outcome of the observation-test $\bm
c=\{\Effect{c_i}_\sA\}_{i\in\eta}$ performed on the state $\rho$ of system $\sA$, \ie $ \RBraKet{ c_i }{ \rho
}_\sA \coloneqq\ProbCond{c_i}{\bm\rho, \bm c}$.

Any event with input system $\sA$ and output system $\sB$ induces a collection of linear mappings from \(
\SetStates_{\mathbb R}(\sA\sC) \) to \( \SetStates_{\mathbb R}(\sB\sC) \), for varying system $\sC$. Such a
collection is called {\em transformation} from $\sA$ to $\sB$. The set of transformations from $\sA$ to $\sB$
will be denoted by $\SetTransf(\sA,\sB)$, and its linear span by $\SetTransf_\Reals(\sA,\sB)$. The symbols \(
\Transformation{A} \) and $ \Qcircuit @C=.5em @R=.5em { & \ustickcool{\sA} \qw & \gate{\Transformation{A}} &
\ustickcool{\sB} \qw &\qw} $ denoting the event $\Transformation{A}$ will be also used to represent the
corresponding transformation.

One usually requires that an experimenter can randomize the choice of the devices in an experiment with
arbitrary probabilities. This implies that, for every system, all the set of states, effects, and
transformations of an \Opt are convex. The extremal points of the convex set of the deterministic states (and
similarly for effects and transformations) correspond to the so-called \emph{atomic states}, also known as
\emph{pure states} since they cannot be seen as convex combinations of other deterministic states.

An \Opt can satisfy many different properties\cite{quit-derivation}; among the most important ones there is
the property of \emph{causality}.
\begin{definition}\label{def:causality}
  An \Opt is \emph{causal} if for every preparation-test $\PreparationTest{\rho} =\{\State{\rho_i}\}_{i \in
  \eta}$ and any two observation-tests $\ObservationTest{a}=\{\Effect{a_j}\}_{j \in \chi}$ and
  $\ObservationTest{b}=\{\Effect{b_j}\}_{j \in \xi}$ one has \( \sum_{j \in \chi} \RBraKet{a_j}{\rho_i} =
  \sum_{k \in \xi} \RBraKet{b_k} {\rho_i},\forall i\in\eta\), namely the probability of the preparation is
  independent of the choice of observation.
\end{definition}
Causality is equivalent to the so-called \emph{no-backward signaling}
\cite{PhilosophyQuantumInformationEntanglement2}, namely within a closed circuit, the marginal probabilities
of the outcomes of an arbitrary test $\Test{H}$ do not depend on the choice of any test
$\Test{K}\not\prec\Test{H}$.  For example, in the circuit of Fig.~\ref{fig:closed-circuit} causality implies
\begin{equation*}
  \ProbCond{ i_5 }{ \PreparationTest{\Psi}, \Test{A}, \Test{B}, \Test{C}, \Test{D}, \Test{E}, \Test{F},
  \ObservationTest{G} } = \ProbCond{ i_5 }{ \PreparationTest{\Psi}, \Test{A}, \Test{B} }.
\end{equation*}
The present notion of causality is a rigorous definition in the operational framework of the so-called {\em
Einstein causality}.  Indeed, a corollary of \emph{no-backward signaling} is the {\em no-signaling without
interaction} \cite{quit-purification}. In an \Opt, a condition equivalent to causality is that of uniqueness
of the deterministic effect \cite{quit-purification} (usually denoted by $\Effect{e}$). Notice that given a
bipartite state $\State{\rho}_{\sA\sB}$, the deterministic effects $\Effect{e}_\sB$ and $\Effect{e}_\sA$ allow
one to evaluate the marginal states (\eg partial trace in \Qt) on the component systems $\sA$ and $\sB$
\begin{equation*}
  \begin{aligned}
    \Qcircuit @C=1em @R=.7em @! R {\prepareC{\rho}& \qw \poloFantasmaCn \sA &\qw }
  \end{aligned}
  ~=~
  \begin{aligned}
    \Qcircuit @C=1em @R=.7em @! R {
      \multiprepareC{1}{\rho}& \qw \poloFantasmaCn \sA &\qw \\
      \pureghost{\rho} & \qw \poloFantasmaCn \sB &\measureD {e_\sB}
    } 
  \end{aligned}
  \,, \qquad
  \begin{aligned}
    \Qcircuit @C=1em @R=.7em @! R {\prepareC{\sigma}& \qw \poloFantasmaCn \sB &\qw }
  \end{aligned}
  ~=~
  \begin{aligned}
    \Qcircuit @C=1em @R=.7em @! R {
      \multiprepareC{1}{\rho}& \qw \poloFantasmaCn \sA &\measureD {e_\sA} \\
      \pureghost{\rho} & \qw \poloFantasmaCn \sB &\qw
    } 
  \end{aligned}
  \,.
\end{equation*}

Another property is the so-called \emph{no-restriction hypothesis}. We say that a linear map \(
\Transformation{A} \in \SetTransf_{\Reals}(\sA,\sB) \) is {\em admissible} if it locally preserves the set of
states \( \SetStates(\sA\sC) \) for every ancillary system $\sC$; namely 
\begin{equation*}
  \begin{aligned}
    \Qcircuit @C=1em @R=.7em @! R { 
      \multiprepareC{1}{\rho} & \ustickcool{\sA} \qw & \qw \\
      \pureghost{\rho} & \ustickcool{\sC} \qw & \qw 
    }
  \end{aligned} \in \SetStates(\sA\sC)
  \implies 
  \begin{aligned}
    \Qcircuit @C=1em @R=.7em @! R { 
      \multiprepareC{1}{\rho} & \ustickcool{\sA} \qw & \gate{\Transformation{A}} & \ustickcool{\sB} \qw & \qw \\
      \pureghost{\rho} & \ustickcool{\sC} \qw & \gate{\Transformation{I}} & \ustickcool{\sC} \qw & \qw
    }
  \end{aligned} \in \SetStates(\sB\sC) \qquad \forall \sC.
\end{equation*}
The no-restriction hypothesis requires that every admissible map in \( \SetTransf_\Reals(\sA,\sB) \) actually
belongs to \( \SetTransf(\sA,\sB) \)
\footnote{In previous literature \cite{quit-purification} the same nomenclature has been used for a different
  concept: for every system $\sA$ the convex cone generated by \( \SetEffects(\sA)
  \) coincides with the dual convex cone generated by the set of states \( \SetStates(\sA) \).}. Notice that an
\Opt satisfying the no-restriction hypothesis is completely determined by its systems and the respective set
of states, since even the effects---being particular kind of transformations---are all the admissible ones.
We can therefore say that a no-restricted \Opt is simply the collection \( \Theta \coloneqq \{ ( \System{X},
\SetStates(\System{X}) ) \}_{\System{X} \in \eta} \) for varying system \( \System{X} \).

\subsection{Local, bilocal, \ldots, $n$-local tomography}\label{sub:lobi}

A common assumption in the literature of probabilistic theories is the so-called \emph{local tomography} (also
called by some authors \emph{local discriminability} or \emph{local distinguishability}); namely the
possibility of distinguishing two different bipartite states, by means of local devices.
\begin{definition}
  A theory enjoys local tomography if for any \( \State{\rho}, \State{\sigma} \in \SetStates(\sA\sB) \) we
  have
  \begin{equation*}
    \State{\rho} \ne \State{\sigma} \implies \exists \Effect{a} \in \SetEffects(\sA),
    \Effect{b} \in \SetEffects(\sB) \text{ such that }
    \begin{aligned}
      \Qcircuit @C=1em @R=.7em @! R { 
        \multiprepareC{1}{\rho} & \ustickcool{\sA} \qw & \measureD{a} \\
        \pureghost{\rho} & \ustickcool{\sB} \qw & \measureD{b}
      }
    \end{aligned}
    \ne
    \begin{aligned}
      \Qcircuit @C=1em @R=.7em @! R { 
        \multiprepareC{1}{\sigma} & \ustickcool{\sA} \qw & \measureD{a} \\
        \pureghost{\sigma} & \ustickcool{\sB} \qw & \measureD{b}
      }
    \end{aligned}.
  \end{equation*}
\end{definition}
An \Opt with local tomography allows to perform tomography on multipartite states with only local
measurements. Indeed, in such a scenario every bipartite effect \( \Effect{c}_{\sA\sB} \) can be written as
linear combination of product effects, therefore every probability \( \RBraKet{c}{\rho}_{\sA\sB} \) can be
computed as a linear combination of the probabilities \( ( \Effect{a}_\sA \otimes \Effect{b}_\sB )
\State{\rho}_{\sA\sB} \) arising from a finite set of product effects. In other words, we have the property
that the linear space of effects  of a composite system is actually the tensor product of the linear spaces of
effects of the component systems, \ie \( \SetEffects_{\Reals}(\sA\sB) \equiv \SetEffects_{\Reals}(\sA) \otimes
\SetEffects_{\Reals}(\sB) \). Since \( \SetStates_\Reals(\sA\sB) = \SetEffects_\Reals(\sA\sB)^* \) we have
that the same result holds also for the linear space of states. Thus, in a local-tomographic \Opt the parallel
composition of two states (effects) denoted by the symbol \( \otimes \) can be in fact understood as a tensor
product, moreover the following relation between the dimension of the set of states/effects holds: \(
D_{\sA\sB}=D_{\sA}D_{\sB} \).  

\begin{remark}\label{rem:local-tomography}
An important consequence of local tomography is that a transformation $ \Transformation{T} \in \SetTransf(
\sA,\sB )$ is completely specified by its action on $\SetStates(\sA)$:\cite{quit-purification}:
\begin{equation*}
  \Transformation{C} \State{\rho} = \Transformation{C^\prime} \State{\rho} \quad \forall \State{\rho}
  \in\SetStates(\sA) \ \Rightarrow\ \Transformation{C} = \Transformation{C^\prime}.
\end{equation*}
\end{remark}

One can imagine to relax the property of local tomography in many different ways; the most general scenario is
given by the \emph{$n$-local tomography}\cite{hardy2012limited}. First, we define an effect to be
\emph{$n$-local} if it can be written as a conic combination of tensor products of effects that are at most
$n$-partite.
 
\begin{definition}\label{def:n-local-tomography}
  A theory enjoys $n$-local tomography if whenever two states \( \State{\rho} \), \( \State{\sigma} \) are
  different, there is a $n$-local effect \( \Effect{a} \) such that \( \RBraKet{a}{\rho} \ne \RBraKet{a}{\sigma}
  \).
\end{definition}
Clearly, local tomography is the particular case of $n$-local tomography with $n=1$. Given a
$n$-local-tomographic theory with $n>1$, for an arbitrary bipartite system \( \sA\sB \) one has \( D_{\sA\sB}
\geq D_{\sA}D_{\sB} \), since in general \( \SetStates_\Reals(\sA\sB) = \SetStates_\Reals(\sA) \otimes
\SetStates_\Reals(\sB) ~\oplus~\SetStates_\Reals^{\textrm{NL}}(\sA\sB) \), where \(
\SetStates_\Reals^{\textrm{NL}}(\sA\sB) \coloneqq ( \SetStates_\Reals(\sA) \otimes \SetStates_\Reals(\sB)
)^\perp \) is the subspace where the non-local components of the bipartite states live.  By definition, a
$n$-local-tomographic theory is also $(n+1)$-local-tomographic, since a $n$-local effect is also
$(n+1)$-local. We are interested in {\Opt}s that are {\em strictly $n$-local-tomographic}, namely
$n$-local-tomographic {\Opt}s that are not $(n-1)$-local-tomographic.

Another case already studied in literature is \emph{bilocal tomography}\cite{hardy2012limited}, namely
$2$-local tomography. In particular, for such a case we have that for every couple of different tripartite
states \( \State{\rho}, \State{\sigma} \in \SetStates(\sA\sB\sC) \) there exist a $2$-local effect \(
\Effect{x} \in \SetEffects(\sA\sB\sC) \) such that
\begin{equation*}
  \begin{aligned}
    \Qcircuit @C=1em @R=.7em @! R { 
      \multiprepareC{2}{\rho} & \ustickcool{\sA} \qw & \multimeasureD{2}{x} \\
      \pureghost{\rho} & \ustickcool{\sB} \qw & \pureghost{x} \qw \\
      \pureghost{\rho} & \ustickcool{\sC} \qw & \pureghost{x} \qw 
    }
  \end{aligned}
  \ne
  \begin{aligned}
    \Qcircuit @C=1em @R=.7em @! R { 
      \multiprepareC{2}{\sigma} & \ustickcool{\sA} \qw & \multimeasureD{2}{x} \\
      \pureghost{\sigma} & \ustickcool{\sB} \qw & \pureghost{x} \qw \\
      \pureghost{\sigma} & \ustickcool{\sC} \qw & \pureghost{x} \qw 
    }.
  \end{aligned}
\end{equation*}
Notice that, since \( \Effect{x} \) is $2$-local, it can be written as the following conic combination
\begin{equation*}
  \begin{aligned}
    \Qcircuit @C=1em @R=.7em @! R { 
      & \ustickcool{\sA} \qw & \multimeasureD{2}{x} \\
      & \ustickcool{\sB} \qw & \pureghost{a} \qw \\
      & \ustickcool{\sC} \qw & \pureghost{a} \qw 
    }
  \end{aligned}
  = 
  \sum_{j} q_j
  \begin{aligned}
    \Qcircuit @C=1em @R=.7em @! R { 
      & \ustickcool{\sA} \qw & \measureD{a_j} \\
      & \ustickcool{\sB} \qw & \measureD{b_j} \\
      & \ustickcool{\sC} \qw & \measureD{c_j}
    }
  \end{aligned}
  +
  q_j^\prime
  \begin{aligned}
    \Qcircuit @C=1em @R=.7em @! R { 
      & \ustickcool{\sA} \qw & \multimeasureD{1}{d_j} \\
      & \ustickcool{\sB} \qw & \pureghost{d_j} \qw \\
      & \ustickcool{\sC} \qw & \measureD{e_j}
    }
  \end{aligned} 
  +
  q_j^{\prime\prime}
  \begin{aligned}
    \Qcircuit @C=1em @R=.7em @! R { 
      & \ustickcool{\sA} \qw & \measureD{f_j} \\
      & \ustickcool{\sB} \qw & \multimeasureD{1}{g_j} \\
      & \ustickcool{\sC} \qw & \pureghost{g_j} \qw
    }
  \end{aligned} 
  +
  q_j^{\prime\prime\prime}
  \begin{aligned}
    \Qcircuit @C=1em @R=.7em @! R { 
      & \ustickcool{\sA} \qw & \qw & \multimeasureD{2}{h_j} \\
      & \ustickcool{\sC} \qw & \measureD{i_j} \\
      & \ustickcool{\sB} \qw & \qw & \pureghost{h_j} \qw
    }
  \end{aligned} 
\end{equation*}
with \( q_j, q^\prime_j, q^{\prime\prime}_j, q^{\prime\prime\prime}_j \geq 0 \) and \( \Effect{d_j} \in
\SetEffects_\Reals^{\textrm{NL}}(\sA\sB) \) (and similarly for \( \Effect{g_j} \), \( \Effect{h_j} \)). For a
bilocal-tomographic theory we have therefore 
\begin{align}\label{eq:bilocal-lower}
  & D_{\sA\sB} \geq D_\sA D_\sB, \\
  & D_{\sA\sB\sC} \leq D_\sA D_\sB D_\sC + D_\sA \tilde D_{\sB\sC} +  D_\sB \tilde D_{\sA\sC} +D_\sC
    \tilde D_{\sB\sC} \label{eq:bilocal-upper}
\end{align}
where \( \dim (\SetEffects_\Reals^{\textrm{NL}}(\sA\sB)) =: \tilde D_{\sA\sB} = D_{\sA\sB} - D_{\sA}D_{\sB}
\).  A strictly bilocal-tomographic theory has the first bound tight, moreover if the upper bound is saturated
we say that the \Opt is {\em maximally bilocal-tomographic}, since it requires all the $2$-local effects to
discriminate multipartite states.

\subsection{Superselected operational probabilistic theories}\label{s:superselected-opt}

A superselection rule $\sigma$ on a theory $\Theta$ corresponds to a linear section of all sets of
transformations for each multipartite system, which under the no-restriction hypothesis reduces to sectioning
linearly just the sets of states.  We can give the following formal definition of superselection rule:
\begin{definition}\label{def:superselection-rule}
  A superselection rule $\sigma$ is a map from an \Opt $\Theta$ to another \Opt $\bar{\Theta}$,
    \begin{align*}
      \sigma:\ & \Theta\rightarrow\bar{\Theta},\\
               & (\sA, \SetStates(\sA) ) \mapsto \sigma(\ ( \sA, \SetStates(\sA) )\ ) =: 
                  (\bar\sA, \SetStates(\bar\sA)),
    \end{align*}
  such that, for every system \( \sA \), \( \SetStates(\bar\sA) \) is a linear section of \( \SetStates(\sA)
  \), \ie
  \begin{equation*}\label{eq:constraints}
    \SetStates(\bar{\sA}) \coloneqq \{ \rho\in\SetStates(\sA) \mid \RBraKet{s_i^\sigma}{\rho}= 0,\quad i=1,\ldots,
    V^\sigma_\sA \},
  \end{equation*}
  where \(  \Effect{s_i^\sigma}\in\SetEffects_{\Reals}(\sA) \) are $V^\sigma_\sA $  linear independent
  constraints.
\end{definition}

For consistency, the superselection map $\sigma$ must commute with system composition, forcing the definition
of composition for the constrained theory as $\sigma(\sA)\sigma(\sB) \coloneqq \sigma(\sA\sB)$.  Notice that,
being linear $\sigma$ preserves convexity of the theory, \ie all the sets \( \SetStates(\bar\sA) \), \(
\SetEffects(\bar\sA)  \), \( \SetTransf(\bar\sA,\bar\sB)  \), for every system \( \bar\sA \), \( \bar\sB \) of
the constrained theory are convex. For instance, this means that in a \Qt with superselection, states from
different sectors cannot be superimposed, but can be mixed.  From the definition, it follows immediately  \(
\SetStates(\bar\sA) \subseteq \SetStates(\sA) \), \( \SetEffects(\bar\sA)\subseteq\SetEffects(\sA) \), and \(
D_{\bar\sA} = D_{\sA} - V_{\sA}^\sigma \).

The number \( V_\sA^\sigma \) of linearly independent constraints on a system \( \sA \) cannot be arbitrary,
for example consider the trivial bound \( V_\sA^\sigma \leq D_\sA \). In fact, one has other more interesting
bounds due to the system composition.
\begin{proposition}\label{prop:constraints-consistency}
  Let $\bar\Theta$ be the superselected \Opt build from the \Opt \( \Theta \) by means of the superselection
  map $\sigma$. Then the following bounds hold:
  \begin{align}
    \label{eq:constraints-on-the-constraints-a}
      & V_{\sA\sB}^{\sigma} \geq D_{\sA} V_{\sB}^{\sigma} + D_{\sB} V_{\sA}^{\sigma} - 2 V_{\sA}^{\sigma} 
        V_{\sB}^{\sigma}, \\ 
    \label{eq:constraints-on-the-constraints-b}
      & V_{\sA\sB}^{\sigma} \leq D_{\sA} V_{\sB}^{\sigma} + D_{\sB} V_{\sA}^{\sigma} - V_{\sA}^{\sigma} 
        V_{\sB}^{\sigma} + D_{\sA\sB} - D_{\sA} D_{\sB}.
  \end{align}
\end{proposition}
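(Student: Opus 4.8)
The plan is to recast both inequalities as statements about the dimension $D_{\overline{\sA\sB}}=D_{\sA\sB}-V^\sigma_{\sA\sB}$ of the constrained composite, using the composition rule $\overline{\sA\sB}=\bar\sA\bar\sB$, and then to bound this dimension from below and from above by elementary linear algebra on the product structure. I work under the no-restriction hypothesis, so that $\bar\Theta$ is fixed by its state spaces, and I repeatedly use the nondegeneracy of the product pairing: evaluating $\sum_k \Effect{a_k}\otimes\Effect{b_k}$ on product states $\State{\rho}\otimes\State{\sigma}$ shows that the product effects inject, $\SetEffects_\Reals(\sA)\otimes\SetEffects_\Reals(\sB)\hookrightarrow\SetEffects_\Reals(\sA\sB)$, and dually that product states span $\SetStates_\Reals(\sA)\otimes\SetStates_\Reals(\sB)$.

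First I fix bases adapted to $\sigma$. Let $\{\Effect{s_i}\}_{i=1}^{V^\sigma_\sA}$ and $\{\Effect{t_j}\}_{j=1}^{V^\sigma_\sB}$ be the constraint effects of $\sA$ and $\sB$, and complete them to bases of $\SetEffects_\Reals(\sA)$, $\SetEffects_\Reals(\sB)$ by families $\{\Effect{a_\mu}\}_{\mu=1}^{D_{\bar\sA}}$, $\{\Effect{b_\nu}\}_{\nu=1}^{D_{\bar\sB}}$ whose classes represent the quotients $\SetEffects_\Reals(\bar\sA)=\SetEffects_\Reals(\sA)/\mathrm{span}\{\Effect{s_i}\}$ and $\SetEffects_\Reals(\bar\sB)=\SetEffects_\Reals(\sB)/\mathrm{span}\{\Effect{t_j}\}$. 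The $D_\sA D_\sB$ products then split into four families, independent in $\SetEffects_\Reals(\sA\sB)$ by the injection above: $\Effect{a_\mu}\otimes\Effect{b_\nu}$, the Type~I effects $\Effect{s_i}\otimes\Effect{b_\nu}$, the Type~II effects $\Effect{a_\mu}\otimes\Effect{t_j}$, and the Type~III effects $\Effect{s_i}\otimes\Effect{t_j}$.

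For the lower bound \eqref{eq:constraints-on-the-constraints-a} I show that Types~I and~II are necessarily constraints of $\overline{\sA\sB}$, while Type~III need not be. Take $\State{\rho}\in\SetStates(\overline{\sA\sB})=\SetStates(\bar\sA\bar\sB)$. Since $\Effect{b_\nu}$ represents a genuine nonzero effect of $\bar\sB$, it occurs as an outcome of some observation test on $\bar\sB$, and applying it to the $\bar\sB$ factor of $\State{\rho}$ yields a sub-normalised state of $\bar\sA$, on which every constraint $\Effect{s_i}$ vanishes; explicitly $\RBraKet{s_i\otimes b_\nu}{\rho}=0$, the Type~I equality. The symmetric argument, conditioning on $\Effect{a_\mu}$, gives the Type~II equalities. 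Conversely $\Effect{t_j}$ restricts to the zero effect on $\bar\sB$ (and $\Effect{s_i}$ to the zero effect on $\bar\sA$), so it cannot appear as a nontrivial outcome of any test on the constrained factor; hence no operational requirement forces the Type~III equalities $\RBraKet{s_i\otimes t_j}{\rho}=0$. As the Type~I and Type~II effects are linearly independent in $\SetEffects_\Reals(\sA\sB)$, they impose $V^\sigma_\sA D_{\bar\sB}+D_{\bar\sA}V^\sigma_\sB$ independent constraints, so $V^\sigma_{\sA\sB}\ge V^\sigma_\sA D_{\bar\sB}+D_{\bar\sA}V^\sigma_\sB$; expanding $D_{\bar\sA}=D_\sA-V^\sigma_\sA$ and $D_{\bar\sB}=D_\sB-V^\sigma_\sB$ turns this into \eqref{eq:constraints-on-the-constraints-a}.

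For the upper bound \eqref{eq:constraints-on-the-constraints-b} I bound $D_{\overline{\sA\sB}}$ from below. By consistency of composition every product $\State{\rho}\otimes\State{\sigma}$ with $\State{\rho}\in\SetStates(\bar\sA)$, $\State{\sigma}\in\SetStates(\bar\sB)$ is a state of $\overline{\sA\sB}$, and such products span $\SetStates_\Reals(\bar\sA)\otimes\SetStates_\Reals(\bar\sB)$ of dimension $D_{\bar\sA}D_{\bar\sB}$; hence $D_{\overline{\sA\sB}}\ge D_{\bar\sA}D_{\bar\sB}$. Using $V^\sigma_{\sA\sB}=D_{\sA\sB}-D_{\overline{\sA\sB}}$ together with $D_{\bar\sA}D_{\bar\sB}=(D_\sA-V^\sigma_\sA)(D_\sB-V^\sigma_\sB)$ and $D_{\sA\sB}-D_\sA D_\sB=\tilde D_{\sA\sB}$ yields \eqref{eq:constraints-on-the-constraints-b} after expanding. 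The delicate point is the necessity step in the lower bound: one must invoke that only the complementary effects $\{\Effect{b_\nu}\}$ and $\{\Effect{a_\mu}\}$ descend to nonzero local effects on the constrained factors, so that exactly Types~I and~II—and not Type~III—are forced. This use of the quotient structure $\SetEffects_\Reals(\bar\sB)=\SetEffects_\Reals(\sB)/\mathrm{span}\{\Effect{t_j}\}$ is precisely what produces the coefficient $-2V^\sigma_\sA V^\sigma_\sB$ in \eqref{eq:constraints-on-the-constraints-a} rather than $-V^\sigma_\sA V^\sigma_\sB$, and is the step I expect to require the most care.
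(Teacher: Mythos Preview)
Your proof is correct and follows essentially the same route as the paper's. Both argue the upper bound via $D_{\bar\sA\bar\sB}\ge D_{\bar\sA}D_{\bar\sB}$ from the inclusion of product states, and the lower bound by showing that $\Effect{s_i}\otimes\Effect{b}$ (for $\Effect{b}$ an effect of $\bar\sB$) annihilates every $\State{\rho}\in\SetStates(\bar\sA\bar\sB)$ because the conditional state $\State{\rho_b}$ lies in $\SetStates(\bar\sA)$; the paper phrases this step by contradiction and works directly with $\Effect{b}\in\SetEffects(\bar\sB)$ rather than with a fixed complementary basis, but the content is identical and your count $V^\sigma_\sA D_{\bar\sB}+D_{\bar\sA}V^\sigma_\sB$ is exactly what the paper obtains before expanding.
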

\begin{proof}
  The upper bound of Eq.~\eqref{eq:constraints-on-the-constraints-b} is easily proven upon noticing that for an
  arbitrary \Opt it always happens that \( \SetStates_\Reals( \sA\sB ) \supseteq \SetStates_\Reals(\sA) \otimes
  \SetStates_\Reals(\sB) \), and thus \( D_{\sA\sB} \geq D_{\sA} D_{\sB} \). Hence, one has \(
  D_{\bar\sA\bar\sB} \geq D_{\bar\sA} D_{\bar\sB} \), and using \( D_{\bar\sB} = D_\sB - V_\sB^\sigma \) and \(
  D_{\bar\sB} = D_\sB - V_\sB^\sigma \) we get Eq.~\eqref{eq:constraints-on-the-constraints-b}.

  The lower bound of Eq.~\eqref{eq:constraints-on-the-constraints-a} is proved by showing that all the local
  constraints on the component systems $\sA$ and $\sB$ are also constraints of the composite system $\sA\sB$,
  namely for any $\Effect{b}\in\SetEffects(\bar\sB)$ and any $i=1,\ldots, V^\sigma_\sA$, one has that $
  \Effect{s_i^\sigma} \otimes \Effect{b} \in \SetEffects_{\Reals}(\sA\sB) $ is a constraint for
  $\bar\sA\bar\sB$.  Indeed, suppose by contradiction that \( \RBraKet{s_i^\sigma \otimes b }{\rho} \ne 0 \)
  for some \( \rho \in \SetStates(\bar\sA\bar\sB) \), and \( i \in \{ 1, \ldots, V_{\sA\sB}^\sigma  \} \).
  Since 
  \begin{equation*}
    \State{\rho_b} \coloneqq
    \begin{aligned}
      \Qcircuit @C=1em @R=.7em @! R {
        \multiprepareC{1}{\rho} & \ustickcool{\sA} \qw & \qw \\
        \pureghost{\rho} & \ustickcool{\sB} \qw & \measureD{b}
      }
    \end{aligned} 
  \end{equation*}
  is a valid state for the system \( \bar\sA \), we have $\RBraKet{s_i^\sigma}{\rho_b} \neq 0$ against the
  hypothesis. The same argument holds reversing the roles of the subsystems $\bar\sA$ and $\bar\sB$, so we
  conclude that $ V_{\sA\sB}^\sigma $ shall be at least \(  D_{\bar\sA} V_{\sB}^{\sigma} +
  D_{\bar\sB}V_{\sA}^{\sigma} \), which gives the lower bound of
  Eq.~\eqref{eq:constraints-on-the-constraints-a} using \( D_{\bar\sB} = D_\sB - V_\sB^\sigma \) and \(
  D_{\bar\sB} = D_\sB - V_\sB^\sigma \).
\end{proof} 

Given an \Opt $\Theta$ one can build ``bottom-up'' a superselected theory \( \bar\Theta \) by defining the
constraints only for the elementary systems (the ones that cannot be obtained by composition of other systems)
and taking the minimal number of linear constraints \eqref{eq:constraints-on-the-constraints-a} on the
composite ones. We call such superselected {\Opt}s {\em minimally superselected}.
\begin{definition}\label{def:minimal-superselection}
  A superselected \Opt is minimally superselected if it saturates the lower bound of
  Eq.~\eqref{eq:constraints-on-the-constraints-a}. 
\end{definition}
In a minimally superselected \Opt the only constraints on bipartite systems are $\Effect{r_i^\sigma} \otimes
\Effect{b}$ and $\Effect{a} \otimes \Effect{s_j^\sigma}$, with $a\in\SetEffects(\bar\sA)$,
$b\in\SetEffects(\bar\sB)$, $r_i^\sigma\in\SetEffects_\Reals(\sA)$, $r_j^\sigma\in\SetEffects_\Reals(\sB)$.

On the other hand, the saturation of the upper bound of Eq.~\eqref{eq:constraints-on-the-constraints-b} leads
to a {\em maximally superselected} \Opt:
\begin{definition}\label{def:maximal-superselection}
  A superselected \Opt is maximally superselected if it saturates the upper bound of
  Eq.~\eqref{eq:constraints-on-the-constraints-b}.
\end{definition}

Since enforcing superselection constraints on a \Opt leads to a change of the structure of the set of states,
effects, and transformations, we shall expect a change also in the properties satisfied by the resulting
theory. Indeed, while a causal theory retains causality once superselected, the converse is not true.
Moreover, a local tomographic theory is in general no more local tomographic upon superseletion, as the
following proposition shows.

\begin{proposition}\label{prop:minimal-maximal}
  Let $\bar\Theta$ be a superselection of a local-tomographic theory $\Theta$. Then:
  \begin{enumerate}[(i)]
    \setlength{\itemsep}{1pt}
    \setlength{\parskip}{2pt}
    \setlength{\parsep}{1pt}
    \item Minimal superselection $\Rightarrow$ $\bar\Theta$ maximally bilocal-tomographic,
    \item Maximal superselection $\Rightarrow$ $\bar\Theta$ local-tomographic.
  \end{enumerate}
\end{proposition}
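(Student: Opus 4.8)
The plan is to reduce both implications to a dimension count, exploiting that local tomography of $\Theta$ turns every composite effect space into a genuine tensor product. For each elementary system fix a splitting $\SetEffects_\Reals(\sA)=E_\sA\oplus S_\sA$, where $S_\sA\subseteq\SetEffects_\Reals(\sA)$ is the $V_\sA^\sigma$-dimensional span of the superselection constraints $\Effect{s_i^\sigma}$ and $E_\sA$ is a complement of dimension $D_{\bar\sA}=D_\sA-V_\sA^\sigma$ that represents $\SetEffects_\Reals(\bar\sA)$. Because $\Theta$ is local tomographic, $\SetEffects_\Reals(\sA\sB)=\SetEffects_\Reals(\sA)\otimes\SetEffects_\Reals(\sB)$ splits into the four blocks $E_\sA\otimes E_\sB$, $E_\sA\otimes S_\sB$, $S_\sA\otimes E_\sB$, $S_\sA\otimes S_\sB$, and the effect space $\SetEffects_\Reals(\bar\sA\bar\sB)$ of the superselected composite is obtained by quotienting out the constraints, \ie by retaining a sub-sum of these blocks. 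The whole argument then consists in identifying which blocks survive in the two cases.

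For the maximal case (ii), saturating Eq.~\eqref{eq:constraints-on-the-constraints-b} together with $D_{\sA\sB}=D_\sA D_\sB$ gives $V_{\sA\sB}^\sigma=D_\sA V_\sB^\sigma+D_\sB V_\sA^\sigma-V_\sA^\sigma V_\sB^\sigma$, whence $D_{\bar\sA\bar\sB}=D_{\sA\sB}-V_{\sA\sB}^\sigma=(D_\sA-V_\sA^\sigma)(D_\sB-V_\sB^\sigma)=D_{\bar\sA}D_{\bar\sB}$. In block language only $E_\sA\otimes E_\sB$ survives, so the product effects already span $\SetEffects_\Reals(\bar\sA\bar\sB)$. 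Since this holds for every pair of systems, the equivalence between local tomography and $D_{\sA\sB}=D_\sA D_\sB$ recalled before Remark~\ref{rem:local-tomography} shows that $\bar\Theta$ is local tomographic.

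For the minimal case (i) I first treat a bipartite system: saturating Eq.~\eqref{eq:constraints-on-the-constraints-a} gives $D_{\bar\sA\bar\sB}=D_{\bar\sA}D_{\bar\sB}+V_\sA^\sigma V_\sB^\sigma$, so the surviving blocks are $E_\sA\otimes E_\sB$ and $S_\sA\otimes S_\sB$, the latter being exactly the non-local sector, $\tilde D_{\bar\sA\bar\sB}=V_\sA^\sigma V_\sB^\sigma$. Iterating the minimal bound to the tripartite system $\bar\sA\bar\sB\bar\sC$ (the outcome being independent of the bracketing $(\bar\sA\bar\sB)\bar\sC$ versus $\bar\sA(\bar\sB\bar\sC)$, which I verify at this level) one finds that the surviving blocks are exactly those carrying an even number of $S$-factors: the product block $E_\sA\otimes E_\sB\otimes E_\sC$ and the three blocks $E_\sA\otimes S_\sB\otimes S_\sC$, $S_\sA\otimes E_\sB\otimes S_\sC$, $S_\sA\otimes S_\sB\otimes E_\sC$. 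Their total dimension equals $D_{\bar\sA}D_{\bar\sB}D_{\bar\sC}+D_{\bar\sA}\tilde D_{\bar\sB\bar\sC}+D_{\bar\sB}\tilde D_{\bar\sA\bar\sC}+D_{\bar\sC}\tilde D_{\bar\sA\bar\sB}$, \ie the upper bound of Eq.~\eqref{eq:bilocal-upper} is saturated.

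The remaining, and in my view hardest, step is to prove that $\bar\Theta$ is genuinely bilocal-tomographic: saturating Eq.~\eqref{eq:bilocal-upper} does not by itself yield bilocality, since that bound was derived under the hypothesis of bilocality, and the argument would be circular. Here the block picture supplies the missing ingredient, namely that each surviving block is at most $2$-local. Indeed $E_\sA\otimes E_\sB\otimes E_\sC$ is a product ($1$-local), whereas a block such as $S_\sA\otimes S_\sB\otimes E_\sC$ factors as a non-local bipartite effect of $\bar\sA\bar\sB$ (an element of the sector $S_\sA\otimes S_\sB$ found above) tensored with an effect of $\bar\sC$, hence is $2$-local; the only genuinely tripartite block $S_\sA\otimes S_\sB\otimes S_\sC$ is odd and is therefore eliminated by the minimal constraints. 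Thus every effect of $\bar\sA\bar\sB\bar\sC$ is a combination of at-most-$2$-local effects, these separate the states, and bilocal tomography follows; combined with the saturation above, $\bar\Theta$ is maximally bilocal-tomographic. The two technical points left to check are the well-posedness of the iterated minimal (respectively maximal) construction on multipartite systems and the passage from the dimensional identities to the actual direct-sum decomposition of $\SetEffects_\Reals(\bar\sA\bar\sB\bar\sC)$, \ie that the listed blocks really span and are linearly independent in the quotient.
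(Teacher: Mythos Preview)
Your argument for (ii) is the same as the paper's: both simply expand $D_{\bar\sA\bar\sB}=D_{\sA\sB}-V^\sigma_{\sA\sB}$ using the saturated upper bound and $D_{\sA\sB}=D_\sA D_\sB$, then factor. For (i) the routes diverge. The paper gives a purely algebraic verification of the tripartite identity $D_{\bar\sA\bar\sB\bar\sC}=D_{\bar\sA}D_{\bar\sB}D_{\bar\sC}+D_{\bar\sA}\tilde D_{\bar\sB\bar\sC}+D_{\bar\sB}\tilde D_{\bar\sA\bar\sC}+D_{\bar\sC}\tilde D_{\bar\sA\bar\sB}$: it plugs the minimal-superselection relation into $D_{\bar\sA\bar\sB\bar\sC}=D_{\sA\sB\sC}-V^\sigma_{\sA\sB\sC}$ via the bracketing $\sA(\sB\sC)$, separately expands the right-hand side using $\tilde D_{\bar\sX\bar\sY}=D_{\sX\sY}-V^\sigma_{\sX\sY}-D_{\bar\sX}D_{\bar\sY}$, and checks that both reduce to the same polynomial in the $D_\sX$ and $V^\sigma_\sX$. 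Your block decomposition arrives at the same identity but organises it structurally: the minimal constraints kill exactly the odd-parity blocks at every level of composition, so $\SetEffects_\Reals(\bar\sA\bar\sB\bar\sC)$ is the even-parity sum, and the identity is read off term by term (with $\tilde D_{\bar\sX\bar\sY}=V^\sigma_\sX V^\sigma_\sY$). This buys you two things the paper does not spell out. First, the consistency under re-bracketing is transparent, since parity is symmetric. Second, and more to the point you raise, the paper verifies only the saturation of \eqref{eq:bilocal-upper} and treats that as equivalent to ``maximally bilocal-tomographic'', without separately arguing that $\bar\Theta$ is bilocal-tomographic; your observation that each surviving even block is manifestly a tensor product of an at-most-bipartite effect with a local one supplies that missing step. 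The paper's approach is lighter---no choice of complement $E_\sA$, no quotient bookkeeping---but yours explains the mechanism and is more complete on exactly the point you flagged.
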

\begin{proof}
  Let us prove the first implication. The superselected theory $\bar\Theta$ is maximally bilocal-tomographic if
  it saturates the bound of Eq.~\eqref{eq:bilocal-upper}, namely
  \begin{equation*}
    D_{\bar\sA\bar\sB\bar\sC} = D_{\bar\sA} D_{\bar\sB} D_{\bar\sC} + D_{\bar\sA} \tilde D_{\bar\sB\bar\sC}
    +  D_{\bar\sB} \tilde D_{\bar\sA\bar\sC} +D_{\bar\sC} \tilde D_{\bar\sB\bar\sC}.
  \end{equation*}
  We prove this equality evaluating the \Lhs\ and the \Rhs\ of the equation and enforcing the minimal
  superselection given by the lower bound of Eq.~\eqref{eq:constraints-on-the-constraints-a}
  \begin{equation}\label{eq:minimal-superselection}
  V_{\sA\sB}^{\sigma} = D_{\sA}V_{\sB}^{\sigma} + D_{\sB}V_{\sA}^{\sigma} - 2 V_{\sA}^{\sigma}V_{\sB}^{\sigma}.
  \end{equation}
  
  \Lhs: we have \( D_{\overline{\sA\sB\sC}}=D_{\sA\sB\sC}-V^{\sigma}_{\sA\sB\sC} \); taking the partition
  $\bar\sA\bar\sB\bar\sC=\bar\sA(\bar\sB\bar\sC)$, the requirement of minimal
  superselection
  gives
  \begin{equation*}
    D_{\bar\sA\bar\sB\bar\sC} = D_{\sA\sB\sC} - ( D_{\sA} V^{\sigma}_{\sB\sC} + D_{\sB\sC} V^{\sigma}_{\sA} - 2
    V^{\sigma}_{\sA} V^{\sigma}_{\sB\sC}).
  \end{equation*}
  Using again the minimal superselection requirement, we expand 
  $V^{\sigma}_{\sB\sC}$ and $V^{\sigma}_{\sB\sC}$ getting
  \begin{multline*}
   D_{\bar\sA\bar\sB\bar\sC} = D_{\sA} D_{\sB} D_{\sC}  - ( D_{\sA} D_{\sB} V^{\sigma}_{\sC} + D_{\sA} D_{\sC}
      V^{\sigma}_{\sB} + D_{\sB} D_{\sC} V^{\sigma}_{\sA}) \\
    + 2 ( D_{\sA} V^{\sigma}_{\sB} V^{\sigma}_{\sC} + D_{\sA} V^{\sigma}_{\sC}
    V^{\sigma}_{\sB} + D_{\sB} V^{\sigma}_{\sC} V^{\sigma}_{\sA} ) 
  - 4 V^{\sigma}_{\sA} V^{\sigma}_{\sB} V^{\sigma}_{\sC},
  \end{multline*}
  where we used the identity $D_{\sA\sB}=D_{\sA}D_{\sB}$ since the \Opt \( \Theta \) is local tomographic.
  
  \Rhs: We use the identities \( \tilde D_{\bar\sX\bar\sY} = D_{\bar\sX\bar\sY} - D_{\bar\sX}D_{\bar\sY} \),
  \( D_{\bar\sX\bar\sY} = D_{\sX\sY} - V^\sigma_{\sX\sY} \), and the requirement of minimal superselection of
  Eq.~\eqref{eq:minimal-superselection}. Finally, the local tomography condition $D_{\sX\sY}=D_{\sX}D_{\sY}$
  for the \Opt \( \Theta \) gives the same expression of the \Lhs.

  Let us now prove the second implication, namely that equality in
  Eq.~\eqref{eq:constraints-on-the-constraints-b} implies \( D_{\bar\sA\bar\sB} = D_{\bar{\sA}} D_{\bar{\sB}}
  \). Expanding \( D_{\bar\sA\bar\sB} \) as \( D_{\sA\sB} - V^\sigma_{\sA\sB} \), using
  Eq.~\eqref{eq:constraints-on-the-constraints-b}, and remembering that the \Opt $\Theta$ is local tomographic
  we get \( D_{\bar\sA\bar\sB} = D_\sA D_\sB - D_\sA V_\sB - D_\sB V_\sA + V^\sigma_\sA V^\sigma_\sB = ( D_\sA
  - V_\sA) ( D_\sB - V_\sB ) = D_{\bar{\sA}} D_{\bar{\sB}} \).
\end{proof}

In general, in a bilocal-tomographic theory two different states $\rho$ and $\sigma$ of the four-partite
system $\sA\sB\sC\sD$ can be discriminated by the following classes of effects
\begin{equation}
  \begin{aligned}
    \Qcircuit @C=1em @R=.7em @! R { 
      & \ustickcool{\sA} \qw & \multimeasureD{1}{a} \\
      & \ustickcool{\sB} \qw & \pureghost{a} \qw \\
      & \ustickcool{\sC} \qw & \multimeasureD{1}{b} \\
      & \ustickcool{\sD} \qw & \pureghost{b} \qw
    }
  \end{aligned}\ ,\quad
  \begin{aligned}
    \Qcircuit @C=1em @R=.7em @! R { 
      & \ustickcool{\sA} \qw & \multimeasureD{1}{c} \\
      & \ustickcool{\sC} \qw & \pureghost{c} \qw \\
      & \ustickcool{\sB} \qw & \multimeasureD{1}{d} \\
      & \ustickcool{\sD} \qw & \pureghost{d} \qw
    }
  \end{aligned}\ ,\quad
  \begin{aligned}
    \Qcircuit @C=1em @R=.7em @! R { 
      & \ustickcool{\sA} \qw & \qw & \multimeasureD{3}{f} \\
      & \ustickcool{\sB} \qw & \multimeasureD{1}{g} \\
      & \ustickcool{\sC} \qw & \pureghost{g} \qw\\
      & \ustickcool{\sD} \qw & \qw & \pureghost{f} \qw
    }
  \end{aligned}\ ,
\end{equation}
where $\Effect{a}$, $\Effect{b}$, $\Effect{c}$, $\Effect{d}$, or $\Effect{f}$ can also be local, \eg
\begin{equation}
  \begin{aligned}
    \Qcircuit @C=1em @R=.7em @! R {
      & \ustickcool{\sA} \qw & \multimeasureD{1}{a} \\
      & \ustickcool{\sB} \qw & \pureghost{a} \qw
    }
  \end{aligned}\ = \ 
  \begin{aligned}
    \Qcircuit @C=1em @R=.7em @! R {
      & \ustickcool{\sA} \qw & \measureD{a_1} \\
      & \ustickcool{\sB} \qw & \measureD{a_2}
    }
  \end{aligned}\ .
\end{equation}

A remarkable feature of maximally bilocal-tomographic theories is given by the following theorem, which
reduces the number of the above classes.

\begin{theorem}\label{th:jellyfish}
  Let $\Theta$ be a maximally bilocal-tomographic theory. Then, for any four-partite system $\sA\sB\sC\sD$ the
  following classes of effects is sufficient in order to discriminate two different states $\rho$ and $\sigma$
  \begin{equation*}
    \begin{matrix}
      \begin{aligned}
        \Qcircuit @C=1em @R=.7em @! R {
          & \ustickcool{\sA} \qw & \qw & \multimeasureD{3}{f} \\
          & \ustickcool{\sB} \qw & \measureD{g} \\
          & \ustickcool{\sC} \qw & \measureD{h} \qw\\
          & \ustickcool{\sD} \qw & \qw & \pureghost{f} \qw }
      \end{aligned}\ \quad &
      \begin{aligned}
        \Qcircuit @C=1em @R=.7em @! R {
          & \ustickcool{\sA} \qw & \measureD{i} \\
          & \ustickcool{\sB} \qw & \multimeasureD{1}{j} \\
          & \ustickcool{\sC} \qw & \pureghost{j} \qw\\
          & \ustickcool{\sD} \qw & \measureD{k} \qw }
      \end{aligned}\ \quad &
      \begin{aligned}
        \Qcircuit @C=1em @R=.7em @! R {
          & \ustickcool{\sA} \qw & \qw & \multimeasureD{2}{l} \\
          & \ustickcool{\sB} \qw & \measureD{m} \\
          & \ustickcool{\sC} \qw & \qw & \pureghost{l} \qw \\
          & \ustickcool{\sD} \qw & \measureD{n} \qw}
      \end{aligned}\ \quad &
      \begin{aligned}
        \Qcircuit @C=1em @R=.7em @! R {
          & \ustickcool{\sA} \qw & \measureD{o} \\
          & \ustickcool{\sB} \qw & \qw & \multimeasureD{2}{p} \\
          & \ustickcool{\sC} \qw & \measureD{q} \qw\\
          & \ustickcool{\sD} \qw & \qw & \pureghost{p} \qw }
      \end{aligned}\ \quad & 
      \begin{aligned}
        \Qcircuit @C=1em @R=.7em @! R {
          & \ustickcool{\sA} \qw & \multimeasureD{1}{r} \\
          & \ustickcool{\sB} \qw & \pureghost{r} \qw \\
          & \ustickcool{\sC} \qw & \multimeasureD{1}{s} \\
          & \ustickcool{\sD} \qw & \pureghost{s} \qw }
      \end{aligned} \\\\
      (i) \quad & (ii) \quad & (iii) \quad & (iv) \quad & (v) \quad 
    \end{matrix}.
  \end{equation*}
\end{theorem}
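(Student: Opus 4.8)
The plan is to turn the discrimination statement into a spanning statement. By bilocal tomography any two distinct states of \( \sA\sB\sC\sD \) are separated by some \( 2 \)-local effect, so the \( 2 \)-local effects span \( \SetEffects_\Reals(\sA\sB\sC\sD) \); hence it is enough to prove that every \( 2 \)-local effect is a linear combination of effects of the five classes \( (i) \)--\( (v) \). Indeed, if \( \State{\rho}\neq\State{\sigma} \) are told apart by a \( 2 \)-local effect and that effect is such a combination, then by linearity at least one effect of one class already separates them.

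First I would sort the \( 2 \)-local effects according to their tensor structure. Up to linear combinations, every \( 2 \)-local effect on \( \sA\sB\sC\sD \) lies in one of the sectors indexed by the partitions of \( \{\sA,\sB,\sC,\sD\} \) into blocks of size at most two: the fully product sector, the six single--non-local-pair sectors \( \SetEffects_\Reals^{\textrm{NL}}(\sX\sY)\otimes\SetEffects_\Reals(\System{Z})\otimes\SetEffects_\Reals(\System{W}) \), and the three double-pair sectors \( \SetEffects_\Reals^{\textrm{NL}}(\sX\sY)\otimes\SetEffects_\Reals^{\textrm{NL}}(\System{Z}\System{W}) \). Letting the bipartite legs \( f,j,l,p,r,s \) in \( (i) \)--\( (v) \) range over all bipartite effects (product or non-local) and the single-system legs over all effects, the five classes manifestly generate the product sector, all six single-pair sectors (\( (i)\!\to\!\sA\sD \), \( (ii)\!\to\!\sB\sC \), \( (iii)\!\to\!\sA\sC \), \( (iv)\!\to\!\sB\sD \), \( (v)\!\to\!\sA\sB \) and \( \sC\sD \)), and, from class \( (v) \), the double-pair sector \( \SetEffects_\Reals^{\textrm{NL}}(\sA\sB)\otimes\SetEffects_\Reals^{\textrm{NL}}(\sC\sD) \). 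Only the two double-pair sectors \( \SetEffects_\Reals^{\textrm{NL}}(\sA\sC)\otimes\SetEffects_\Reals^{\textrm{NL}}(\sB\sD) \) and \( \SetEffects_\Reals^{\textrm{NL}}(\sA\sD)\otimes\SetEffects_\Reals^{\textrm{NL}}(\sB\sC) \) are not visibly present.

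The crux is to show that these two leftover sectors contribute no new directions, \ie that the three double-pair sectors coincide as subspaces, so that the single copy furnished by class \( (v) \) already exhausts them. The quickest route uses the \( \mathbb Z_2 \)-grading carried by a maximally bilocal-tomographic theory in its minimal-superselection realization (Proposition \ref{prop:minimal-maximal}): each \( \SetEffects_\Reals(\sX) \) splits into an even part \( E_\sX \) (the local effects) and an odd part \( O_\sX \), and the non-local bipartite effects are exactly the odd--odd ones, \( \SetEffects_\Reals^{\textrm{NL}}(\sX\sY)=O_\sX\otimes O_\sY \). Consequently every double-pair sector equals the single all-odd sector \( O_\sA\otimes O_\sB\otimes O_\sC\otimes O_\sD \): for \( u\in\SetEffects_\Reals^{\textrm{NL}}(\sA\sC) \) and \( v\in\SetEffects_\Reals^{\textrm{NL}}(\sB\sD) \), expanding \( u=\sum_k\alpha_k\otimes\gamma_k \) and \( v=\sum_l\beta_l\otimes\delta_l \) into odd legs and using that parallel composition on distinct systems commutes (Eq.~\eqref{eq:local-transformations-commute}) rewrites \( u\otimes v=\sum_{k,l}\alpha_k\otimes\beta_l\otimes\gamma_k\otimes\delta_l \), which lies in \( O_\sA\otimes O_\sB\otimes O_\sC\otimes O_\sD=\SetEffects_\Reals^{\textrm{NL}}(\sA\sB)\otimes\SetEffects_\Reals^{\textrm{NL}}(\sC\sD) \); the sector \( (\sA\sD)(\sB\sC) \) is handled identically. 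As a consistency check I would match dimensions: the product and six single-pair sectors contribute \( D_\sA D_\sB D_\sC D_\sD+\sum_{\text{pairs}}\tilde D_{\sX\sY}D_{\System{Z}}D_{\System{W}} \), and saturating the bound \eqref{eq:bilocal-upper} forces the entire double-pair part of \( D_{\sA\sB\sC\sD} \) to equal the single term \( \tilde D_{\sA\sB}\tilde D_{\sC\sD} \), in agreement with the collapse just described.

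Assembling the pieces, the five classes span every pairing sector---the three double-pairs having fused into the one realised by class \( (v) \)---so their linear span is all of \( \SetEffects_\Reals(\sA\sB\sC\sD) \) and they separate all states, which is the claim. I expect essentially all of the difficulty to sit in the collapse of the double-pair sectors; the reduction to spanning and the sorting into sectors are bookkeeping, and the only quantitative inputs required are the maximally bilocal value of \( D_{\sA\sB\sC\sD} \) together with the identification of the non-local bipartite effects as the odd--odd ones.
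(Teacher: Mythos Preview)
Your argument hinges on a $\mathbb{Z}_2$-grading with $\SetEffects_\Reals^{\textrm{NL}}(\sX\sY)=O_\sX\otimes O_\sY$, and you cite Proposition~\ref{prop:minimal-maximal} for it. That proposition, however, runs only one way: a minimal superselection of a local-tomographic theory is maximally bilocal-tomographic, not conversely. Maximal bilocal tomography is a purely dimensional condition on $D_{\sA\sB\sC}$; it supplies no ambient local-tomographic theory, no ``odd'' subspaces $O_\sX$ (which in your picture are not even subspaces of $\SetEffects_\Reals(\sX)$ in the bilocal theory---they live in the would-be ambient theory), and no identification of $\SetEffects_\Reals^{\textrm{NL}}(\sX\sY)$ as a tensor product of per-system pieces. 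From the maximal-bilocal relation one can indeed extract the numerical identity $\tilde D_{\sA\sB}\tilde D_{\sC\sD}=\tilde D_{\sA\sC}\tilde D_{\sB\sD}=\tilde D_{\sA\sD}\tilde D_{\sB\sC}$ (compare the bracketings $(\sA\sB)\sC\sD$ and $\sA(\sB\sC)\sD$), but equal dimensions do not force the three double-pair sectors to coincide as subspaces of $\SetEffects_\Reals(\sA\sB\sC\sD)$, which is what your collapse step needs. Your argument is sound if the hypothesis is strengthened to ``$\Theta$ is a minimal superselection of a local-tomographic theory''---this covers the applications to \Fqt and \Rqt---but it does not establish the theorem as stated.

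The paper avoids the grading entirely and argues by dimension count: it writes the span of the five classes as $\sum_i\dim V_i-4\,D_\sA D_\sB D_\sC D_\sD$ (five subspaces sharing the common local sector), and then computes $D_{\sA\sB\sC\sD}$ directly from the defining relation of maximal bilocality applied to the bracketing $(\sA\sB)\sC\sD$, obtaining exactly the same number. Your ``consistency check'' is essentially this computation; promoted from sanity check to main argument, it does the job without invoking any grading.
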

\begin{proof}
  First, notice that every class (i), (ii), (iii), (iv), (v) spans a linear space of effects of dimension 
  \begin{equation*}
    \begin{matrix}
      D_{\sA\sD} D_\sB D_\sC    \qquad    & 
      D_{\sA} D_{\sB\sC} D_\sD  \qquad    &
      D_{\sA\sC} D_\sB D_\sD    \qquad    &
      D_{\sA} D_{\sB\sD} D_\sC  \qquad    & 
      D_{\sA\sB} D_{\sC\sD}               \\
      (i)   \qquad  &  
      (ii)  \qquad  & 
      (iii) \qquad  & 
      (iv)  \qquad  & 
      (v) 
  \end{matrix}.
  \end{equation*}
  All such linear spaces have a common linear subspace identified by the local effects belonging to the class
  \( \Effect{a}_\sA \otimes \Effect{b}_\sB \otimes \Effect{c}_\sC \otimes \Effect{d}_\sD \). Having this
  subspace dimension \( D_\sA D_\sB D_\sC D_\sD \), we have that the span of all the classes (i)--(v) taken
  together is 
  \begin{equation}\label{eq:classes-dimensions}
      D_{\sA\sD} D_\sB D_\sC    + D_{\sA} D_{\sB\sC} D_\sD   + D_{\sA\sC} D_\sB D_\sD     + D_{\sA} D_{\sB\sD}
        D_\sC   + D_{\sA\sB} D_{\sC\sD} - 4 D_\sA D_\sB D_\sC D_\sD.
  \end{equation}
  We recall here the definition of maximally bilocal-tomographic theory, given in terms of the following
  dimensional relation
  \begin{equation}\label{eq:maxbil}
    D_{\sA\sB\sC} = D_\sA D_\sB D_\sC + D_\sA \tilde D_{\sB\sC} + D_\sB \tilde D_{\sA\sC} +D_\sC \tilde
      D_{\sA\sB}.
  \end{equation}
  Let consider the tripartition $(\sA\sB)\sC\sD$.  Applying the property of maximal bilocal-tomography
  of Eq.~\eqref{eq:maxbil}, we have
  \begin{equation*}
    D_{\sA\sB\sC\sD} = D_{(\sA\sB)\sC\sD} = D_{\sA\sB} D_\sC D_\sD + D_{\sA\sB} \tilde D_{\sC\sD} + D_\sC
      \tilde D_{(\sA\sB)\sD} + D_\sD \tilde D_{(\sA\sB)\sC}.
  \end{equation*}
  By definition, we have $D_{\sA\sB}=D_\sA D_\sB+\tilde D_{\sA\sB}$, and by Eq.~\eqref{eq:maxbil}
  \begin{equation*}
    \tilde D_{(\sA\sB)\sC}=D_{\sA\sB\sC}-D_{\sA\sB}D_\sC=D_\sA \tilde D_{\sB\sC} + D_\sB \tilde D_{\sA\sC}.
  \end{equation*}
  Thus, we conclude
  \begin{multline*}
    D_{\sA\sB\sC\sD} = D_{\sA\sB}(D_\sC D_\sD + \tilde D_{\sC\sD}) + D_\sC \tilde D_{(\sA\sB)\sD} + D_\sD
      \tilde D_{(\sA\sB)\sC} = D_{\sA\sB} D_{\sC\sD} + \\ 
    + D_\sA D_{\sB\sD} D_\sC + D_{\sA\sD}D_\sB D_\sC +D_\sA D_{\sB\sC}D_\sD + D_{\sA\sC}D_\sB D_\sD-4D_\sA
      D_\sB D_\sC D_\sD,
  \end{multline*}
  namely the dimension given by Eq.~\eqref{eq:classes-dimensions}.
\end{proof}
As we will discuss later, the last theorem has important consequences on the notion of entanglement for
Fermionic computation.

\subsection{Quantum theory as an operational probabilistic theory}

It has been shown recently in Ref.~\citen{quit-derivation} that \Qt (in finite dimension) can be regarded as
an \Opt satisfying six properties: the already mentioned causality and local tomography, along with
\emph{perfect distinguishability}, \emph{pure conditioning}, \emph{ideal compression}, and
\emph{purification}.  Thus, all the operational notions introduced in \S\ref{sec:opt}, can be specified in the
case of \Qt. In details, a quantum system $\sA$ is specified by a Hilbert space \( \Hilb{H}_\sA \) with \(
\dim\Hilb{H}_\sA = d_\sA < +\infty \); so \( \Hilb{H}_\sA = \Complexes^{d_\sA} \). The deterministic states
(usually called normalized states) of the system \( \sA \) are the positive semidefinite operators over \(
\Hilb{H}_\sA \) with trace $1$. On the other hand, the linear set of states \( \SetStates_\Reals(\sA) \) is
the whole space \( \Herm(\Hilb{H}_\sA) \) of Hermitian operators over \( \Hilb{H}_\sA \) with dimension \(
D_\sA = d_\sA^2 \).  A non-deterministic preparation test \( \PreparationTest{\rho} = \{ \rho_i \}_{i\in\eta}
\) is a collection of deterministic states \( \{ \tilde\rho_i \} \) along with a collection of probabilities
\( \{ p_i \}_{i\in\eta} \) such that \( \rho_i = p_i \tilde\rho_i \) and \( \sum_{i\in\eta} \Tr[\rho_i] = 1
\).  A deterministic state of \( \sA \) is a rank one projector \( \KetBra{\varphi} \) if it is pure, while it
is a full rank density matrix when it is completely mixed (\eg $I_{\sA}/d_\sA$ with $I_\sA$ the identity
operator on \( \Hilb{H}_\sA \)).  Accordingly the whole set of states \( \SetStates(\sA) \) of system $\sA$ is
the set of all unnormalized density matrices $\rho$, namely \( \rho \geq 0 \), \( \Tr[\rho] \leq 1 \).

Since the effects on \( \sA \) are linear functionals over the set of states we have that the linear space of
effects \( \SetEffects_\Reals(\sA) \) is the space \( \Herm(\Hilb{H}_\sA) \) of Hermitian operators over \(
\Hilb{H}_\sA \). The actual set of effects \( \SetEffects(\sA) \) is made of the positive semidefinite
operators bounded from above by the identity, namely \( \SetEffects(\sA) = \{ P\in\Herm(\Hilb{H}_\sA) \mid P
\geq 0, P \leq I_\sA \} \). An observation test \( \ObservationTest{P} \) is given by a Positive Operator
Valued Measure (\Povm), namely a collection of effects \( \{ P_i \}_{i\in\eta} \) such that \( \sum_{i\in\eta}
P_i = I_{\sA} \).  Again, an atomic effect is simply a rank-one projector.

The probability resulting from the pairing between a state \( \State{\rho} \) and an effect \( \Effect{P} \)
of the system \( \sA \) is given in \Qt\ by the \emph{Born rule}, \ie \( \RBraKet{P}{\rho} \equiv \Tr[\rho
P^\dagger] \).

A transformation \( \Transformation{C} \) between the systems \( \sA \) and \( \sB \) is given by a
\emph{quantum operation}, namely a completely positive trace non-increasing linear map from  \(
\Herm(\Hilb{H}_\sA) \) to \( \Herm(\Hilb{H}_\sB) \). Notice that a quantum operation \( \Transformation{C} \)
always admit the \emph{Kraus decomposition} \( \Transformation{C}(\cdot) = \sum_{i\chi} C_i \cdot C_i^\dagger
\) for suitable bounded operators \( C_i \).  A transformation test \( \Test{C} \subseteq \SetTransf(\sA,\sB)
\) is a collection of quantum operations \( \{ \Transformation{C}_i \}_{i\in\eta} \) such that \(
\sum_{i\in\eta} \Transformation{C}_i \) is a deterministic transformation, namely a \emph{quantum channel},
\ie a trace preserving completely positive map. A unitary transformation---\eg the Schr\"odinger
evolution---is a deterministic test made of a single quantum operation with a Kraus decomposition made of a
single Kraus operator.

\section{The Fermionic Quantum Theory}\label{s:fqt}

In this section we construct an \Opt whose systems are the composition of the so-called \emph{local Fermionic
modes}.  There is not a unique way for realizing a Fermionic \Opt. The one presented here, denoted
\emph{Fermionic Quantum Theory} (\Fqt), stems from simple assumptions on the states/effects of the Fermionic
systems and on the local nature of the Fermionic operations, and is the least constrained theory satisfying
these assumptions. The resulting \Fqt corresponds to a superselected version of the \Qt of qubits with the
superselection rule derived from the consistency of local Fermionic operations in an operational framework.  A
crucial assumption will be that of locality for the Fermionic theory, and it is related to considering the
operator $\F[i]$ as the Kraus operator of an atomic local transformation.

In order to proceed with the construction, first we have to introduce the concept of \emph{Fermionic algebra}.

\subsection{The Fermionic algebra}\label{s:Fermionic-algebra}

The algebra $\mathcal{F}(N)$ of an arbitrary number $N<\infty$ of \emph{local Fermionic modes} ({\Lfm}s) is
generated by Fermionic operators $\{\F[i],\F[i]^\dagger: i \in J_N \}$ with $J_N \coloneqq \{ 1,\ldots,N \}$,
satisfying the \emph{canonical anti-commutation relation} (\Car)
\begin{equation}\label{eq:ccr}
  \{\F[i],\F[i]^\dagger\}=\delta_{ij}I, \qquad \{\F[i],\F[j]\}=0, \qquad\qquad 1\leq i,j \leq N
\end{equation}
Due to the \Car, the positive operators $\F[i]^\dagger\F[i]$ have spectrum $\set S=\{0,1\}$. The operators
$\F[i]$ and $\F[i]^\dagger$ act respectively as \emph{lowering} and \emph{raising} operators for
$\F[i]^\dagger\F[i]$, namely if $\Ket{\Phi}$ is an eigenvector of $\F[i]^\dagger\F[i]$ with eigenvalue $1$
then $\F[i]\Ket{\Phi}$ is an eigenvector with eigenvalue $0$, and $\F[i]^\dag\Ket{\Phi}=0$, while if
$\Ket{\Phi}$ is an eigenvector with eigenvalue $0$ then $\F[i]\Ket{\Phi}=0$ and $\F[i]^\dagger\Ket{\Phi}$ is
an eigenvector with eigenvalue $1$.

The operators $\F[i]^\dagger\F[i]$ form a set of mutually commuting positive operators and we call
\emph{vacuum eigenvector}, denoted $\Ket{\Omega}$, a simultaneous eigenvector with eigenvalue $0$ for all $i$.
A vacuum eigenvector of the Fermionic algebra corresponds to all the {\Lfm}s unoccupied and it is annihilated
by the lowering operators:
\begin{equation*}
  \F[i] \Ket{\Omega} = 0 \qquad \forall i.
\end{equation*}
In general the vacuum $\Ket{\Omega}$ is not unique. However, we can always restrict to the unique case,
corresponding to having a trivial multiplicity, with a vacuum vector space where the field operators act
identically. From now on we will consider the vacuum as unique.

By raising $\Ket{\Omega}$ in all possible ways we get the $2^N$ orthonormal vectors forming the \emph{Fock
basis} in the occupation number representation
\begin{equation}\label{eq:fock}
  \Ket{s_1,\ldots, s_{N}}_\sF \coloneqq ( \F[1]^\dagger)^{s_1}\cdots (\F[N]^\dagger)^{s_N}\Ket{\Omega},
    \qquad s_i\in\{0,1\},
\end{equation}
with $s_i$ corresponding to the occupation number at the $i$th site, \ie the expectation value of the operator
$\F[i]^\dagger\F[i]$. We call \emph{total occupation number} of the vector \( \Ket{s_1,\ldots, s_{N}}_\sF \)
the sum \( \sum_i s_i \).  The linear span of these vectors corresponds to the \emph{anti-symmetric Fock
space} \( \Hilb{F}_N \) of dimension $2^N$.

\subsection{Assumptions}\label{sub:ass}

The assumptions are the following:
\begin{enumerate}[(i)]
  \setlength{\itemsep}{1pt} \setlength{\parskip}{2pt}
  \setlength{\parsep}{1pt} 

\item\label{a:causality} the \Fqt is causal;

\item\label{a:states} the states of $N$ {\Lfm}s are represented by density matrices on on the antisymmetric
  Fock space $\Hilb{F}_N$.

\item\label{a:maps} the transformations on $N$ {\Lfm}s are represented by linear Hermitian preserving maps;

\item\label{a:locphi} the map $\Transformation{X}_i$ with Kraus operators $X_i\coloneqq\F[i] + \F[i]^\dagger $
  is physical;

\item\label{a:locality} for a composite systems $\sA$ made of $N$ {\Lfm}s, transformations with Kraus
  operators in the algebra of field operators $\F[i]$, $\F[i]^\dag$ with $i\in\chi\subset J_N$ are local on
  the subsystem $\sB$ of the {\Lfm}s associated to $\chi$;

\item\label{a:extension} local transformations on a system retain the same Kraus representation when other
  systems are added or discarded;

\item\label{a:pairing} the pairing between states and effects is given by the Born rule \( \RBraKet{a}{\rho}
  \coloneqq \Tr[\rho a] \);

\item\label{a:deteff} on a single \Lfm the pairing with the deterministic effect is represented by
  $\RBraKet{e}{\rho} \coloneqq \Tr[\rho]$.

\end{enumerate}

Notice that since the projection on the vacuum eigenvector has field representation $\KetBra{\Omega} =
\prod_{i=1}^N \F[i] \F[i]^\dagger$, then any state can be written as
\begin{equation*}
  \rho\coloneqq\sum_j K_j \KetBra{\Omega} K_j^\dag= \sum_j K_j\left(\prod_{i=1}^N \F[i] \F[i]^\dagger\right)
    K_j^\dag
\end{equation*}
for some collection of operators
\begin{equation*}
  K_j\coloneqq\sum_{s^{(j)}}\alpha_{s^{(j)}}{\F[1]^\dagger}^{s^{(j)}_1}\cdots{\F[N]^\dagger}^{s^{(j)}_N},\qquad
    \alpha_{s^{(j)}}\in\Complexes.
\end{equation*}
Hence, a state $\rho$ of $N$ {\Lfm}s can be written equivalently as
\begin{equation}\label{eq:state-fields}
  \rho = \sum_{st} \rho_{st} \prod_{i=1}^N {\F[i]^\dagger}^{s_i} \F[i]\F[i]^\dagger \F[i]^{t_i}\qquad
    \rho_{st} \in \Complexes,
\end{equation}
where $s$ is a binary string $s_1 \ldots s_N$ (and similarly for $t$).

Moreover, from assumptions (\ref{a:locphi}), (\ref{a:pairing}), and (\ref{a:deteff}) the following proposition
holds.
\begin{proposition}\label{lem:detx}
  In a system $\sA$ made of $N$ {\Lfm}s for every $i$ the map $\Transformation{X}_i$ is deterministic and
  $\Transformation{X}_i^2=\Transformation{I}$.
\end{proposition}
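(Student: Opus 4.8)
The plan is to reduce both assertions to a single algebraic fact about the Kraus operator $X_i=\F[i]+\F[i]^\dagger$, namely that it is a Hermitian involution, and then to read off the two operational statements.

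First I would compute $X_i^2$ directly from the \Car of Eq.~\eqref{eq:ccr}. Expanding,
\[
  X_i^2 = \F[i]^2 + \F[i]\F[i]^\dagger + \F[i]^\dagger\F[i] + (\F[i]^\dagger)^2 .
\]
The relation $\{\F[i],\F[i]\}=0$ forces $\F[i]^2=0$, and taking adjoints $(\F[i]^\dagger)^2=0$; the relation $\{\F[i],\F[i]^\dagger\}=I$ collapses the two middle terms to $I$, so that $X_i^2=I$. I would also note that $X_i^\dagger=\F[i]^\dagger+\F[i]=X_i$, hence $X_i$ is Hermitian. By assumption (\ref{a:locphi}) the map $\Transformation{X}_i$ is physical and acts through the single Kraus operator $X_i$, \ie $\Transformation{X}_i(\rho)=X_i\rho X_i^\dagger$.

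For the involution property I would simply compose $\Transformation{X}_i$ with itself and use $X_i^2=(X_i^\dagger)^2=I$:
\[
  \Transformation{X}_i^2(\rho)=X_i^2\,\rho\,(X_i^\dagger)^2=\rho ,
\]
for every $\rho$, which is $\Transformation{X}_i^2=\Transformation{I}$. For determinism I would show that $\Transformation{X}_i$ preserves the deterministic effect. Since the \Fqt is causal (assumption (\ref{a:causality})) the deterministic effect $\Effect{e}$ of the $N$-\Lfm system is unique; by the Born-rule pairing (\ref{a:pairing}) together with the single-mode normalization (\ref{a:deteff}) it is represented by the trace, $\RBraKet{e}{\rho}=\Tr[\rho]$. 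Cyclicity of the trace and $X_i^\dagger X_i=X_i^2=I$ then give
\[
  \RBraKet{e}{\Transformation{X}_i(\rho)}=\Tr[X_i\rho X_i^\dagger]=\Tr[X_i^\dagger X_i\rho]=\Tr[\rho]=\RBraKet{e}{\rho}
\]
for all states $\rho$, so the physical map $\Transformation{X}_i$ preserves the deterministic effect and is therefore a channel, \ie deterministic.

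The only delicate point is the identification of the deterministic effect of the composite $N$-\Lfm system with the trace: assumption (\ref{a:deteff}) is stated only for a single mode, so rather than invoking a tensor-product structure---which the Fermionic composition rule does not straightforwardly supply---I would rely on causality (uniqueness of the deterministic effect) together with the Born rule to promote the single-mode normalization to the full system. Once this is in place, everything else follows immediately from the \Car, so I do not anticipate any further obstacle.
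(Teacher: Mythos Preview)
Your argument is correct and essentially identical to the paper's: both reduce everything to the algebraic identity $X_i^\dagger X_i=X_i^2=I$ obtained from the \Car, then read off $\Transformation{X}_i^2=\Transformation{I}$ and trace preservation via cyclicity. If anything, you are slightly more careful than the paper in flagging the passage from the single-mode normalization (\ref{a:deteff}) to the $N$-mode deterministic effect, which the paper uses without comment.
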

\begin{proof}
  First notice that from the \Car relations we have $X_i^\dag X_i=X_i^2=(\F[i]^\dag\F[i]+\F[i]\F[i]^\dag)=I$,
  and thus $\Transformation{X}_i^2=\Transformation{I}$.  Moreover, we have
  $\Tr[\Transformation{X}_i(\rho)]=\Tr[X_i\rho X_i^\dag]=\Tr[\rho X_i^\dag X_i]=\Tr[\rho]=1$.
\end{proof}

\subsection{Discarding of a subsystem}

We derive now the simple rule for discarding a subsystem in the \Fqt.  First we need two lemmas that can be
derived by the assumptions.
\begin{lemma}
  Consider a system $\sA=\sB_1\sB_2$ made of $N=N_1+N_2$ {\Lfm}s, and let $\sB_1$ be made of $N_1$ {\Lfm}s
  corresponding to $\chi_1\subset J_N$. Then a transformation $\Transformation{T}\in\SetTransf(\sA)$ is local
  on $\sB_1$ if and only if it can be expressed in terms of Kraus operators belonging to the algebra generated
  by field operators $\F[i]$ and $\F[i]^\dag$ for $i\in\chi_1$.
\end{lemma}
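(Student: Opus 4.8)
The plan is to read the statement as two separate implications and to dispatch them with the two locality assumptions, (\ref{a:locality}) and (\ref{a:extension}), together with the structural fact that the Fermionic algebra on $N_1$ modes exhausts the operator algebra of its Fock space. The backward implication (algebraic form $\Rightarrow$ local) is essentially a restatement of assumption~(\ref{a:locality}); the forward implication (local $\Rightarrow$ algebraic form) is the substantive part, which I would obtain by unfolding the operational meaning of ``local on $\sB_1$'' and then invoking~(\ref{a:extension}).

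For the backward direction, suppose $\Transformation{T}\in\SetTransf(\sA)$ admits a representation whose Kraus operators lie in the subalgebra of $\mathcal{F}(N)$ generated by $\{\F[i],\F[i]^\dagger:i\in\chi_1\}$. Assumption~(\ref{a:locality}), applied with $\chi=\chi_1$, states precisely that such a transformation is local on the subsystem $\sB_1$ associated to $\chi_1$, so there is nothing further to prove here.

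For the forward direction, being local on $\sB_1$ means (in the operational sense of the framework) that $\Transformation{T}$ is a parallel composition $\Transformation{T}=\Transformation{T}'\otimes\Transformation{I}_{\sB_2}$ of a transformation $\Transformation{T}'\in\SetTransf(\sB_1)$ with the identity on $\sB_2$. Since $\sB_1$ carries $N_1$ {\Lfm}s, assumptions~(\ref{a:states}) and~(\ref{a:maps}) make $\Transformation{T}'$ a Hermitian-preserving linear map on $\Herm(\Hilb{F}_{N_1})$; its Choi operator is Hermitian, so diagonalising it yields a representation $\Transformation{T}'(\rho)=\sum_k c_k\,T_k\rho\,T_k^\dagger$ with real signs $c_k\in\{+1,-1\}$ and with the $T_k$ arbitrary operators on $\Hilb{F}_{N_1}$. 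Because $\Hilb{F}_{N_1}$ has dimension $2^{N_1}$ and the {\Car} forces $\mathcal{F}(N_1)$ to coincide with the full matrix algebra on it, every $T_k$ lies in $\mathcal{F}(N_1)$, hence is a polynomial in $\F[i],\F[i]^\dagger$ with $i\in\chi_1$. Identifying these generators with the corresponding ones of $\mathcal{F}(N)$ is a $*$-isomorphism of $\mathcal{F}(N_1)$ onto the subalgebra generated by $\{\F[i],\F[i]^\dagger:i\in\chi_1\}$, since the {\Car} relations among them are the same in both algebras. Finally, assumption~(\ref{a:extension}) guarantees that adjoining $\sB_2$ leaves the Kraus representation of $\Transformation{T}'$ unchanged, so $\Transformation{T}=\Transformation{T}'\otimes\Transformation{I}_{\sB_2}$ is represented by the very same operators $T_k$, now regarded inside $\mathcal{F}(N)$; this is exactly the claimed algebraic form.

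The step that deserves the most care, and which I expect to be the main obstacle, is the passage through assumption~(\ref{a:extension}). The naive expectation---that ``local on $\sB_1$'' should mean operators commuting with the field algebra of the complementary subsystem $\sB_2$ (associated to $\chi_2:=J_N\setminus\chi_1$)---is \emph{false} in the Fermionic setting: the odd elements of the two subalgebras anticommute rather than commute, so the operator commutant of the $\sB_2$ field algebra is only the even part of the $\chi_1$ algebra, not all of it. The reason the full subalgebra nevertheless corresponds to $\sB_1$-local transformations is that locality is a property of the quadratic maps $K\cdot K^\dagger$: a sign $-1$ picked up in $T_k S_l=-S_l T_k$ is matched by the conjugate factor $S_l^\dagger T_k^\dagger=-T_k^\dagger S_l^\dagger$, and the two signs cancel. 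Assumption~(\ref{a:extension}) encodes precisely this quadratic compatibility and lets me bypass a direct commutant computation; in writing the final proof I would, however, verify explicitly that $\Transformation{I}_{\sB_2}$ contributes the trivial Kraus operator, so that the composite Kraus operators are exactly the $T_k\in\mathcal{F}(N_1)$ with no spurious factors from $\sB_2$ surviving.
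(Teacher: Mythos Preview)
Your proof is correct and follows essentially the same route as the paper's own proof: assumption~(\ref{a:locality}) for the backward implication, and assumptions~(\ref{a:maps}) and~(\ref{a:extension}) for the forward one. The paper's argument is terser---it simply asserts that a transformation on $\sB_1$ has Kraus operators in the field algebra of $J_{N_1}$ by~(\ref{a:maps})---while you spell out the intermediate step (Hermitian-preserving $\Rightarrow$ diagonalise the Choi operator $\Rightarrow$ Kraus form, and the \Car algebra on $N_1$ modes is the full matrix algebra on $\Hilb{F}_{N_1}$). Your closing remarks on the commutant subtlety are a useful clarification but are not needed for the proof itself.
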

\begin{proof}
  By assumption \eqref{a:maps} a transformation $\Transformation T$ on $\sB_1$ has Kraus operators in the
  algebra of fields $\F[i]$, $\F[i]^\dag$ with $i\in J_{N_1}$.  By assumption \eqref{a:extension}, if we now
  consider the composite system $\sA=\sB_1\sB_2$ the local transformations on $\sB_1$ have Kraus operators in
  the algebra generated by the field operators $\F[i]$, $\F[i]^\dag$ with $i\in J_{N_1}$. On the other hand,
  by assumption \eqref{a:locality}, also the converse is true.
\end{proof}

\begin{lemma}
  The parallel composition of the effect $\Effect{a} \in \SetEffects(\sB_1)$ and the deterministic effect
  $\Effect{e} \in \SetEffects(\sB_2)$ is represented by 
  \begin{equation*}
    \begin{aligned}
      \Qcircuit @C=1em @R=.7em @! R { 
        \multiprepareC{1}{\rho} & \ustickcool{\sB_1} \qw & \measureD{a} \\
        \pureghost{\rho} & \ustickcool{\sB_2} \qw & \measureD{e}
      }
    \end{aligned}\ 
    = \Tr[\rho a], \qquad \forall \State{\rho}\in\SetStates(\sB_2\sB_2).
  \end{equation*}
\end{lemma}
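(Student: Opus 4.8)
The plan is to reduce the claim to the Born rule of assumption~(\ref{a:pairing}) by identifying the single operator on $\Hilb{F}_N$ that represents the composite effect $\Effect{a}\otimes\Effect{e}$, and then showing that this operator is $a$ itself, viewed in the full field algebra. Two preliminary identifications are needed: the operator representing the deterministic effect $\Effect{e}$ on $\sB_2$, and the operator representing the parallel composition $\Effect{a}\otimes\Effect{e}$.

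First I would determine $\Effect{e}$ on $\sB_2$. Assumption~(\ref{a:deteff}) fixes the deterministic effect of a single \Lfm to be the trace, i.e.\ the identity operator. Since the parallel composition of deterministic effects is again deterministic, and since causality (assumption~(\ref{a:causality})) guarantees uniqueness of the deterministic effect, the effect $\Effect{e}$ on the $N_2$ modes of $\sB_2$ must be the parallel composition of the single-mode ones. As the Fock space $\Hilb{F}_N$ is, as a linear space, the tensor product of the single-mode spaces and the trace factorizes accordingly, this parallel composition is represented by the identity operator on $\Hilb{F}_{N_2}$, so that $\RBraKet{e}{\tau}=\Tr[\tau]$ for every state $\tau$ of $\sB_2$.

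Next I would identify the operator of $\Effect{a}\otimes\Effect{e}$. By the locality assumption~(\ref{a:locality}) and the previous lemma, the effect $\Effect{a}$ has a field representation $a$ lying in the algebra generated by $\F[i],\F[i]^{\dagger}$ with $i\in\chi_1$, while $\Effect{e}$ is represented by the identity on the complementary modes. The claim then follows once one shows that the formal parallel composition $\Effect{a}\otimes\Effect{e}$ is represented by the operator product $a\cdot I=a$ on $\Hilb{F}_N$; applying the Born rule~(\ref{a:pairing}) on the joint system immediately yields $\RBraKet{a\otimes e}{\rho}=\Tr[\rho\,a]$.

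The main obstacle is precisely this last identification, because---as stressed in Section~\ref{sec:opt}---the symbol $\otimes$ is only a formal label for parallel composition and is \emph{not} the ordinary tensor product of operators. What must be checked is that ``filling in'' the identity on $\sB_2$ introduces no Fermionic sign, so that the joint effect is genuinely the operator $a$ extended trivially to $\chi_2$. This is where the superselection structure does the work: effects are parity-even operators, hence an even $a$ supported on $\chi_1$ commutes with all operators on $\chi_2$ (consistently with the commutation of local operations in Eq.~\eqref{eq:local-transformations-commute}), and the Jordan--Wigner string ambiguity that otherwise plagues the Fermionic partial trace drops out. Establishing this sign-freedom rigorously---rather than the subsequent routine use of the Born rule---is the delicate step of the argument.
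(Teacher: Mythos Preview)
Your proposal has a genuine circularity. You invoke that ``effects are parity-even operators'' to neutralize Fermionic signs, but in the paper's logical order this lemma \emph{precedes} the derivation of the parity superselection rule (Section~\ref{s:parity-superselection}): the lemma is used to set up the partial-trace formula~\eqref{eq:discarding}, which in turn feeds into Theorem~\ref{t:fqt-kraus}, and only afterwards is Corollary~\ref{cor:effects} (evenness of effects) proved. So at this stage you are not entitled to assume that $a$ is even.

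More importantly, even granting evenness would not close the gap you yourself flag. Knowing that an even $a$ supported on $\chi_1$ commutes with operators on $\chi_2$ is a statement inside the field algebra; it does not by itself tell you that the \emph{operational} parallel composition $\Effect{a}\otimes\Effect{e}$ is represented by the operator $a$ on $\Hilb{F}_N$. The bridge between the abstract $\otimes$ and a concrete operator on the larger system is precisely assumption~(\ref{a:extension}), which you never use. The paper's proof exploits it via a causality detour: write $\Effect{a}=\Effect{e}\circ\Transformation{T}$ for some transformation $\Transformation{T}$ on $\sB_1$ (possible in any causal theory), so that on the composite $\Effect{a}\otimes\Effect{e}=(\Effect{e}\otimes\Effect{e})\circ(\Transformation{T}\otimes\Transformation{I})$. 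Assumption~(\ref{a:extension}) then guarantees that $\Transformation{T}\otimes\Transformation{I}$ has the \emph{same} Kraus operators $K_j$ on $\sB_1\sB_2$ as $\Transformation{T}$ had on $\sB_1$, and the deterministic effect on the composite is the full trace (your first paragraph is correct on this point). This yields $\sum_j s_j\Tr[K_j\rho K_j^\dagger]=\Tr[\rho\sum_j s_jK_j^\dagger K_j]=\Tr[\rho\,a]$ with no appeal to parity whatsoever.
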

\begin{proof}
  Since in a causal theory \( \Qcircuit @C=1em @R=.7em @!  R { & \measureD{a} } = \Qcircuit @C=1em @R=.7em @!
  R { & \gate{\Transformation{T}} & \measureD{e} } \) for some transformation \( \Transformation{T} \), we
  have
  \begin{equation}
    \begin{aligned}
      \Qcircuit @C=1em @R=.7em @! R { 
        \multiprepareC{1}{\rho} & \ustickcool{\sB_1} \qw & \measureD{a} \\
        \pureghost{\rho} & \ustickcool{\sB_2} \qw & \measureD{e}
      }
    \end{aligned}\ 
    =
    \begin{aligned}
      \Qcircuit @C=1em @R=.7em @! R { 
        \multiprepareC{1}{\rho} & \ustickcool{\sB_1} \qw & \gate{\Transformation{T}} & \ustickcool{\sB_1} \qw
          & \measureD{e} \\
        \pureghost{\rho} & \qw &\ustickcool{\sB_2} \qw & \qw  &\measureD{e}
      }
    \end{aligned}\ .
  \end{equation}
  $\Transformation{T}$, being local on the subsystem $\sB_1$, has Kraus form \( \Transformation{T}(\sigma)
  \coloneqq \sum_i s_i K_i \sigma K_i^\dag \) where $K_i$ is in the algebra of the field operators acting on
  $\sB_1$. By assumption \eqref{a:extension} the map retains the same Kraus expression when extended on a
  system $\sB_1\sB_2$, so
  \begin{equation}
    \begin{aligned}
      \Qcircuit @C=1em @R=.7em @! R { 
        \multiprepareC{1}{\rho} & \ustickcool{\sB_1} \qw & \measureD{a} \\
        \pureghost{\rho} & \ustickcool{\sB_2} \qw & \measureD{e}
      }
    \end{aligned}\ 
    = \sum_j s_j \Tr[ \rho K_j^\dag K_j ] = \Tr[\rho a].
  \end{equation}
\end{proof}

Consider now the system $\sA = \sA_1\ldots\sA_N$ made of $N$ {\Lfm}s, and the bipartition $ \sA = \sB_1 \sB_2
$, corresponding to the disjoint partition $\{\chi_1,\chi_2\}$ of $S$. Since by assumption \eqref{a:causality}
the \Fqt is causal, the marginal state $\sigma$ of system $\sB_1$ for an arbitrary state
$\rho\in\SetStates(\sA)$ is defined by the following implicit equation
\begin{equation*}
  \begin{aligned}
    \Qcircuit @C=1em @R=.7em @! R { 
      \multiprepareC{1}{\rho} & \ustickcool{\sB_1} \qw&\qw\\
      \pureghost{\rho} & \ustickcool{\sB_2} \qw & \measureD{e}
    }
  \end{aligned}
  =
  \begin{aligned}
    \Qcircuit @C=1em @R=.7em @! R { 
      \prepareC{\sigma} & \ustickcool{\sB_1} \qw&\qw
    }
  \end{aligned}
  \quad\Leftrightarrow\quad
  \begin{aligned}
    \Qcircuit @C=1em @R=.7em @! R { 
      \multiprepareC{1}{\rho} & \ustickcool{\sB_1} \qw & \measureD{a} \\
      \pureghost{\rho} & \ustickcool{\sB_2} \qw & \measureD{e}
    }
  \end{aligned}
  =
  \begin{aligned}
    \Qcircuit @C=1em @R=.7em @! R { 
      \prepareC{\sigma} & \ustickcool{\sB_1} \qw & \measureD{a}
    },
  \end{aligned}
\end{equation*}
for any effect $a\in\SetEffects(\sB_1)$ on the complementary system of $\sB_2$. In formula we write
\begin{equation}\label{eq:partial-trace}
  \Tr[ \sigma a ]: =\Tr[\rho a],\;\forall a\in\SetEffects(\sB_1).
\end{equation}
Let $ \State{\rho} \in \SetStates(\sB_1\sB_2)$ be as in Eq.~\eqref{eq:state-fields}, we can perform the
following swapping of the field operators 
\begin{equation*}
  \rho = \sum_{st} \rho_{st} \prod_{i=1}^N {\F[i]^\dagger}^{s_i} \F[i]\F[i]^\dagger \F[i]^{t_i} = 
  \sum_{st} (-1)^{f(s,t)} \rho_{st} \prod_{k\in\chi_2} ({ \F[k]^\dagger }^{ s_k } \F[k] \F[k]^\dagger \F[k]^{
    t_{k} } ) \prod_{i\in\chi_1} ( { \F[i]^\dagger }^{ s_i } \F[i] \F[i]^\dagger \F[i]^{t_{i}} ),
\end{equation*}
where $f(s,t)$ is the function evaluating the number of swaps needed to perform the reordering, which is given
by
\begin{equation*}
  f(s,t) \coloneqq \sum_{k\in\chi_2} (s_k \oplus t_k) 
  \ \ \Sigma_{\substack{i\in\chi_1\\i<k}} 
  (s_i
    \oplus t_i) .
\end{equation*}

The \Rhs of Eq.~\eqref{eq:partial-trace} then becomes
\begin{multline*}
  \sum_{st}\rho_{st}\Tr \left[\prod_{i=1}^{N} ( { \F[i]^\dagger }^{ s_i } \F[i] \F[i]^\dagger
     \F[i]^{ t_{i} } )\ a\right] = \\
   \sum_{st} (-1)^{f(s,t)} \rho_{st} \Tr \left[\ \prod_{i\in\chi_1} ( { \F[i]^\dagger }^{ s_i } \F[i]
     \F[i]^\dagger \F[i]^{t_{i}} )\ a\ \prod_{k\in\chi_2} ({ \F[k]^\dagger }^{ s_k } \F[k] \F[k]^\dagger
     \F[k]^{ t_{k} } ) \right] = \\
   \sum_{st} (-1)^{f(s,t)} \rho_{st} \Tr \left[\ \prod_{i\in\chi_1} ( { \F[i]^\dagger }^{ s_i } \F[i]
     \F[i]^\dagger \F[ i]^{ t_{i} })\ a\ \prod_{k\in\chi_2} \delta_{s_k,t_k}\ \right].
\end{multline*}
Since $f(s,t) = 0$ whenever for $k\in\chi_2$ it happens $s_k = t_k$, the previous equation shows that the
marginal state on subsystem $\sB_1$ of a state $\rho\in\SetStates(\sB_1\sB_2)$ is given by
\begin{equation}\label{eq:discarding}
  \sigma \coloneqq \Tr_{ \sB_2 } [ \rho ] = \Tr_{\sB_2} \left[\ \sum_{s,t} \rho_{st} \prod_{i=1}^N
    { \F[i]^\dagger }^{ s_{i} } \F[i] \F[i]^\dagger \F[i]^{ t_{i} }\ \right] = \!\!\!\!\! \sum_{
    \substack{ s,t \textrm{ with} \\ s_k=t_k,\ k\in\chi_2}} \!\!\!\!\! \rho_{st} 
  \prod_{i\in\chi_1} { \F[i]^\dagger }^{ s_{i} } \F[i] \F[i]^\dagger \F[i]^{ t_{i} },
\end{equation}
namely it is obtained by dropping all terms that contain an odd number of field operators in any of the
{\Lfm}s in $\sB_2$, while in the remaining terms one erases the field operators in $\sB_2$.

\subsection{Derivation of the parity superselection rule}\label{s:parity-superselection}

In the following we will show that the Wigner parity superselection rule
\cite{streater1964pct,weinberg1996quantum} can be derived operationally from Postulates \eqref{a:locphi} and
\eqref{a:locality}. 

\begin{theorem}\label{t:fqt-kraus}
  Every transformation between \( N \) {\Lfm}s is operationally equivalent to a map where each Kraus operator
  is a combinations of products of either odd or even numbers of field operators.
\end{theorem}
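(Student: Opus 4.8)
The plan is to exploit the $\mathbb{Z}_2$ parity grading of the Fermionic algebra. By assumption \eqref{a:maps} any transformation $\Transformation{C}$ between the $N$ {\Lfm}s admits a Hermitian-preserving Kraus form $\Transformation{C}(\rho)=\sum_j\eta_j K_j\rho K_j^\dag$, with $\eta_j=\pm1$ and each $K_j$ in the field algebra (which is the full operator algebra on $\Hilb{F}_N$). Each $K_j$ splits uniquely as $K_j=A_j+B_j$, where $A_j$ collects the terms with an even number of field operators and $B_j$ those with an odd number; introducing the parity operator $P\coloneqq\prod_{i=1}^N(\F[i]\F[i]^\dag-\F[i]^\dag\F[i])$, which satisfies $P=P^\dag=P^{-1}$ and $P\F[i]P=-\F[i]$, this is just the decomposition into $PA_jP=A_j$ and $PB_jP=-B_j$. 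I would then define the ``homogenized'' map $\Transformation{C}^\prime(\rho)\coloneqq\sum_j\eta_j(A_j\rho A_j^\dag+B_j\rho B_j^\dag)$, whose Kraus operators $\{A_j,B_j\}$ are by construction each a combination of products of either even or odd numbers of field operators. The whole statement then reduces to showing that the parity-changing cross term $\Transformation{D}(\rho)\coloneqq\Transformation{C}(\rho)-\Transformation{C}^\prime(\rho)=\sum_j\eta_j(A_j\rho B_j^\dag+B_j\rho A_j^\dag)$ is operationally irrelevant.

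To kill the cross term I would append a single ancillary \Lfm, labelled $0$, and use the map $\Transformation{X}_0$ with Kraus operator $X_0=\F[0]+\F[0]^\dag$, which by assumption \eqref{a:locphi} and Proposition~\ref{lem:detx} is a physical deterministic transformation with $X_0=X_0^\dag$ and $X_0^2=I$. Since $\Transformation{X}_0$ is supported on mode $0$ it is local on the ancilla by \eqref{a:locality}, while $\Transformation{C}$ is local on $\sA$; being supported on disjoint sets of modes, the two must commute in parallel composition by Eq.~\eqref{eq:local-transformations-commute}, and by \eqref{a:extension} $\Transformation{C}$ keeps the same Kraus operators on the enlarged system. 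Hence $\Transformation{X}_0\circ\Transformation{C}=\Transformation{C}\circ\Transformation{X}_0$ as transformations of $\sA$ together with mode $0$, where this identity is to be read operationally, i.e.\ as equality of the action on all states and of the resulting statistics.

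The computation is controlled by the \Car: each field operator in $X_0$ anticommutes with every $\F[i]$, $\F[i]^\dag$ for $i\neq0$, so commuting $X_0$ past a homogeneous operator of definite parity produces exactly $(-1)^{\mathrm{parity}}$, giving $X_0 K_j=(A_j-B_j)X_0$ and $K_j^\dag X_0=X_0(A_j-B_j)^\dag$. Writing $\tau\coloneqq X_0\sigma X_0$, the two orders of composition evaluate to $(\Transformation{X}_0\circ\Transformation{C})(\sigma)=\sum_j\eta_j(A_j-B_j)\tau(A_j-B_j)^\dag$ and $(\Transformation{C}\circ\Transformation{X}_0)(\sigma)=\sum_j\eta_j(A_j+B_j)\tau(A_j+B_j)^\dag$. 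Equating them and cancelling the diagonal pieces leaves $\sum_j\eta_j(A_j\tau B_j^\dag+B_j\tau A_j^\dag)=0$, that is $\Transformation{D}(\tau)=0$. Because $X_0$ is invertible, $\tau$ sweeps out the whole of $\Herm(\Hilb{F}_{N+1})$ as $\sigma$ does, so $\Transformation{D}$ vanishes on every state of $\sA$ and of any Fermionic extension; hence $\Transformation{C}$ and $\Transformation{C}^\prime$ are operationally indistinguishable, which is the claim.

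The delicate point — and where I would take most care — is the sign bookkeeping in the step $X_0 K_j=(A_j-B_j)X_0$: one must verify that the even/odd split turns the locality constraint \emph{precisely} into the vanishing of the parity-changing cross term, rather than some weaker condition mixing the two. A secondary subtlety is justifying that testing against the single odd channel $\Transformation{X}_0$ on one ancillary mode already suffices: since $\Transformation{D}$ sends parity-definite inputs to operators of opposite parity, I expect the family $\{\tau=X_0\sigma X_0\}$ to span enough of $\Herm(\Hilb{F}_{N+1})$ to force $\Transformation{D}\equiv0$, while appending further ancillae changes nothing because $A_j$ and $B_j$ remain supported on the original $N$ modes.
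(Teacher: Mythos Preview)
Your argument is correct and follows essentially the same route as the paper: decompose each Kraus operator into its even and odd parts, and use the operational commutation of $\Transformation{C}$ with the deterministic odd channel $\Transformation{X}$ on an ancillary mode (assumptions~\eqref{a:locphi}, \eqref{a:locality}, \eqref{a:extension} together with Eq.~\eqref{eq:local-transformations-commute}) to force the parity-changing cross term to vanish. The only cosmetic difference is that the paper keeps everything under a trace and uses cyclicity plus the substitution $b=X_j a X_j^\dag$ to let the effect range freely, whereas you push $X_0$ through the Kraus operators via $X_0K_j=(A_j-B_j)X_0$ and let the state $\tau=X_0\sigma X_0$ range freely; these are dual versions of the same computation. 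One small caution: your line ``$\Transformation{D}(\tau)=0$'' should be read, as you yourself note earlier, as an \emph{operational} equality (i.e.\ $\Tr[\Transformation{D}(\tau)\,b]=0$ for all admissible $b$), not as an operator identity --- at this stage of the derivation the set of effects has not yet been characterized, so one cannot conclude vanishing as a matrix. This does not affect the conclusion, since operational vanishing of $\Transformation{D}$ is precisely what ``operationally equivalent'' means.
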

\begin{proof}
  Let us take an arbitrary transformation \( \Transformation{T} \in \SetTransf_\Reals(\sA, \sB) \) with \( \sA
  \), \( \sB \) $N$-\Lfm systems.  Since by assumption \eqref{a:maps} $\Transformation{T}$ is hermitian
  preserving, it can be written as the difference between two CP maps, hence, for an arbitrary $\rho$,
  \( \Transformation{T}(\rho) = \sum_i s_i K_i \rho {K_i}^\dagger \), where \( K_i \) are Kraus operators, and
  \( s_i = \pm 1 \) for every $i$.  Every $K_i$ can be decomposed as \( K_i = E_i + O_i \) with $E_i, O_i \in
  \mathcal{L}(\Complexes^{2^N})$, and \( E_i \) and $O_i$ being the part of $K_i$ containing only
  superposition of an even and odd number of field operators, respectively.  Thus, we have \(
  \Transformation{T} = \sum_i s_i( E_i \cdot {E_i}^\dagger + O_i \cdot {O_i}^\dagger + E_i \cdot {O_i}^\dagger
  + O_i \cdot {E_i}^\dagger) \).  We want to show that \( \Transformation{T} \) is equivalent to the map \(
  \tilde{\Transformation{T}} \coloneqq \sum_i s_i ({E}_i \cdot {E_i}^\dagger + {O}_i \cdot {O_i}^\dagger) \),
  namely for every ancillary system \( \sC \) made of $M$ {\Lfm}s
  \begin{equation*}
    \begin{aligned}
      \Qcircuit @C=1em @R=.7em @! R {
        \multiprepareC{1}{\rho} & \qw \poloFantasmaCn{\sA} & \gate{\Transformation{T}} & \qw
          \poloFantasmaCn{\sB} 
          & \multimeasureD{1}{a} \\
        \pureghost{\rho} & \qw & \qw \poloFantasmaCn{\sC} & \qw & \pureghost{a} \qw
      }
    \end{aligned}
    =
    \begin{aligned}
      \Qcircuit @C=1em @R=.7em @! R {
        \multiprepareC{1}{\rho} & \qw \poloFantasmaCn{\sA} & \gate{\tilde{\Transformation{T}}} & \qw
          \poloFantasmaCn{\sB} 
          & \multimeasureD{1}{a} \\
        \pureghost{\rho} & \qw & \qw \poloFantasmaCn{\sC} & \qw & \pureghost{a} \qw
      }
    \end{aligned}\,, \quad
      \forall\State{\rho}\in\SetStates(\sA\sC),\Effect{a}\in\SetEffects(\sB\sC).
  \end{equation*}
  Using assumption \eqref{a:pairing} the previous relation is equivalent to 
  \begin{equation}\label{eq:proof}
    \sum_i s_i \Tr[\ (E_i \rho {O_i}^\dagger + O_i \rho {E_i}^\dagger)\ a^\dagger\ ] =0,
  \end{equation}
  for every \( \State{\rho}\in\SetStates(\sA\sC) \), \( \Effect{a}\in\SetEffects(\sB\sC) \) and every
  ancillary system \( \sC \).
  
  In order to prove Eq.~\eqref{eq:proof} we consider the physical map---by assumption \eqref{a:locphi}---\(
  \SetTransf(\sC,\sD) \ni \Transformation{X}_j\), $j$ denoting a Fermionic subsystem belonging to \( \sC \),
  and \( \sD \) a Fermionic system made of $M$ {\Lfm}s too. Being \( \Transformation{X}_j \) and \(
  \Transformation{T} \) two transformations acting on different subsystems, by
  Eq.~\eqref{eq:local-transformations-commute} their sequential composition shall commute; \ie for every
  ancillary system $\sE$, for every state \( \State{\sigma} \in \SetStates(\sA\sC\sE) \), and for every effect
  \( \Effect{b} \in \SetEffects(\sB\sD\sE) \)
  \begin{equation*}
    \begin{aligned}
      \Qcircuit @C=1em @R=.7em @! R {
        \multiprepareC{2}{\rho} & \qw \poloFantasmaCn{\sA} & \gate{\Transformation{T}} & \qw
          & \qw \poloFantasmaCn{\sB}  & \qw 
          & \multimeasureD{2}{b} \\
        \pureghost{\rho} & \qw & \qw \poloFantasmaCn{\sC} & \qw & \gate{\Transformation{X}_j} & \qw
          \poloFantasmaCn{\sD} & \pureghost{b} \qw \\
        \pureghost{\rho} & \qw & \qw & \qw \poloFantasmaCn{\sE} & \qw & \qw & \pureghost{b} \qw
      }
    \end{aligned}
    =
    \begin{aligned}
      \Qcircuit @C=1em @R=.7em @! R {
        \multiprepareC{2}{\rho} & \qw & \qw \poloFantasmaCn{\sA} & \qw & \gate{\Transformation{T}} & \qw
          \poloFantasmaCn{\sB} & \multimeasureD{2}{b} \\
        \pureghost{\rho} & \qw \poloFantasmaCn{\sC} & \gate{\Transformation{X}_j} & \qw
          & \qw \poloFantasmaCn{\sD}  & \qw & \pureghost{b} \qw \\
        \pureghost{\rho} & \qw & \qw & \qw \poloFantasmaCn{\sE} & \qw & \qw & \pureghost{b} \qw
      }
    \end{aligned}\,.
  \end{equation*}
  Consider the case where $\sD$ is the system made of $0$ {\Lfm}s, \ie the ancillary system is the trivial
  system $\sI$. Then we have by assumption \eqref{a:pairing} that a necessary condition for the commutation of
  the maps \( \Transformation{T} \) and \( \Transformation{X}_j \) is given by
  \begin{multline*}
    \sum_i s_i \Tr[(X_j E_i \rho {E_i}^\dagger X_j^\dagger + X_j O_i \rho {O_i}^\dagger X_j^\dagger + X_j E_i
      \rho {O_i}^\dagger X_j^\dagger + X_j O_i \rho {E_i}^\dagger X_j^\dagger)b ] = \\
    = \sum_i s_i \Tr[\ ( E_i X_j \rho X_j^\dagger {E_i}^\dagger + O_i X_j \rho X_j^\dagger {O_i}^\dagger + E_i
      X_j \rho X_j^\dagger {O_i}^\dagger + O_i X_j \rho X_j^\dagger {E_i}^\dagger )b ]\\
    \forall \State{\rho}\in\SetStates(\sA\sC\sE), \forall\Effect{b}\in\SetEffects(\sB\sD\sE).
  \end{multline*}
  Since the Kraus operators \( E_i \), \( O_i \) contain respectively an even and an odd number of field
  operators, the anti-commutation relations for the fields and the invariance of the trace under cyclic
  permutation give us
  \begin{equation*}
    \sum_i s_i \Tr[\ (E_i \rho {O_i}^\dagger + O_i \rho {E_i}^\dagger)X_j^\dagger bX_j ] = 0
  \end{equation*}
  If we now choose $b=X_j a X_j^\dag $ for an arbitrary $a\in\SetEffects(\sA\sC)$, by proposition
  \ref{lem:detx} we obtain
  \begin{equation}\label{eq:asddsa}
    \sum_i s_i \Tr[(E_i \rho {O_i}^\dagger + O_i \rho {E_i}^\dagger)a] =0 \qquad \forall
      \State{\rho}\in\SetStates(\sA\sC), \Effect{a}\in\SetEffects(\sB\sC),
  \end{equation}
  namely Eq.~\eqref{eq:proof}. We conclude therefore that the compatibility condition of commutation between
  local transformation implies the thesis.
\end{proof}

The previous theorem allows us to consider the transformations with each Kraus operator involving only an even
or an odd number of field operators as the representatives of the equivalence class they belong to. This fact
allows us to prove the following corollary.

\begin{corollary}\label{cor:effects}
  Effects of the \Fqt are positive operators made of products of an even number of field operators.
\end{corollary}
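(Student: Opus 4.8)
The plan is to treat an effect as a special transformation and apply Theorem~\ref{t:fqt-kraus}. First I would invoke causality (assumption~\eqref{a:causality}): since the deterministic effect is unique, every effect $\Effect{a}\in\SetEffects(\sA)$ on a system $\sA$ of $N$ {\Lfm}s can be realized as $\Effect{a}=\Effect{e}\circ\Transformation{T}$ for some $\Transformation{T}\in\SetTransf(\sA,\sA)$, exactly as in the discarding Lemma, where $\Effect{e}$ is the deterministic effect---by assumptions~\eqref{a:pairing} and~\eqref{a:deteff} (together with uniqueness of $\Effect{e}$ on composites) given by the trace, $\RBraKet{e}{\sigma}=\Tr[\sigma]$.

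Next I would apply Theorem~\ref{t:fqt-kraus} to $\Transformation{T}$: it is operationally equivalent to a map $\tilde{\Transformation{T}}(\cdot)=\sum_k s_k K_k\cdot K_k^\dagger$, $s_k=\pm1$, with each $K_k$ a superposition of products of either an even or an odd number of field operators. Using operational equivalence with trivial output and closing effect $\Effect{e}$ gives $\RBraKet{a}{\rho}=\Tr[\tilde{\Transformation{T}}(\rho)]=\Tr[\rho\,\sum_k s_k K_k^\dagger K_k]$ for all $\State{\rho}\in\SetStates(\sA)$. Since at this stage (before the state superselection is imposed) the states are all density matrices on $\Hilb{F}_N$ by assumption~\eqref{a:states}, they span the Hermitian operators and the effect operator is unique, so $a=\sum_k s_k K_k^\dagger K_k$.

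The parity bookkeeping then finishes the even-number claim: taking the adjoint of a product of field operators leaves the number of factors unchanged, so $K_k^\dagger$ carries the same parity as $K_k$; consequently $K_k^\dagger K_k$ is a superposition of products of an even number of field operators whether $K_k$ is even or odd (even$\,\cdot\,$even and odd$\,\cdot\,$odd are both even). Hence $a$ is even. Positivity is inherited for free: an effect yields valid probabilities $\RBraKet{a}{\rho}\ge0$, so $a\ge0$ as a {\Qt} effect on $\Hilb{F}_N$; because the operator is unique it is simultaneously positive and even, which is the assertion.

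The step I expect to be the main obstacle is the passage from $\Transformation{T}$ to $\tilde{\Transformation{T}}$: I must make sure that ``operational equivalence'' in Theorem~\ref{t:fqt-kraus} is strong enough to be used inside the composition $\Effect{e}\circ\Transformation{T}$, i.e.\ that it holds against every closing effect and every ancilla, so that replacing $\Transformation{T}$ by $\tilde{\Transformation{T}}$ does not alter the functional $\Effect{a}$. A secondary point to handle carefully is the identification of the deterministic effect on the $N$-mode composite with the trace, which follows by combining assumption~\eqref{a:deteff} with uniqueness of $\Effect{e}$ and parallel composition; and, should one instead argue after the state restriction has been enforced, the uniqueness-of-representative argument must be replaced by compressing the positive representative onto the parity-diagonal blocks, a compression that preserves both positivity and evenness.
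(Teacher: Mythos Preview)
Your proposal is correct and follows essentially the same route as the paper's proof: write $\Effect{a}=\Effect{e}\circ\Transformation{T}$ by causality, apply Theorem~\ref{t:fqt-kraus} to replace $\Transformation{T}$ by a Kraus form with parity-definite Kraus operators, and read off $a=\sum_k s_k K_k^\dagger K_k$, which is even. You are merely more explicit than the paper about the uniqueness of the operator representative (via states spanning $\Herm(\Hilb{F}_N)$), about the fact that operational equivalence in Theorem~\ref{t:fqt-kraus} is stated against arbitrary ancillas and closing effects (so it survives composition with $\Effect{e}$), and about positivity, which the paper leaves implicit.
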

\begin{proof}
  Since in a causal theory we have \( \Qcircuit @C=1em @R=.7em @!  R { & \measureD{a} } = \Qcircuit @C=1em
  @R=.7em @! R { & \gate{\Transformation{T}} & \measureD{e} } \) for some transformation \( \Transformation{T}
  \), every effect can be written as \( a = \sum_i s_i {E_i}^\dagger E_i + s_i {O_i}^\dagger O_i \), namely an
  operator involving only products of even number of field operators.
\end{proof}

\begin{lemma}
  The even part of a state $\rho$ is a density matrix.
\end{lemma}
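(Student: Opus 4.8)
The plan is to realize the even part of $\rho$ as the average of $\rho$ over conjugation by the parity operator, and then to deduce that it is a density matrix from the single fact that this operator is unitary. Concretely, I would introduce the parity operator
\begin{equation*}
  P \coloneqq \prod_{i=1}^N \bigl(\F[i]\F[i]^\dagger - \F[i]^\dagger\F[i]\bigr) = (-1)^{\sum_i \F[i]^\dagger\F[i]}
\end{equation*}
on $\Hilb{F}_N$. From the \Car of Eq.~\eqref{eq:ccr} each factor squares to $I$ and the factors mutually commute, so $P$ is simultaneously Hermitian and unitary with $P^2 = I$. The one computation that carries the argument is that $P$ anticommutes with every field operator, $P\F[i] = -\F[i]P$ and $P\F[i]^\dagger = -\F[i]^\dagger P$, which follows at once from $[\sum_j\F[j]^\dagger\F[j],\F[i]] = -\F[i]$. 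Consequently, conjugation by $P$ multiplies any monomial in the field operators by $(-1)^k$, with $k$ the number of field operators it contains.

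Next I would make ``even part'' precise by matching it to the expansion of Eq.~\eqref{eq:state-fields}. The monomial indexed by $(s,t)$ there contains $\sum_i(s_i+t_i)+2N$ field operators, hence has the parity of $\sum_i(s_i+t_i)$; the even part $\rho_+$ is by definition the sum of the terms with an even such count. By the previous paragraph these are exactly the terms left invariant by conjugation with $P$, so that
\begin{equation*}
  \rho_+ = \tfrac{1}{2}\bigl(\rho + P\rho P\bigr).
\end{equation*}
(Equivalently, $\rho_+$ is the block of $\rho$ diagonal in the even/odd occupation-number sectors, the off-diagonal parity-changing blocks being the odd part.)

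Finally I would read off the three defining properties of a density matrix. Since $P$ is unitary, $P\rho P = P\rho P^\dagger$ is Hermitian and positive semidefinite whenever $\rho$ is, and $\Tr[P\rho P^\dagger] = \Tr[\rho]$ by cyclicity of the trace together with $P^\dagger P = I$. Therefore $\rho_+$, being the convex average of the two positive operators $\rho$ and $P\rho P^\dagger$, is itself Hermitian and positive semidefinite, with $\Tr[\rho_+] = \Tr[\rho] \le 1$; hence it is a (possibly sub-normalized) density matrix, and normalized precisely when $\rho$ is. The only genuinely non-formal step is the identification $P\F[i]P = -\F[i]$ and its use to match the $P$-conjugation average with the field-operator count in Eq.~\eqref{eq:state-fields}; once that is in place, positivity and trace preservation are immediate from unitarity, so I expect no real obstacle beyond bookkeeping.
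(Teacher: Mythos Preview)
Your proof is correct, but it takes a genuinely different route from the paper's. The paper factors $\rho = X^\dagger X$ (which is possible by positivity), splits $X = E + O$ into its even and odd field-operator components, and observes that the even part of $\rho$ is then $E^\dagger E + O^\dagger O$, which is manifestly positive. Your argument instead identifies the even part as the $\mathbb{Z}_2$-average $\tfrac{1}{2}(\rho + P\rho P)$ under conjugation by the grading operator $P = \prod_i(\F[i]\F[i]^\dagger - \F[i]^\dagger\F[i])$, and then invokes unitarity of $P$ to conclude positivity and trace preservation in one stroke.

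Each approach has its virtues. The paper's factorization is self-contained and yields the explicit form $E^\dagger E + O^\dagger O$, which is suggestive of a Kraus-type structure; however it does not separately address the trace, and it implicitly uses that the even/odd splitting of operators is well defined (which it is, since monomials of distinct field-parity are linearly independent). Your averaging argument is more conceptual and foreshadows the parity operator that the paper introduces only in the subsequent proposition; it also makes the trace property immediate and generalizes painlessly to any grading by a unitary involution. One small notational caution: the paper later writes $P = \tfrac{1}{2}(I + \prod_i(\F[i]\F[i]^\dagger - \F[i]^\dagger\F[i]))$, which is the \emph{projector} onto the even sector rather than your $\pm 1$-valued grading operator; your $P$ is the correct object for the conjugation argument, but be aware of the clash in convention if you integrate this into the paper.
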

\begin{proof}
  By assumption \eqref{a:states} a state of $N$ {\Lfm}s is a positive operator $\rho$ on \( \Hilb{F}_N \),
  then it can be expressed as $\rho=X^\dag X$. Writing $X=E+O$, with $E$ combination of even products of field
  operators and $O$ combination of odd products, we have $\rho=E^\dag E+O^\dag O+E^\dag O+O^\dag E$. Finally,
  the even part of $\rho$ is given by $E^\dag E+O^\dag O\geq0$, which is positive.
\end{proof}

\begin{proposition}\label{prop:superselection}
  States of \Fqt satisfy the parity superselection rule.
\end{proposition}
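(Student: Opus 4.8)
The plan is to show that every state $\State{\rho}$ is operationally indistinguishable from its parity-diagonal (``even'') part, so that the coherences linking the even and odd sectors carry no operational content; this is precisely the statement of the Wigner parity superselection rule. I would start from the decomposition already used in the preceding Lemma: writing the positive operator $\rho = X^\dagger X$ with $X = E + O$, where $E$ and $O$ collect the even and odd products of field operators respectively, one obtains
\begin{equation*}
  \rho = \underbrace{E^\dagger E + O^\dagger O}_{\rho_{\mathrm{ev}}} + \underbrace{E^\dagger O + O^\dagger E}_{\rho_{\mathrm{od}}}.
\end{equation*}
The preceding Lemma guarantees $\rho_{\mathrm{ev}} \ge 0$, so $\rho_{\mathrm{ev}}$ is again a legitimate (unnormalized) density matrix, and it is block-diagonal with respect to parity. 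The claim then reduces to proving that $\rho_{\mathrm{od}}$ is invisible to every effect.

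The key tool is the parity operator $P \coloneqq \prod_{i=1}^N (I - 2\F[i]^\dagger\F[i])$, which is Hermitian, satisfies $P^2 = I$, and anticommutes with each field operator, $P\F[i] = -\F[i]P$. Consequently any product of $k$ field operators obeys $P\,(\text{product})\,P = (-1)^k(\text{product})$, so $E$ and $E^\dagger E + O^\dagger O$ commute with $P$ (parity-even), whereas $O$, and hence $\rho_{\mathrm{od}} = E^\dagger O + O^\dagger E$, anticommute with $P$ (parity-odd). The essential input from the earlier development is Corollary~\ref{cor:effects}: every effect $\Effect{a}$ of the \Fqt is a positive operator built from products of an even number of field operators, hence is parity-even, $P a P = a$.

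With these facts the computation is immediate. For an arbitrary effect $\Effect{a}$, conjugating by the unitary $P$ (which leaves the trace invariant), inserting $P^2 = I$, and using $P\rho_{\mathrm{od}}P = -\rho_{\mathrm{od}}$ together with $PaP = a$, gives
\begin{equation*}
  \Tr[\rho_{\mathrm{od}}\,a] = \Tr[(P\rho_{\mathrm{od}}P)(PaP)] = -\Tr[\rho_{\mathrm{od}}\,a],
\end{equation*}
so $\Tr[\rho_{\mathrm{od}}\,a] = 0$. By the Born-rule pairing of assumption~\eqref{a:pairing} this means $\RBraKet{a}{\rho} = \Tr[\rho_{\mathrm{ev}}\,a]$ for all effects $\Effect{a}$, \ie $\State{\rho}$ and $\rho_{\mathrm{ev}}$ coincide as states of the \Fqt (taking $a$ equal to the deterministic effect also shows $\Tr[\rho_{\mathrm{od}}]=0$, so normalization is preserved). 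Since $\rho_{\mathrm{ev}}$ has no even--odd coherences, every state is equivalent to a parity-block-diagonal one, which is exactly the parity superselection rule.

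As for where the real work lies: the substance of the argument has already been discharged upstream, in Theorem~\ref{t:fqt-kraus} and Corollary~\ref{cor:effects}, establishing that effects are parity-even; once that is available the present proposition is essentially a one-line parity-conjugation argument. The only point demanding a little care is verifying the anticommutation $P\F[i] = -\F[i]P$, and thereby the even/odd transformation behaviour of $E$ and $O$ under $P$, which follows directly from the \Car of Eq.~\eqref{eq:ccr}.
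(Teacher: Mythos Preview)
Your proof is correct and follows essentially the same route as the paper's: decompose $\rho$ into its even and odd parts, invoke Corollary~\ref{cor:effects} to know that effects are parity-even, and conclude that the odd part is operationally invisible. Your parity-conjugation argument for $\Tr[\rho_{\mathrm{od}}\,a]=0$ is in fact more explicit than the paper's, which simply asserts this vanishing and only introduces the parity operator afterwards to phrase the superselection conclusion.
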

\begin{proof}
  Consider the state \( \State{\rho} = \sum_j E_j + O_j \), and its even part \( \State{\rho_E} \coloneqq
  \sum_j E_j \) with \( E_i \) and \( O_i \) made of linear combinations of an even and an odd number of field
  operators, respectively. Since \( \Tr[\ O_j a \ ] = 0 \), due to $a$ being made of products of an even
  number of field operators (see Corollary~\ref{cor:effects}), we have that \( \State{\rho} \) is
  operationally equivalent to \( \State{\rho_E} \), that is for every effect \( \Effect{a} \), \(
  \RBraKet{a}{\rho} = \RBraKet{a}{\rho_E} \).  Hence it is not restrictive to consider only the states
  represented by density matrices that are linear combinations of products of even number of field operators,
  as representatives of the resulting equivalence classes of states.  One can now decompose the Fock space \(
  \Hilb{F}_N \) in the direct sum
  \begin{equation}\label{eq:hilbert-sum}
    \Hilb{F}_N=\Hilb{F}_N^{0} \oplus \Hilb{F}_N^{1},
  \end{equation}
  where $\Hilb{F}_N^0$ and $\Hilb{F}_N^1$ are the eigenspaces of the parity operator
  \begin{equation}
    P=\tfrac{1}{2}(I+\prod_{i=1}^{n}(\F_i\F^\dag_i-\F^\dag_i\F_i))  
  \end{equation} 
  corresponding to the eigenvalues $p=0,1$---\ie corresponding to an even/odd total occupation number. We
  conclude that every state---being represented by a combination of products of an even number of
  fields---commutes with \( P \), thus it has a well defined parity, \ie states satisfy the parity
  superselection rule.
\end{proof}

\begin{corollary}
  The vacuum state $\KetBra{\Omega}$ is physical.
\end{corollary}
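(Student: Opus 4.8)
The plan is to check that $\KetBra{\Omega}$ satisfies the two defining requirements of a physical state of \Fqt that emerged from Proposition~\ref{prop:superselection}: it must be a positive operator on the Fock space $\Hilb{F}_N$ (assumption~\eqref{a:states}), and it must be a combination of products of an even number of field operators, so that it has a well-defined parity and commutes with the parity operator $P$.

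First I would recall the field representation of the vacuum projector already noted in Section~\ref{sub:ass},
\begin{equation*}
  \KetBra{\Omega} = \prod_{i=1}^N \F[i]\F[i]^\dagger.
\end{equation*}
Since each factor $\F[i]\F[i]^\dagger$ consists of two field operators, the right-hand side is a product of $2N$ field operators---an even number. Thus $\KetBra{\Omega}$ is exactly of the even form singled out by the parity superselection rule. For positivity, I would simply observe that $\KetBra{\Omega}$ is the rank-one orthogonal projector onto the vacuum vector $\Ket{\Omega}$, so $\KetBra{\Omega}\geq 0$ and it is a legitimate (unnormalized) density matrix on $\Hilb{F}_N$ in the sense of assumption~\eqref{a:states}.

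Combining the two observations, $\KetBra{\Omega}$ is a positive operator built from an even number of field operators; equivalently, $\Ket{\Omega}$ carries total occupation number zero, hence lies in the even-parity eigenspace $\Hilb{F}_N^0$, so that $\KetBra{\Omega}$ commutes with $P$. By Proposition~\ref{prop:superselection} it is therefore an admissible state of \Fqt, which is the claim. I expect no genuine obstacle here: the entire content is to identify the parity of the field representation, and once the even count $2N$ is in hand the conclusion is immediate from the superselection rule established just above.
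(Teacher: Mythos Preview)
Your proposal is correct and follows essentially the same approach as the paper: both invoke the field representation $\KetBra{\Omega}=\prod_i\F[i]\F[i]^\dagger$ and observe that it is a product of an even number of field operators, hence compatible with the parity superselection rule. Your version is simply more explicit, spelling out the positivity check and the connection to Proposition~\ref{prop:superselection}, whereas the paper's proof is a one-line remark.
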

\begin{proof}
  Being \( \KetBra{\Omega} = \prod_i \F[i]\F[i]^{\dagger} \) a state with an even number of field operators,
  it is valid state of the \Fqt.
\end{proof}

Finally, since effects of a system of $N$ {\Lfm}s are linear combinations of even products of field operators,
they commute with the parity operator, too. This allows us to derive the parity superselection rule also for
effects.

\begin{corollary}
  Effects of \Fqt satisfy the parity superselection rule.
\end{corollary}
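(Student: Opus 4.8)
The plan is to mirror exactly the argument already carried out for states in Proposition~\ref{prop:superselection}, since the only genuinely new ingredient — the structural form of effects — has already been secured. First I would invoke Corollary~\ref{cor:effects}, which guarantees that every effect $\Effect{a}$ of a system of $N$ {\Lfm}s is a positive operator expressible as a linear combination of products of an \emph{even} number of field operators. This is the sole input about effects that the remainder of the argument needs.

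Next I would recall the parity operator $P=\tfrac12(I+\mathcal P)$ with $\mathcal P\coloneqq\prod_{i=1}^{N}(\F[i]\F[i]^\dagger-\F[i]^\dagger\F[i])$, together with the eigenspace decomposition $\Hilb{F}_N=\Hilb{F}_N^{0}\oplus\Hilb{F}_N^{1}$ of Eq.~\eqref{eq:hilbert-sum}. The crux is a commutation relation following from the {\Car}: each single field operator $\F[i]$ (and $\F[i]^\dagger$) \emph{anticommutes} with $\mathcal P$, because it anticommutes with the quadratic factor at its own site and commutes with the quadratic factors at all other sites, so that applying one field operator flips the total occupation parity. Consequently any product of an \emph{even} number of field operators commutes with $\mathcal P$, since the two sign flips cancel, and hence commutes with $P$.

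Combining these two observations, any linear combination of even products of field operators commutes with $P$; by Corollary~\ref{cor:effects} this applies in particular to every effect $\Effect{a}$ of the \Fqt. A Hermitian operator commuting with $P$ is block-diagonal with respect to the splitting $\Hilb{F}_N^{0}\oplus\Hilb{F}_N^{1}$, which is precisely the statement that the effect has well-defined parity, \ie satisfies the parity superselection rule. This closes the argument.

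I expect essentially no obstacle: the whole content was effectively established once Corollary~\ref{cor:effects} pinned down the even-parity form of effects, and the commutation of even field-operator products with $P$ is the identical elementary {\Car} computation invoked for states. The only point deserving a line of care is verifying the single-operator anticommutation $\F[i]\mathcal P=-\mathcal P\F[i]$, after which the even-product case follows immediately by iterating the sign flip; everything else is a direct transcription of the proof of Proposition~\ref{prop:superselection}.
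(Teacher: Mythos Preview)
Your proposal is correct and follows exactly the paper's own reasoning: the paper simply remarks (in the sentence immediately preceding the corollary) that effects, being linear combinations of even products of field operators by Corollary~\ref{cor:effects}, commute with the parity operator, and states the corollary without further proof. You have supplied the details of that commutation argument, but the approach is identical.
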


\subsection{Set of states, effects and transformations}

In the following we will analyze the consequences of the parity prescription on the states, effects, and
transformations of \Fqt.

For this purpose, we remind that if $\sA$ is a $N$ qubits system, then the linear spaces of states and effects
correspond to the set of $2^N\times 2^N$ Hermitian matrices
\begin{equation}\label{eq:qubits-theory-set-of-states}
  \SetStates_{\Reals}(\sA)=\SetEffects_{\Reals}(\sA)= \operatorname{Herm}(
    (\Complexes^2)^{\otimes N} )
\end{equation} 
and the dimension of the set of states and effects is
\begin{equation}\label{eq:quantum-dim2}
  D_{\sA}=d_{\sA}^2=2^{2N},
\end{equation}
with $d_\sA = 2^N$ the Hilbert dimension of $N$ qubits. On the other hand, a system of $N$ {\Lfm}s must obey
the parity superselection rule, which forbids any pure state corresponding to a superposition of vectors
belonging to $\Hilb{F}_N^0$ and $\Hilb{F}_N^1$, \ie pure states are given by projections on superpositions of
Fock vectors with total occupation numbers equal modulo 2.  Hence the elementary system---the one-\Lfm---has
only the pure states $\KetBra{0},\KetBra{1}$, thus corresponding to the classical \emph{bit} (indeed the Fock
vectors $\Ket{0}$ and $\Ket{1}$ belong to \( \Hilb{F}_1^0 \) and \( \Hilb{F}_1^1 \), respectively, and then
one cannot consider their superpositions).  In general, for a system \( \sA \) of $N$ {\Lfm}s we can identify
two disjoint sectors with different parity in the linear sets of states and effects: $\SetStates_{\Reals}(\sA)
=\SetEffects_{\Reals}(\sA) = \Herm(\Hilb{F}_N^0) \oplus \Herm(\Hilb{F}_N^1)$. Since \( \dim \Hilb{F}_N^0 =
\dim \Hilb{F}_N^1 = 2^{N-1} \) we have \( \dim \Herm(\Hilb{F}_N^i) = 2^{2(N-1)} \). Being the dimension of the
linear space of states of \( N-1 \) qubits exactly \( 2^{2(N-1)} \), we have that each parity sector of the
linear set of states of \( N \) {\Lfm}s is isomorphic to that of \( N-1 \) qubits, making \(
\SetStates_\Reals(\sA)=\SetEffects_\Reals(\sA) \) equivalent to the direct sum of two $N-1$ qubit state
spaces, with
\begin{equation}\label{eq:Fermionic-dimension}
  D_{N\,\Lfm\textrm{s}} = 2 D_{N-1\,\textsc{qubit}\textrm{s}} = \frac{1}{2} D_{N\,\textsc{qubit}\textrm{s}} =
    2^{2N-1}. 
\end{equation}
A general element of $\SetStates_{\Reals}(\sA) = \Herm(\Hilb{F}_N^0) \oplus \Herm(\Hilb{F}_N^1)$ has a block
diagonal form, that characterizes also the actual sets of states and effects: reordering the basis of the Fock
space $\Hilb{F}_N$ in such a way that all the even vectors precede all the odd ones, one has that for every
state \( \rho \in \SetStates(\sA)\) and every effect $a\in\SetEffects(\sA)$
\begin{align*}
  & \rho=\left(
    \begin{array}{l|l}
      \rho_0&0\\
      \hline 0&\rho_1
    \end{array}
  \right),&& \rho_i\geq0,\quad\Tr[\rho_0]+\Tr[\rho_0]\leq 1,\\
  & a=\left(\begin{array}{l|l}
      a_0&0\\
      \hline 0&a_1
    \end{array}
  \right),&& 0\leq a_i\leq I_{2^{N-1}},
\end{align*}
corresponding to \(\SetStates(\sA)= \operatorname{Conv}[\ (\SetStates(\sA) \cap \Herm(\Hilb{F}_N^0)) \cup
(\SetStates(\sA) \cap \Herm(\Hilb{F}_N^1))\ ] \) and \(\SetStates(\sA)= \operatorname{Conv}[\
(\SetEffects(\sA) \cap \Herm(\Hilb{F}_N^0)) \cup (\SetEffects(\sA) \cap \Herm(\Hilb{F}_N^1))\ ] \), with \(
\operatorname{Conv}(X) \) representing the convex hull of the set $X$.

Notice that, thanks to the definition of Eq.~\eqref{eq:fock} the Fock space $\Hilb{F}(N)$ is isomorphic to a
$N$-qubit Hilbert space, by the trivial identification of the occupation number basis $\Ket{s_1,\ldots,s_N}_F$
with the qubit computational basis $\Ket{s_1,\ldots,s_N}_Q$ of eigenvectors of the Pauli matrices $\sigma^z_i$
with $1\leq i\leq N$. Hence the two parity sectors \( \SetStates(\sA) \cap \Herm(\Hilb{F}_N^i) \) are actually
isomorphic to the \( (N-1) \)-qubit states set, with pure states given by the rank one projectors \(
\KetBra{\Psi} \) with \( \Ket{\Psi} \) normalized superposition of Fock vectors belonging to \( \Hilb{F}_N^i
\), while the two sectors \( \SetEffects(\sA) \cap \Herm(\Hilb{F}_N^i) \) are isomorphic to the \( (N-1)
\)-qubit effects set, whose atomic elements coincide with pure states.

\begin{proposition}\label{prop:CP-maps}
  Let \( \sA \), \( \sB \) be two $N$ {\Lfm}s systems. Then, transformations from \( \sA \) to \( \sB \) are
  CP maps from \( \SetStates(\sA) \) to \( \SetStates(\sB) \). 
\end{proposition}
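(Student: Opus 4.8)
The plan is to derive complete positivity from the admissibility requirement built into the very definition of a transformation, by means of a Choi--Jamio\l{}kowski argument tailored to the parity superselection rule. By Theorem~\ref{t:fqt-kraus} I may take $\Transformation{T}\in\SetTransf(\sA,\sB)$ in the form $\Transformation{T}=\sum_i s_i\,G_i\cdot G_i^\dagger$, with $s_i=\pm1$ and each $G_i$ a combination of products of either an even or an odd number of field operators. This is only a difference of two completely positive maps, so the Kraus form by itself does not force complete positivity; the terms with $s_i=-1$ must be shown to produce no violation of positivity, and this is exactly where admissibility---namely that $\Transformation{T}$ sends $\SetStates(\sA\sC)$ into $\SetStates(\sB\sC)$ for every Fermionic ancilla $\sC$---comes in.

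First I would take the ancilla $\sC$ to be a second copy of the $N$-\Lfm\ system and form the canonical maximally entangled vector $\ket{\Phi}=\sum_k\ket{k}_\sA\ket{k}_\sC$, where $\{\ket{k}\}$ is the common Fock basis of Eq.~\eqref{eq:fock}. I expect the main obstacle to be checking that the density operator $\ket{\Phi}\bra{\Phi}$ is a legitimate \Fqt\ state in spite of the parity superselection of Proposition~\ref{prop:superselection}: a priori one fears that superselection expels the maximally entangled state from the theory, which would wreck the whole construction. The obstacle dissolves once one notices that each summand $\ket{k}_\sA\ket{k}_\sC$ pairs two Fock vectors of equal occupation parity, so every summand carries even \emph{total} parity; hence $\ket{\Phi}$ has definite (even) parity, $\ket{\Phi}\bra{\Phi}$ commutes with the total parity operator, and therefore $\ket{\Phi}\bra{\Phi}\in\SetStates(\sA\sC)$.

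Next I would apply $\Transformation{T}\otimes\Transformation{I}_\sC$ to $\ket{\Phi}\bra{\Phi}$. Since the ancillary leg is the identity while each $G_i$ is built solely from the field operators of $\sA$ (which, with all $\sA$-modes ordered before the $\sC$-modes, carry Jordan--Wigner strings internal to $\sA$ only), the parallel composition produces no extra anti-commutation sign---in contrast to the discarding rule of Eq.~\eqref{eq:discarding}, where both factors are nontrivial. Consequently $(\Transformation{T}\otimes\Transformation{I}_\sC)(\ket{\Phi}\bra{\Phi})=\sum_i s_i\,G_i\,\ket{\Phi}\bra{\Phi}\,G_i^\dagger$, read as an operator on the joint Fock space, is exactly the Choi operator $C_{\Transformation{T}}$ of $\Transformation{T}$. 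By admissibility the left-hand side is a state, hence a positive semidefinite operator, so $C_{\Transformation{T}}\ge0$; positivity of the Choi operator is equivalent to complete positivity of $\Transformation{T}$, which is the main assertion.

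Finally I would collect the remaining properties. The definite-parity Kraus operators $G_i$ send definite-parity states to definite-parity states, so $\Transformation{T}$ preserves the block-diagonal structure $\SetStates_{\Reals}(\sA)=\Herm(\Hilb{F}_N^0)\oplus\Herm(\Hilb{F}_N^1)$ recorded around Eq.~\eqref{eq:Fermionic-dimension}; and since $\Transformation{T}$ carries (sub)normalized states to (sub)normalized states, linearity makes it trace non-increasing. Hence every $\Transformation{T}\in\SetTransf(\sA,\sB)$ is a completely positive, trace-non-increasing map of $\SetStates(\sA)$ into $\SetStates(\sB)$, as claimed. (The converse inclusion, that a parity-preserving completely positive trace-non-increasing map is admissible, is the routine direction.)
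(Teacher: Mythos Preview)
Your argument is correct, but it follows a genuinely different route from the paper's. The paper exploits the block structure of \Fqt: it splits any $\Transformation{T}$ into four pieces $\Transformation{T}_{xy}$ mapping the $x$-parity sector of $\SetStates(\sA)$ to the $y$-parity sector of $\SetStates(\sB)$, observes that each sector is isomorphic to the full state space of $(N-1)$ qubits, and then concludes that each $\Transformation{T}_{xy}$ is an admissible map between ordinary qubit systems, hence completely positive by the standard \Qt result. Your proof instead runs a direct Choi--Jamio\l{}kowski argument inside \Fqt, the key observation being that the unnormalized maximally entangled Fock state $\ket{\Phi}=\sum_k\ket{k}_\sA\ket{k}_\sC$ has definite (even) total parity and therefore survives the superselection; admissibility then forces positivity of the Choi operator, which is equivalent to complete positivity on the full space. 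What your approach buys is self-containment: you never invoke the external fact that admissible maps in ordinary \Qt are CP, and you make explicit exactly where the admissibility hypothesis enters. What the paper's approach buys is a structural picture of how a Fermionic transformation decomposes across parity sectors, which it reuses later. One small point worth tightening in your write-up: you should normalize $\ket{\Phi}\bra{\Phi}$ (or note that positivity is scale-invariant) and state explicitly that assumption~(\ref{a:extension}) is what guarantees the extended Kraus representation $(\Transformation{T}\otimes\Transformation{I}_\sC)(\rho)=\sum_i s_i\,G_i\rho G_i^\dagger$ on the joint system.
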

\begin{proof}
  Since the parity superselection implies the presence of the two parity sectors \( \SetStates(\sA) \cap
  \Herm(\Hilb{F}_N^i) \) with $i=0,1$, we have that an arbitrary transformation \( \Transformation{T} \in
  \SetTransf(\sA, \sB) \) can be written as \( \Transformation{T} = \Transformation{T}_{00} +
  \Transformation{T}_{01} + \Transformation{T}_{10} + \Transformation{T}_{11} \), with 
  \begin{align*}
    \Transformation{T}_{xy}:\quad&\SetStates(\sA) \rightarrow \SetStates(\sB)\\
    & \State{\rho} \mapsto 
      \begin{cases}
        \Transformation{T}_{xy}\State{\rho} \in \SetStates(\sA) \cap \Herm(\Hilb{F}_N^y)&\text{if }
          \State{\rho}\in\SetStates(\sA) \cap \Herm(\Hilb{F}_N^x)\\
        0 & \text{if }\State{\rho}\in\SetStates(\sA) \cap \Herm(\Hilb{F}_N^{\bar x})
      \end{cases}.
  \end{align*}
  Since \( \SetStates(\sA) \cap \Herm(\Hilb{F}_N^0) \sim \SetStates(\sA) \cap \Herm(\Hilb{F}_N^1) \sim
  \SetStates(\sB ) \) with $\sB$ a $N-1$ qubits system, we have that all the \( \Transformation{T}_{xy} \)
  with \( x,y=0,1 \) are actually quantum maps from \( N-1 \) to \( N-1 \) qubits, \ie CP maps.
\end{proof}

Since we are admitting the no-restriction hypothesis, all the transformations with Kraus operators being
superpositions of products of either an even number or an odd number of field operators belong to the theory
(since they are admissible).  Finally, every admissible transformation can be dilated to a single-Kraus one
thanks to the following proposition.
\begin{proposition}
  Every multi-Kraus transformation can be dilated to a single Kraus one.
\end{proposition}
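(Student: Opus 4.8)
The plan is to adapt the Stinespring dilation to \Fqt, the only novelty being that the dilating operator must be made \emph{parity-homogeneous} so that it qualifies as a single Kraus operator of the theory. Let $\Transformation{T}\in\SetTransf(\sA,\sB)$ be a multi-Kraus transformation. By Theorem~\ref{t:fqt-kraus} I may assume it is written as $\Transformation{T}(\State{\rho})=\sum_i K_i\State{\rho}K_i^\dag$ with each $K_i$ a superposition of products of either an even or an odd number of field operators, and with $\sum_i K_i^\dag K_i\leq I$ since a transformation is trace non-increasing. I would then introduce an ancillary system $\sC$ of $M$ {\Lfm}s, choosing $M$ large enough that the $2^{M-1}$ even Fock states suffice to label the even Kraus operators and the $2^{M-1}$ odd ones suffice to label the odd Kraus operators. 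Fixing $\sC$ in the vacuum $\Ket{\Omega}$, I define the candidate single Kraus operator
\begin{equation*}
  W\coloneqq\sum_i K_i\otimes(\Ket{i}\Bra{\Omega})_\sC,
\end{equation*}
where the Fock state $\Ket{i}$ is chosen to have \emph{even} total occupation number when $K_i$ is even and \emph{odd} total occupation number when $K_i$ is odd.

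The point of this parity-matching assignment is that $(\Ket{i}\Bra{\Omega})_\sC=\big(\prod_{k:\,s^{(i)}_k=1}\F[k]^\dag\big)\KetBra{\Omega}$ carries the parity of the occupation number of $\Ket{i}$, which by construction equals the parity of $K_i$; hence every summand $K_i\otimes(\Ket{i}\Bra{\Omega})_\sC$ is even and $W$ is parity-homogeneous. By the no-restriction hypothesis the single-Kraus map $\Transformation{W}(\cdot)\coloneqq W\cdot W^\dag$ is then an admissible transformation of \Fqt. Because $W$ is even, the graded factorization goes through without signs and a direct computation gives $W^\dag W=\big(\sum_i K_i^\dag K_i\big)\otimes\KetBra{\Omega}\leq I$, so $\Transformation{W}$ is trace non-increasing as a single-Kraus transformation must be.

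It then remains to verify that discarding the ancilla recovers $\Transformation{T}$. Preparing $\sC$ in the vacuum and applying $\Transformation{W}$ I obtain
\begin{equation*}
  W(\State{\rho}\otimes\KetBra{\Omega}_\sC)W^\dag=\sum_{ij}K_i\State{\rho}K_j^\dag\otimes(\Ket{i}\Bra{j})_\sC,
\end{equation*}
and I would trace out $\sC$ using the Fermionic discarding rule of Eq.~\eqref{eq:discarding} rather than the naive tensor-product trace. Since the operator above is even and that rule retains only the terms whose ancilla modes carry matching occupation on the two sides — which forces $i=j$ — and since for such diagonal terms the reordering sign $f(s,t)$ vanishes, I get
\begin{equation*}
  \Tr_\sC\big[\,W(\State{\rho}\otimes\KetBra{\Omega}_\sC)W^\dag\,\big]=\sum_i K_i\State{\rho}K_i^\dag=\Transformation{T}(\State{\rho}),
\end{equation*}
which is the claimed dilation.

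The step demanding the most care is ensuring that $W$ is itself a legitimate operator of \Fqt, namely parity-homogeneous; this is precisely what the parity-matching choice of ancilla Fock states delivers, and it is the genuinely Fermionic ingredient absent from the ordinary quantum Stinespring construction. Once evenness of $W$ is secured, it simultaneously guarantees that the tensor factorizations in $W^\dag W$ carry no spurious signs and that the Fermionic partial trace of Eq.~\eqref{eq:discarding} collapses the off-diagonal ancilla terms cleanly, so the remainder of the argument is routine.
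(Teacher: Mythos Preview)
Your argument is correct and is essentially the paper's own proof: a Stinespring-type dilation with an ancilla prepared in the vacuum, where the ancilla Fock labels are parity-matched to the Kraus operators so that the single dilating operator is even, and the Fermionic discarding rule of Eq.~\eqref{eq:discarding} then kills the off-diagonal ancilla terms. Two small points worth tightening: your displayed expression $\sum_{ij}K_i\rho K_j^\dag\otimes(\Ket{i}\Bra{j})_\sC$ suppresses the graded signs on the mixed-parity cross terms (the paper tracks these explicitly, and they indeed vanish under the Fermionic partial trace), and since \Fqt is not local-tomographic the dilation identity should be checked with an additional spectator ancilla as the paper does---your computation carries over verbatim in that case.
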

\begin{proof}
  Let \( \Transformation{T} = \sum_{i\in\eta_\mathrm{E}} E_i \cdot E_i^\dagger + \sum_{i\in\eta_\mathrm{O}}
  O_i \cdot O_i^\dagger  \) be a transformation of the $N$-\Lfm system $\sA$, with $E_i$, $O_i$ even and odd
  Kraus operators respectively. We want to show that we can find \( \tilde{\Transformation{T}} = \tilde{T}
  \cdot \tilde{T}^\dagger \) acting on $\sA\sB$ with $\sB$ a $M$-\Lfm system and a state $\State{\sigma} \in
  \SetStates(\sB)$ such that 
  \begin{equation}\label{eq:dilation}
    \begin{aligned}
      \Qcircuit @C=1em @R=.7em @! R {
        & \ustickcool{\sA}\qw & \gate{\Transformation{T}} & \ustickcool{\sA}\qw &\qw 
      } 
    \end{aligned}
    =
    \begin{aligned}
      \Qcircuit @C=1em @R=.7em @! R {
        & \ustickcool{\sA}\qw & \multigate{1}{\tilde{\Transformation{T}}} & \ustickcool{\sA}\qw &  \qw   \\
        \prepareC{\sigma} & \ustickcool{\sB}\qw  & \ghost{\Transformation{T}} & \ustickcool{\sB}\qw & \measureD{e}
      } 
    \end{aligned}.
  \end{equation}
  In a $K$-\Lfm system there are $2^{K-1}$ Fock vectors of Eq.~\eqref{eq:fock} involving an even number of
  fields (as well as $2^{K-1}$ involving an odd number of them). An even Fock vector \( \Ket{e_i} \) can be
  written as \( \tilde{E}_i \Ket{\Omega} \), with $E_i$ an operator involving an even number of fields
  (similarly we have \( \Ket{o_i} = \tilde{O}_i \Ket{\Omega} \) for the odd ones).  We set $\sB$ to be a
  $M$-\Lfm system with $M \coloneqq \max [ \operatorname{ceiling} (\log_2 |\eta_\mathrm{E}|) ,
  \operatorname{ceiling}( \log_2 |\eta_\mathrm{O}|) ] + 1$: in this way $\sB$ is just big enough to allocate a
  number of even and odd Fock vectors equals respectively to the number of even and odd Kraus operators
  appearing in $\Transformation{T}$.  Moreover, let $\State{\sigma}$ be the vacuum state of $M$ {\Lfm}s, then
  a dilation of \( \Transformation{T} \) is given by the transformation \( \tilde{\Transformation{T}} \) with
  even single-Kraus
  \begin{equation}
    \tilde{T} = \sum_{i\in\eta_{\mathrm{E}}} E_i\tilde{E}_i + \sum_{i\in\eta_{\mathrm{O}}} O_i\tilde{O}_i,
  \end{equation}
  where \( \tilde{E}_i \) and \( \tilde{O}_i \) are the even and odd field operators defining the even and odd
  orthonormal Fock vectors for the system $\sB$. Let us show the equality of Eq.~\eqref{eq:dilation}, namely
  \begin{equation}\label{eq:incubo}
    \begin{aligned} 
      \Qcircuit @C=1em @R=.7em @! R { 
        \multiprepareC{1}{\rho} & \qw & \ustickcool{\sC}\qw & \qw &\qw \\
        \pureghost{\rho} & \ustickcool{\sA}\qw & \gate{\Transformation{T}} & \ustickcool{\sA}\qw &\qw 
      } 
    \end{aligned} 
    = 
    \begin{aligned} 
      \Qcircuit @C=1em  @R=.7em @! R { 
        \multiprepareC{1}{\rho} & \qw & \ustickcool{\sC}\qw & \qw &\qw \\
        \pureghost{\rho} & \ustickcool{\sA}\qw & \multigate{1}{\tilde{\Transformation{T}}} &
          \ustickcool{\sA}\qw & \qw \\ 
        \prepareC{\sigma} & \ustickcool{\sB}\qw  & \ghost{\Transformation{T}} & \ustickcool{\sB}\qw &
          \measureD{e} 
      } 
    \end{aligned},
  \end{equation}
  for an arbitrary system $\sC$ of $P$ {\Lfm}s and an arbitrary state \( \State{\rho} \in \SetStates(\sC\sA)
  \).  The \Lhs of Eq.~\eqref{eq:incubo} is given by
  \[
   \sum_{i\in\eta_\mathrm{E}} E_i \rho E_i^\dagger + \sum_{i\in\eta_\mathrm{O}} O_i \rho O_i^\dagger
  \]
  On the other hand, being \( \tau \coloneqq \State{\rho} \otimes \State{\sigma} = \rho \prod_{i=1}^M
  \tilde{\F[i]}\tilde{\F[i]}^\dagger \), with $\tilde{\F[i]}, \tilde{\F[i]}^\dagger$ the field operators on
  the subsystem $\sB$, the \Rhs of Eq.~\eqref{eq:incubo} is 
  \begin{multline*}
    \Tr_{\sB} \left[\ \tilde{T}\ \tau\ \tilde{T}^\dagger\ \right] = \sum_{j,k\in\eta_\mathrm{E}} \Tr_{\sB}
      \left[\ E_j \tilde{E}_j\ \tau\ \tilde{E}_k^\dagger E_k^\dagger\ \right] + \sum_{j,k\in\eta_\mathrm{O}}
      \Tr_{\sB} \left[\ O_j \tilde{O}_j\ \tau\ \tilde{O}_k^\dagger O_k^\dagger\ \right] + \\ 
    + \sum_{j\in\eta_\mathrm{E}} \sum_{k\in\eta_\mathrm{O}} \left\{ \Tr_{\sB} \left[\ E_j \tilde{E}_j\ \tau \
      \tilde{O}_k^\dagger O_k^\dagger\ \right]  + \Tr_{\sB} \left[\ O_k \tilde{O}_k\ \tau \ \tilde{E}_j^\dagger
      E_j^\dagger\ \right] \right\} = \\
    \sum_{j,k\in\eta_\mathrm{E}} \Tr_{\sB} \left[\ E_j  \rho E_k^\dagger \ \ \tilde{E}_k^\dagger
      \prod_{i=1}^M \tilde{\F[i]}\tilde{\F[i]}^\dagger \tilde{E}_j \ \right] + \sum_{j,k\in\eta_\mathrm{O}}
      \Tr_{\sB} \left[\ O_j \rho O_k^\dagger\ \ \tilde{O}_k^\dagger \prod_{i=1}^M
      \tilde{\F[i]}\tilde{\F[i]}^\dagger \tilde{O}_j  \ \right] + \\ 
   - \sum_{j\in\eta_\mathrm{E}} \sum_{k\in\eta_\mathrm{O}} \left\{ \Tr_{\sB} \left[\ E_j \rho O_k^\dagger \ \
     \tilde{O}_k^\dagger \prod_{i=1}^M \tilde{\F[i]}\tilde{\F[i]}^\dagger \tilde{E}_j \ \right]  + \Tr_{\sB}
     \left[\ O_k \rho E_j^\dagger \ \ \tilde{E}_j^\dagger \prod_{i=1}^M \tilde{\F[i]}\tilde{\F[i]}^\dagger
     \tilde{O}_k \ \right] \right\}.
  \end{multline*}
  Due to the orthogonality relation between the Fock vectors, Eq.~\eqref{eq:discarding} shows that the
  previous equation is equal to Eq.~\eqref{eq:incubo}. Notice that with a similar procedure we could have
  dilated $\Transformation{T}$ to an \emph{odd} single-Kraus transformation.
\end{proof}

\section{Informational features}\label{sec:informational-features}

In this section we derive the consequences of the parity superselection on the structure of \Fqt. We will
explore the tomography of Fermionic states (which results to be non-local), the properties of Fermionic
entanglement (which exhibits differences with respect to the quantum case), and some issues regarding the
computation in the \Fqt.

First of all we introduce the Jordan-Wigner isomorphism, which will be useful to compare \Fqt with \Qt and to
address the issue of simulation.

\subsection{The Jordan-Wigner map}\label{sec:JW}

Thanks to Eq.~\eqref{eq:fock}, the Fock space $\Hilb{F}_N$ and Hilbert space of $N$ qubits
${(\Complexes^2)}^{\otimes N}$ are isomorphic. A simple way to map unitarily an orthonormal basis of the
former to an orthonormal basis of the latter is
\[
  \Ket{s_1, \ldots, s_N}_\mathrm{F} \overset{U}{\rightarrow} \Ket{s_1, \ldots, s_N}_\mathrm{Q},
\]
where \( \Ket{s_1, \ldots, s_N}_\mathrm{Q} \) is the joint eigenvector of the qubit operators \( \sigma^z_j \)
with \( j=1,\ldots,N \).  Notice that such an encoding necessarily depends on the chosen ordering for the
{\Lfm}s in Eq.\eqref{eq:fock}. Indeed, had we chosen a different ordering $\pi\in S_N$ in Eq.~\eqref{eq:fock}
we would have got the Fock vectors 
\[
  \Ket{s_1,\ldots,s_{N}}_\sF^\pi \coloneqq (\F[\pi(1)]^\dagger)^{s_{\pi(1)}} \cdots
    (\F[\pi(N)]^\dagger)^{s_{\pi(N)}} \Ket{\Omega} \equiv (-1)^{\operatorname{sign}(\pi)}
    \Ket{s_1,\ldots,s_{N}}_\sF, 
\]
and the new unitary map would have been
\[
  \Ket{s_1, \ldots, s_N}^\pi_\mathrm{F} \overset{U^\pi}{\rightarrow} \Ket{s_1, \ldots, s_N}_\mathrm{Q}.
\]

For a given ordering $\pi$ the map $U^\pi$ induces a *-algebra isomorphism between the \Car algebra of the
fields and the algebra of the Pauli matrices \( \{ \sigma_j^\alpha \} \) known as \emph{Jordan-Wigner
transform} (\Jwt).  For example, for a given ordering $\pi$ the \Jwt gives
\begin{equation*}
  \F[i] \rightarrow J_\pi (\F[i]) = \prod_{k=\pi(1)}^{\pi(i-1)} \sigma_k^z
    \sigma^{-}_{\pi(i)},\qquad\text{with } \sigma^{\pm}_k \coloneqq \frac{\sigma^x_k \pm i\sigma^y_k}{2}. 
\end{equation*}
From the previous equation we notice that under \Jwt a single \Lfm field operator is in general mapped to a
many qubits operator. This is a general property of the \Jwt regardless the number of {\Lfm}s involved. For
instance, the 2-\Lfm field operator $\F[i]^\dagger \F[j]$ is mapped under a \Jwt to
\begin{equation*}
  \F[i]^\dagger \F[j] \rightarrow J_\pi (\F[i]^\dagger \F[j]) = \sigma^+_{\pi(i)}
    \prod_{\pi(i)<k<\pi(j)}\sigma^z_k \sigma^-_{\pi(j)},
\end{equation*}
namely the corresponding qubit operator involves more than two subsystems, the only exception when the chosen
ordering gives \( \pi(j) = \pi(i) + 1 \).

In the following we will denote by $J$ the \Jwt representation corresponding to the trivial ordering
permutation.  Under the trivial ordering the Pauli matrices can be expressed in terms of the Fermionic
operators $\F[1],\ldots,\F[N]$ as follows
\begin{align}
  \label{eq:sigma-x}  & \sigma_i^{x} = \sigma^z_{1} \cdots \sigma^z_{i-1} J( \F[i] + \F[i]^\dagger ), \\ 
  \label{eq:sigma-y}  & \sigma_i^{y} = -i \sigma^z_{1} \cdots \sigma^z_{i-1} J( \F[i] - \F[i]^\dagger ), \\
  \label{eq:sigma-z}  & \sigma_i^{z} = J(\F[i]^\dagger \F[i] - \F[i] \F[i]^\dagger ).
\end{align}

Notice that, the parity superselection rule in the Fock space $\Hilb{F}_N$ is trivially translated in the
qubit space thanks to the \Jwt; \ie defining \emph{total occupation number} for the qubit vector \(
\Ket{s_1,\ldots,s_N}_\mathrm{Q} \) as the sum \( \sum_i s_i \), the Wigner superselection forbids states that
are projections on superpositions of qubit vectors with total occupation numbers different modulo $2$.

\subsection{Bilocal tomography}

In the following, we exploit the \Jwt to represent the states of the \Fqt. For the sake of simplicity, we will
drop the $J$ symbol when this causes no confusion.

Thanks to Section~\ref{s:fqt}, we know that \Fqt is the parity superselected version of the \Qt of qubits.
Using the generalized theory of superselected \Opt developed in Section~\ref{s:superselected-opt} we can see
that \Fqt can be regarded as a minimal superselection of \Qt:

\begin{proposition}\label{prop:minimal-superselection}
  \Fqt is a minimal superselection of \Qt with the following linear constraints on the qubit system \( \sB \)
  \begin{equation}\label{eq:Fermionic-linear-constraints}
    \SetStates(\sigma(\sB))\coloneqq\{\rho\in\SetStates(\sB);\, \Tr[\sigma^x \rho]=\Tr[\sigma^y \rho] =0\}.
  \end{equation}
\end{proposition}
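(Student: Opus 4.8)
The plan is to verify the two properties that define a minimally superselected theory built bottom-up from its elementary systems: (a) that the constraints on the elementary system---a single \Lfm---are exactly the effects $\sigma^x,\sigma^y$ of Eq.~\eqref{eq:Fermionic-linear-constraints}, and (b) that the induced constraints on composite systems saturate the lower bound of Eq.~\eqref{eq:constraints-on-the-constraints-a}, as required by Definition~\ref{def:minimal-superselection}.

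For (a) I would invoke Proposition~\ref{prop:superselection}: after parity superselection a single \Lfm retains only the pure states $\KetBra{0}$ and $\KetBra{1}$, which under the Jordan--Wigner identification of Eq.~\eqref{eq:sigma-z} are the eigenvectors of $\sigma^z$. Every admissible state is therefore a mixture of these, hence $\sigma^z$-diagonal, and a Hermitian $2\times 2$ matrix is $\sigma^z$-diagonal precisely when $\Tr[\sigma^x\rho]=\Tr[\sigma^y\rho]=0$. Since $\sigma^x,\sigma^y\in\SetEffects_\Reals(\sB)$ are linearly independent, this is exactly the linear section of Eq.~\eqref{eq:Fermionic-linear-constraints}, giving $V^\sigma_{\sB}=2$, $D_{\sB}=4$, and $D_{\bar{\sB}}=2$ (the classical bit).

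For (b) I would use the dimensional identity of Eq.~\eqref{eq:Fermionic-dimension}, namely $D=2^{2N-1}$ for $N$ {\Lfm}s against $2^{2N}$ for $N$ qubits, whence $V^\sigma_{N}=2^{2N}-2^{2N-1}=2^{2N-1}$. Writing any bipartition as $N=N_1+N_2$, so that $D_\sA=2^{2N_1}$, $D_\sB=2^{2N_2}$, $V^\sigma_\sA=2^{2N_1-1}$, $V^\sigma_\sB=2^{2N_2-1}$, a one-line substitution into the right-hand side of Eq.~\eqref{eq:constraints-on-the-constraints-a} gives
\begin{equation*}
  D_\sA V^\sigma_\sB+D_\sB V^\sigma_\sA-2V^\sigma_\sA V^\sigma_\sB
  =2^{2N-1}+2^{2N-1}-2\cdot 2^{2N-2}=2^{2N-1}=V^\sigma_{\sA\sB},
\end{equation*}
so the lower bound is saturated for every splitting, which is the required minimality.

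Finally, to be rigorous I would check that the minimal construction reproduces the \emph{same} constraint subspace as the parity rule, not merely the same dimension. Passing to the Pauli basis, an effect is a parity constraint iff it is block off-diagonal, i.e.\ it anticommutes with $\prod_i\sigma^z_i$, which for a product of Pauli operators means an \emph{odd} number of transverse ($\sigma^x,\sigma^y$) factors; such operators span a space of dimension $2^{2N-1}$. Each minimal constraint $\Effect{s_i^\sigma}\otimes\Effect{b}$ (and its mirror) pairs a parity-odd factor on the $\Effect{s_i^\sigma}$ side with a $\sigma^z$-diagonal, hence parity-even, superselected effect on the complement (Corollary~\ref{cor:effects}), so it has an odd number of transverse factors and lies in this space. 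The main obstacle I anticipate is precisely this matching step: one must confirm that the $2^{2N-1}$ minimal constraints are linearly independent and therefore span the full parity-odd subspace, so that no parity constraint is omitted and none is spurious. Once this is established the minimal superselection and \Fqt coincide as subtheories of the qubit \Qt, proving the proposition.
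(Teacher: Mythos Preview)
Your proof is correct and follows essentially the same route as the paper: part (a) matches the paper's identification of the two single-\Lfm constraints $\sigma^x,\sigma^y$, and part (b) is the same dimension count (the paper phrases it as $V^\sigma_{\sB_N}=\tfrac12 D_{\sB_N}$ and plugs into Eq.~\eqref{eq:constraints-on-the-constraints-a}, which is exactly your $2^{2N-1}$ computation).

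Your final paragraph goes beyond what the paper does, and in fact beyond what is needed. Definition~\ref{def:minimal-superselection} is purely dimensional, and the proof of Proposition~\ref{prop:constraints-consistency} already guarantees that the ``local'' constraints $\Effect{s_i^\sigma}\otimes\Effect{b}$ and $\Effect{a}\otimes\Effect{s_j^\sigma}$ are always constraints of the composite system in \emph{any} superselected theory. Hence once you know that the \Fqt constraint space has dimension equal to the lower bound, it must coincide with the span of the local constraints; there is no separate independence or spanning check to perform. Your caution is not wrong, just redundant.
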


\begin{proof}
  Let $\sA$ and $\sB$ be a 1-\Lfm and a 1-qubit system, respectively. We have noticed already that, due to the
  parity prescription, \( \sA \) has only two pure states $\KetBra{0},\KetBra{1}$. Then the density matrices
  $\rho\in\SetStates(\sA)$ shall be diagonal
  \begin{equation}\label{eq:local-constraints}
    \Tr[\sigma^x \rho]=\Tr[\sigma^y \rho] =0, 
  \end{equation}
  showing that the superselection on the elementary systems is as in
  Eq.~\eqref{eq:Fermionic-linear-constraints} with
  \begin{equation*}
    \sA = \sigma(\sB)\qquad\qquad D_{\sA} = D_{\sB}-V_{\sB}^{\sigma},\qquad V_{\sB}^{\sigma}=2.
  \end{equation*}

  Now we have to show that the whole \Fqt is built bottom-up extending in the minimal way the constraint
  \eqref{eq:Fermionic-linear-constraints} on the composite systems. Let $\sB_N$, $\sB_M$ be two systems made
  of $N$ and $M$ qubits respectively.  According to definition \ref{def:minimal-superselection} we can simply
  check that \Fqt achieves the lower bound of Eq.~\eqref{eq:constraints-on-the-constraints-a},
  \begin{equation}\label{eq:Fermionic-minimal-check}
    V_{\sB_N \sB_M}^{\sigma} = D_{\sB_N} V_{\sB_M}^{\sigma} + D_{\sB_M} V_{\sB_N}^{\sigma} - 2
       V_{\sB_N}^{\sigma} V_{\sB_M}^{\sigma}.
  \end{equation}
  Using Eq.~\eqref{eq:Fermionic-dimension} we have \( V_{\sB_N}^\sigma = \frac{1}{2} D_{\sB_N} \) and  \(
  V_{\sB_M}^\sigma = \frac{1}{2} D_{\sB_M} \), hence Eq.~\eqref{eq:Fermionic-minimal-check} is satisfied.
\end{proof}

Since \Qt is local-tomographic, thanks to Prop.~\ref{prop:minimal-maximal} the \Fqt is maximally
bilocal-tomographic. This can also be verified counting the number of independent local and 2-local effects
for a system of $N$ {\Lfm}s and noticing that it is exactly its states space dimension:
\begin{equation*}
  \sum_{k=0}^{\lfloor N/2\rfloor} \binom {N} {2k}D_{1\,\Lfm}^{N-2k}\tilde
    D_{2\,\Lfm\textrm{s}}^k=2^{2N-1}=D_{N\,\Lfm\textrm{s}}.  
\end{equation*}

We emphasize that \Fqt provides an example of a bilocal-tomographic theory whose systems do not satisfy the
dimensional prescription in Ref.~\citen{hardy2012limited}. Indeed, after showing that the dimension of the
non-local component of a bipartite system $\tilde D_{\sA\sB}=D_{\sA\sB}-D_{\sA}D_{\sB}$ can be factorized as
$\tilde D_{\sA\sB}=L_{\sA}L_{\sB}$, and assuming that the two functions
\begin{equation*}
  D_{\sA}+L_{\sA},\qquad  D_{\sA}-L_{\sA},
\end{equation*}
are strictly increasing functions of the number of perfectly discriminable states $d_{\sA}$, the authors of
Ref.\citen{hardy2012limited} prove that in a bilocal-tomographic theory, the dimension of the system $\sA$
must be
\begin{equation}\label{eq:HW-bilocal-dimension}
  D_\sA=\frac{1}{2}(d_{\sA}^r+d_{\sA}^s),
\end{equation}
for some integers $r,s$ satisfying $r\geq s>0$. This is not true for the Fermionic computation where for
example $D_{\sA}=8$ cannot be achieved in this way. The strict monotonicity of the function $D_{\sA}-L_{\sA}$
is too restrictive and excludes the Fermionic case from the set of admissible bilocal-tomographic theories,
since we have $D_{\sA}-L_{\sA}=0$ for any system $\sA$ made of an arbitrary number $N$ of {\Lfm}s.

Not satisfying Local Tomography, the \Fqt does not satisfy the property of Remark~\ref{rem:local-tomography}
in Section~\ref{sub:lobi}. Indeed consider the unitary maps on a single \Lfm system given by $ I $, $ \sigma^x
$, $ \sigma^y $, $ \sigma^z $. Being \( \Ket{0} \), \( \Ket{1} \) the only pure normalized states of a single
\Lfm, the maps $ \sigma^x $ and \( \sigma^y \) (and similarly \( I \) and \( \sigma^z \)) are equal when
evaluated on a one-\Lfm system $\sA$; pictorially:
\begin{equation*}
  \begin{aligned}
    \Qcircuit @C=1em @R=.7em @! R { 
      \prepareC{\rho} & \ustickcool{\sA} \qw & \gate{\sigma^x} & \ustickcool{\sA}\qw & \qw
    }
  \end{aligned} 
  =
  \begin{aligned}
    \Qcircuit @C=1em @R=.7em @! R { 
      \prepareC{\rho} & \ustickcool{\sA} \qw & \gate{\sigma^y} & \ustickcool{\sA}\qw & \qw
    }
  \end{aligned},
  \qquad \forall \State{\rho} \in \SetStates(\sA).
\end{equation*}
We need a $n$-\Lfm state with $n\geq0$ to verify that the two maps \( \sigma^x \) and \( \sigma^y \) are
indeed different; \eg considering \( \Ket{\Psi} = \alpha \Ket{00} + \beta \Ket{11} \) we get \( \sigma^x
\otimes I \Ket{\Psi} \ne e^{i \gamma} \sigma^y \otimes I \Ket{\Psi} \) for every \( \gamma \in \Reals \);
pictorially
\begin{equation*}
  \begin{aligned}
    \Qcircuit @C=1em @R=.7em @! R { 
      \multiprepareC{1}{\Psi} & \ustickcool{\sA} \qw & \gate{\sigma^x} & \ustickcool{\sA}\qw
        & \qw \\
      \pureghost{\Psi} & \qw & \qw & \ustickcool{\sB} \qw 
        & \qw 
    }
  \end{aligned} 
  \ne
  \begin{aligned}
    \Qcircuit @C=1em @R=.7em @! R { 
      \multiprepareC{1}{\Psi} & \ustickcool{\sA} \qw & \gate{\sigma^y} & \ustickcool{\sA}\qw
        & \qw \\
      \pureghost{\Psi} & \qw & \qw & \ustickcool{\sB} \qw 
        & \qw 
    }
  \end{aligned}.
\end{equation*}

\subsubsection{Other superselected quantum theories}\label{s:other-superselected-theories}

It is worth mentioning that \Fqt is not the unique minimal superselection of \Qt.  Another example is given by
the {\em Real Quantum Theory} (\Rqt) defined \cite{hardy2012limited} as the restriction of the quantum case to
real matrices. The elementary system of \Rqt, with two perfectly distinguishable states, is denoted {\em
rebit} and its convex set of states is the disk obtained by the equatorial section of the qubit.  According to
Definition~\ref{def:superselection-rule}, the \Rqt is a superselection of the standard \Qt, being the
requirement of reality of a quantum state $\rho$ given by the linear constraint $\rho-\rho^T=0$, with $T$
denoting transposition with respect to a fixed basis taken as real.  Hence, if $\sA$ is the multipartite
system of $N$ \emph{rebits} having Hilbert dimension $d_{\sA}=2^N$, the dimension of
$\SetStates_{\Reals}(\sA)$ is given by
\begin{equation*}
  D_{\sA} = d_{\sA} ( d_{\sA} + 1 ) / 2.
\end{equation*}
Thus, if $\sB$ is a system of $N$ qubits, one has $ \sA = \sigma(\sB)$ where the number of linear constraints
for the system $\sA$ is given by
\begin{equation*}
  V_{\sA}^{\sigma}=D _{\sB}-D _{\sA}
\end{equation*}
One can easily check that also \Rqt is minimally superselected; indeed, for a couple of systems $\sA$, $\sC$ of
$N$ and $M$ rebits respectively, the number of constraints for the composite system $\sA\sC$
\begin{equation*}
  V_{\sA\sB}^{\sigma}=\tfrac{1}{2} d_{\sA}d_{\sB}(d_{\sA}d_{\sB}-1)
\end{equation*}
saturates the lower bound of Eq.~\eqref{eq:constraints-on-the-constraints-b}. Hence, from the linear
constraint of a 1-rebit system \( \Tr[\sigma^y\rho] = 0 \), we build the whole \Rqt by taking the minimal
extension of this constraint to the composite systems.  Therefore, according to
Proposition~\ref{prop:minimal-maximal}, the \Rqt is maximally bilocal-tomographic (see also
Ref.~\citen{hardy2012limited}).

In Proposition \ref{prop:minimal-maximal} we have considered the extremal cases of minimal and maximal
superselection, which lead respectively to bilocal- and local-tomographic theories. On the other hand, there
is a full range of possible constraints between these two cases---\ie $V^\sigma_{\sA\sB}$ strictly included in
the bounds of Eqs.~\eqref{eq:constraints-on-the-constraints-a} and
\eqref{eq:constraints-on-the-constraints-b}--- where one can find superselected theories with different
degrees of discriminability. 

As already pointed out at the end of Section~\ref{sec:JW}, the parity superselection of the \Fqt is trivially
translated in the \Qt representation by allowing only pure qubits states that are projections on
superpositions of vectors with total occupation numbers equal modulo $2$. A more general scenario is given by
considering a \emph{number superselected \Qt}, namely superselected \Qt theories of qubits where the
admissible pures states are projections on superpositions of vectors with the same total occupation numbers.

\begin{proposition}
  There is no $n\in\mathbb{N}$ such that a number superselected \Qt is $n$-local tomographic.
\end{proposition}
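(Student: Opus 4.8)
The plan is to prove the statement by a dimensional argument: for each fixed $n$ I would exhibit a composite system on which the linear span of the $n$-local effects is strictly smaller than the linear span of the states, so that some pair of states escapes discrimination by $n$-local effects. Concretely I would work with the system $\sA$ of $N$ {\Lfm}s (equivalently $N$ qubits carrying the number superselection), for which the number prescription forces $\SetStates_\Reals(\sA)=\SetEffects_\Reals(\sA)=\bigoplus_{k=0}^N\Herm(\Hilb{F}_N^k)$, where $\Hilb{F}_N^k$ is the sector spanned by the Fock vectors of total occupation number $k$. Since $\dim\Hilb{F}_N^k=\binom{N}{k}$, the state space has dimension $D_N=\sum_{k=0}^N\binom{N}{k}^2=\binom{2N}{N}$, and this is the quantity the effects must match in order to achieve tomography.

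Next I would describe, in the spirit of Theorem~\ref{th:jellyfish}, the linear space accessible to $n$-local effects. Because every effect respects the superselection, in the occupation-number basis an effect can only carry matrix elements between Fock strings $s,t$ of equal total weight. Moreover a single tensor product of block effects, each supported on at most $n$ modes, reaches the coherence $\Ket{s}\Bra{t}$ only when the modes admit a partition into blocks of size $\le n$ inside each of which $s$ and $t$ carry the same occupation number. Hence the $n$-local span is the sum, over all partitions into blocks of size $\le n$, of the corresponding ``block-balanced'' coherence subspaces, and the argument reduces to counting which equal-weight coherences survive this restriction.

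The core step is then to show that, for every fixed $n$, there remain equal-weight coherences that no such partition can balance once $N$ is large enough, so that the $n$-local span stays strictly below $\binom{2N}{N}$. I would isolate an explicit family of pairs $(s,t)$ differing in many positions and argue that any placement of their differing modes into blocks of bounded size $n$ must leave some block with unequal occupation number; building the two states $\State{\rho_0}\pm\varepsilon(\Ket{s}\Bra{t}+\Ket{t}\Bra{s})$ around an interior state $\State{\rho_0}$ then yields a pair indistinguishable by all $n$-local effects. Letting $n$ range over $\mathbb{N}$ gives the claim.

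The main obstacle is precisely this last combinatorial step. The naive strategy of pairing each ``$s_i=1,t_i=0$'' mode with an ``$s_i=0,t_i=1$'' mode already fills every two-element block with equal weight, so the delicate point is to rule out such recombinations of differing modes across blocks; equivalently, one must verify that the partitions lying outside the reduced class list of Theorem~\ref{th:jellyfish} still fail to close the dimensional gap. Controlling this recombination uniformly in $n$, and thereby certifying that the deficit $\binom{2N}{N}-\dim(\text{$n$-local span})$ is positive for a suitable $N(n)$, is the crux of the proof.
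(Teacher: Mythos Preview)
Your worry in the final paragraph is not a technical obstacle to overcome but the reason the approach fails. With the system decomposed into single modes, number-superselected \Qt \emph{is} $2$-local tomographic (hence $n$-local for every $n\ge 2$), so your dimensional deficit is identically zero for all $N$. Indeed, any admissible coherence $\KetBra{s}[t]$ has $|s|=|t|$, so the set $\{i:s_i\neq t_i\}$ splits into equally many positions with $(s_i,t_i)=(1,0)$ and with $(s_i,t_i)=(0,1)$; pairing them one-to-one yields $2$-mode blocks on which the needed factor $\KetBra{10}[01]$ is number-preserving, while on the agreeing positions diagonal singletons suffice. The tensor product of these factors equals $\KetBra{s}[t]$ exactly, so the $2$-local span already exhausts $\bigoplus_k\Herm(\Hilb{F}_N^k)$. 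There is nothing to ``rule out'': the recombination you flagged is genuine, not an artifact.

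The paper avoids this by choosing the subsystems differently. It takes $n{+}1$ subsystems $\sB_1,\dots,\sB_{n+1}$, where $\sB_1,\dots,\sB_n$ are elementary but $\sB_{n+1}$ is itself a composite of $n$ modes, treated as a single atomic block for the locality count. The explicit pair
\[
\Ket{\psi_\pm}=\Ket{0}_{\sB_1}\cdots\Ket{0}_{\sB_n}\Ket{1,\dots,1}_{\sB_{n+1}}\pm\Ket{1}_{\sB_1}\cdots\Ket{1}_{\sB_n}\Ket{0,\dots,0}_{\sB_{n+1}}
\]
then differs only in a coherence that changes the occupation number on \emph{every} $\sB_i$; any tensor factor of an $n$-local effect acts on at most $n$ of the $n{+}1$ blocks and must be number-preserving there, so it cannot carry this coherence. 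The idea you are missing is precisely to use a composite subsystem: making $\sB_{n+1}$ internally large but atomic in the partition forbids the cross-block pairing that rescues the single-mode case.
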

\begin{proof}
  For any $n$ we will present a suitable composite system \( \sB \coloneqq \sB_1\ldots\sB_n\sB_{n+1} \) and a
  couple of state \( \State{\psi_+}, \State{\psi_-} \in \SetStates(\sB) \) we cannot distinguish by means of
  $n$-local effects (see Definition~\ref{def:n-local-tomography}). Set $n$ be an arbitrary integer, then for
  $1 \leq i \leq n$ each subsystem $\sB_i$ is the elementary system of the number superselected \Qt, while
  $\sB_{n+1}$ is the parallel composition of $n$ of such elementary systems.  $\State{\psi_+}, \State{\psi_-}$
  are the pure states corresponding to the projections on the Hilbert space vectors
  \begin{equation*}
    \Ket{\psi_\pm} \coloneqq \Ket{0}_{\sB_1} \Ket{0}_{\sB_n} \Ket{1,1,\ldots,1}_{\sB_{n+1}} \pm
      \Ket{1}_{\sB_1} \Ket{1}_{\sB_n} \Ket{0,0,\ldots,0}_{\sB_{n+1}}.
  \end{equation*}
  There is no $n$-local effect able to discriminate the two states, \ie no discriminating effect has the form
  $E^{(n)} \otimes E^{(1)}$, with $E^{(n)}$ an effect for $n$ subsystems (hence either $E^{(n)} \in
  \Herm((\Complexes^2)^{\otimes n})$ or $E^{(n)} \in \Herm((\Complexes^2)^{\otimes (n-1)} \otimes
  (\Complexes^2)^{\otimes n})$), and $E^{(1)}$ a $1$-effect (and so either $E^{(1)} \in
  \Herm((\Complexes^2)^{\otimes n }$ or $E^{(1)} \in \Herm(\Complexes^2$). Indeed, since the two states differ
  only in the sign of the off-diagonal terms, a suitable effect to tell them apart should have a non null
  component \( \KetBra{\psi_+}[\psi_-] = \KetBra{0}[1]_{\sB_1} \otimes \ldots \otimes \KetBra{0}[1]_{\sB_n}
  \otimes \KetBra{1\ldots1}[0\ldots0]_{\sB_{n+1}} \). However $ \KetBra{\psi_+}[\psi_-] $ cannot be spanned by
  the tensor product of the  two effects \( E^{(1)} \), \( E^{(n)} \), since due to the superselection rule
  each \( E^{(1)} \), \( E^{(n)} \) have just the matrix elements \( \KetBra{s_1,\ldots,s_k}[t_1,\ldots,t_k]
  \) with \( \sum s_i = \sum t_i\).
\end{proof}

The previous result shows that a \Qt with number superselection has a cumbersome tomographic property: given a
$n$-partite system there is always a couple of states that cannot be discriminated without resorting to a
non-local effect involving all the subsystems.

\subsection{Fermionic entanglement}

Entanglement is commonly regarded as the peculiar trait of \Qt and it has been studied extensively also in
relation to the other quantum features. A pure state of a pair of quantum systems is called entangled if it
cannot be factorized, while a mixed state is entangled if it cannot be written as a mixture of factorized pure
states, \ie it is not {\em separable}. The main goal in the study of entanglement is to find criteria for
testing whether a state is separable or not (see for example the {\em partial transpose condition} proposed by
Peres in Ref.~\citen{PhysRevLett.77.1413}), and to provide consistent measures for quantifying entanglement.
Among the measures of entanglement considered in the literature we can cite the {\em entanglement of
formation} \cite{bennett1996mixed, hill1997entanglement, PhysRevLett.80.2245,wootters2001entanglement}, the
{\em distillable entanglement} \cite{PhysRevA.51.1015}, and the {\em relative entropy of entanglement}
\cite{vedral1997quantifying} (for a review on the entanglement measures see
Ref.~\citen{plenio2005introduction}).

Despite entanglement in \Qt has been largely investigated, the nature of entanglement in general {\Opt}s is
almost an unexplored field. Because of the physical relevance of the Fermionic field some
authors\cite{PhysRevA.76.022311} have recently wondered how separable states can be defined for Fermionic
systems, taking into account the non-local action of mode creation and annihilation operators. Here we study
the entanglement in \Fqt and show how the parity superselection derived in
Section~\ref{s:parity-superselection} affects the features of the resulting theory. 

While the notion of entangled state as a non-separable state---\ie a state that cannot be prepared by
\Locc---can immediately be generalized to arbitrary {\Opt}s, it is not clear whether this notion is
operationally relevant in the absence of local tomography or not. For example, it may be that in order to
discriminate an entangled state from a separable one, one needs bipartite effects, and then one cannot use
this kind of entanglement to violate Bell-like inequalities. The non triviality of the operational notion of
Fermionic entanglement has been the focus of Ref.~\citen{PhysRevA.76.022311}. There the authors propose four
different definitions of entanglement for Fermionic systems and provide a careful analysis of their mutual
relations. Fortunately, as we will see in this section, it turns out that in \Fqt any entangled state can be
discriminated from any separable one by local effects, provided that two copies of the state are available,
thus establishing non-separability as the unique notion of entanglement in \Fqt.

Once an \Opt is provided with a notion of ``entangled state'', the amount of entanglement in a given state of
the theory should be quantified in operational terms. Having the notion of \emph{entanglement of formation} a
clear operational interpretation, here we will extend this measure of quantum entanglement to the Fermionic
case.  The entanglement of formation, introduced in Ref.~\citen{bennett1996mixed} and in
Ref.~\citen{hill1997entanglement}, focuses on the resources needed in order to generate a given amount of
entanglement when state manipulation is restricted to \Locc. In \Qt all measures of entanglement for bipartite
states refer to a standard unit: the {\em ebit}, which is the amount of entanglement of a bipartite
\emph{singlet state}. The entanglement of formation of a quantum state $\rho$ represents the minimum number of
ebits needed to achieve a decomposition of $\rho$ into pure states by means of \Locc, where the minimization
is over all possible decompositions. The constraint of \Locc plays a fundamental role in order to view
entanglement as a \emph{resource}.  Indeed, the amount of entanglement does not increase under \Locc
transformations, inducing a hierarchy of states based on their ``usefulness'' under \Locc operations.
Accordingly a state is called \emph{maximally entangled} when it can be transformed into any other by means of
\Locc. In \Qt we can find a single two-qubit state that can be used to achieve all the other two-qubit states
by means of \Locc: the singlet state. As soon as we increase the dimension of quantum systems, it is no longer
possible to identify a unique state we can use to get all the others\cite{Kraus}.  The customary notion of
maximally entangled state has to be superseded by that of \emph{maximally entangled set} (\Mes) of $n$-partite
states, namely the set of states maximally useful under \Locc manipulation, \ie any state outside this set can
be obtained via \Locc from one of the states within the set, and no state in the set can be achieved from any
other state via \Locc. It is still not clear in \Qt whether the \Mes is stable once we study the asymptotic
quantification of entanglement. In a general \Opt we cannot expect that the \Mes for bipartite states reduces
to a unique maximally entangled state, as in \Qt for bipartite qubit entanglement.

Allowing for classical communication in \Locc implies that \Locc protocols are not completely local,
introducing a complicate structure whose complete characterization is still an open problem in \Qt. A full
theory of Fermionic entanglement would require the introduction of similar notions, involving a complete
analysis of the transformations of states under \Locc, which goes beyond the scope of this paper.
Nevertheless we can find some relevant features of entanglement in \Fqt, and show that \Fqt and \Qt are very
different from the entanglement point of view.

Here is a brief summary of the results presented in this section:
\begin{enumerate}[a.]
  \setlength{\itemsep}{1pt} \setlength{\parskip}{2pt}
  \setlength{\parsep}{1pt}
  \item  non-separability is the unique notion of entanglement in \Fqt;
  \item there is a simple linear criterion for testing the full separability of states;
  \item Fermionic \Locc correspond to quantum \Locc with a polynomial overhead of classical communication;
  \item \Mes are needed also for bipartite states; 
  \item there are mixed states that are not separable and with maximal entanglement of formation; 
  \item there are states with maximal entanglement of formation that do not belong to a \Mes; 
  \item the monogamy of entanglement is violated (taking as measure the Fermionic concurrence in relation with
    the Fermionic entanglement of formation).
\end{enumerate}
Some of these results can also be found in Ref.~\citen{d2013Fermionic}. 

Again, in the following we exploit the \Jwt to represent the states of the \Fqt and we will drop the $J$
symbol for the sake of clarity.

\subsubsection{Non-separability as the unique notion of Fermionic entanglement}

We show that in \Fqt any entangled state can be discriminated from any separable one by local effects,
provided that two copies of the state are available. This feature stems from Theorem \ref{th:jellyfish} and
indicates non-separability as the unique notion of entanglement in \Fqt. 

Suppose that two states $\rho$ and $\sigma$ in $\SetStates(\sA\sB)$ are different. This implies that there
exists an effect $a\in\SetEffects(\sA\sB)$ such that $\RBraKet{a}{\rho}\neq\RBraKet{a}{\sigma}$. Either $a$ is
in $\SetEffects_{\mathbb R}(\sA)\otimes\SetEffects_{\mathbb R}(\sB)$, in which case local measurements are
sufficient to discriminate between $\rho$ and $\sigma$, or $a$ has a genuinely bipartite term in
$\widetilde\SetEffects_{\mathbb R}({\sA\sB})\coloneqq\SetEffects_{\mathbb R}(\sA\sB) /\SetEffects_{\mathbb
R}(\sA)\otimes\SetEffects_{\mathbb R}(\sB)$, where the quotient is modulo the equivalence relation $a\sim b$
iff $a-b\in\SetEffects_{\mathbb R}(\sA)\otimes\SetEffects_{\mathbb R}(\sB)$. This implies that if we have to
discriminate between $\rho\otimes\rho$ and $\sigma\otimes\sigma$, we need an effect in
$\widetilde\SetEffects_{\mathbb R}({\sA\sB})\otimes\widetilde\SetEffects_{\mathbb R}({\sA'\sB'})$.  Now, by
theorem \ref{th:jellyfish}, this space is also spanned by functionals in $\widetilde\SetEffects_{\mathbb
R}({\sA\sA'})\otimes\widetilde\SetEffects_{\mathbb R}({\sB\sB'})$.  Finally, this means that a factorized
effect $c\otimes d$ with $c\in\SetEffects(\sA\sA')$ and $d\in\SetEffects(\sB\sB')$ is sufficient to detect
entanglement between Alice's systems $\sA\sA'$ and Bob's $\sB\sB'$. Any state that is not separable is then
actually entangled in any operational sense, namely its statistics on \Locc effects is different from that of
any separable state.  Notice also that two copies of the state are sufficient to detect entanglement.

\subsubsection{Full Separability criterion for multi-\Lfm states}\label{s:separability-criterion}

Unlike \Qt, \Fqt admits a linear criterion for establishing whether a state of many {\Lfm}s is fully
separable.  By definition, a state of $N$ {\Lfm}s is fully separable if it can be written as a convex
combination of product states, namely
\begin{align}\label{eq:full-separability}
  \rho=\sum_{i=1}p_i \rho^{(i)}_1\otimes\rho^{(i)}_2\otimes\cdots\otimes\rho^{(i)}_N, \qquad 
    \text{with } \sum_i p_i=1,\ p_i\geq 0. 
\end{align}
Since the local states of the $i$-th \Lfm are convex combination of $\KetBra{0}_i$ and $\KetBra{1}_i$, an
arbitrary $N$-{\Lfm}s state is fully separable if and only if it is diagonal in the Fock basis of vectors
$\ket{s_1,\ldots,s_N}$.  If we consider now a state of a system $\sA$ made of $N$ composite systems
$\sA_1,\sA_2,\dots,\sA_N$, by definition a state of $\sA$ is separable if it can be expressed as in
Eq.~\eqref{eq:full-separability}, with $\rho^{(i)}_j\in\SetStates(\sA_j)$. Then it is clear that a necessary
condition for separability is that the full state $\rho$ commutes with all local parity operators. Moreover, a
state $\rho\in\SetStates(\sA)$ that commutes with local parity operators is separable if and only if the
projections of $\rho$ in every parity sector correspond to density matrices of quantum separable states.

\subsubsection{Fermionic \Locc}\label{sec:locc}
We will now show that every \Locc protocol in the \Fqt is simulated by a \Locc protocol in \Qt. Notice that we
can find three classes of \Fqt transformations: (i) transformations whose Kraus operators are even (\ie
superpositions of products of even number of field operators), (ii) transformations whose Kraus operators are
odd, and (iii) transformations with both even and odd Kraus operators.  We can then refine every
transformation $\Transformation{T}$ to a test $\{\Transformation{T}_e,\Transformation{T}_o\}$ where
$\Transformation{T}_e$ has only even Kraus operators, while $\Transformation{T}_o$ has only odd ones. Thanks
to this decomposition we can prove the following lemma.

\begin{proposition}\label{p:Fermionic-locc} 
  Every Fermionic \Locc corresponds to a quantum \Locc on qubits under \Jwt.
\end{proposition}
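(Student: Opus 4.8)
The plan is to reduce a Fermionic \Locc protocol, round by round, to a qubit \Locc protocol by tracking how the \Jwt maps \emph{local} Fermionic operations of definite parity. First I would fix once and for all a global ordering of the {\Lfm}s that groups together the modes of each party (all of Alice's modes, then all of Bob's, and so on), and use the corresponding \Jwt $J$. The key preliminary observation, established by the cancellation of Jordan--Wigner strings, is that under this ordering the image $J(K)$ of a local Kraus operator $K$ acting on party $p$ is \emph{local on $p$'s qubits} whenever $K$ is even, whereas for odd $K$ one has $J(K)=\big(\prod_{q<p}P_q\big)B$, with $B$ local on $p$'s qubits and $P_q=\prod_{k\in\chi_q}\sigma^z_k$ the (local) parity operator of each earlier party $q$, where $\chi_q$ denotes $q$'s modes. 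Indeed each field operator contributes a factor $\prod_{k\in\chi_q}\sigma^z_k$ on every earlier party's region, so an even (resp.\ odd) number of fields leaves $P_q^{\mathrm{even}}=I$ (resp.\ $P_q^{\mathrm{odd}}=P_q$), while the string on the parties after $p$ never appears.

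Next I would invoke the even/odd refinement introduced above: by Theorem~\ref{t:fqt-kraus} every local Fermionic test is operationally equivalent to one whose Kraus operators have definite parity, and each branch $\Transformation{T}$ splits into $\{\Transformation{T}_e,\Transformation{T}_o\}$ with purely even, resp.\ purely odd, Kraus operators. Applying $J$ to the even branch gives $J(\Transformation{T}_e)=\sum_i(JE_i)\cdot(JE_i)^\dagger$, a quantum operation local on the acting party. Applying $J$ to the odd branch and factoring the common parity string out of the Kraus sum gives
\begin{equation*}
  J(\Transformation{T}_o)=\Big(\prod_{q<p}\mathcal{P}_q\Big)\circ\mathcal{B},
  \qquad
  \mathcal{B}(\cdot)=\sum_i B_i\cdot B_i^\dagger,
\end{equation*}
where $\mathcal{B}$ is local on $p$ and each $\mathcal{P}_q(\tau)=P_q\,\tau\,P_q$ is a \emph{single} local unitary conjugation on party $q$.

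The simulation then writes itself. I would replace each local Fermionic test by its even/odd refinement; the acting party $p$ performs the local qubit operation $J(\Transformation{T}_e)$ or $\mathcal{B}$ and records the even/odd flag. This flag is broadcast as one extra classical bit to every earlier party $q<p$, who applies the local involution $P_q$ once, if and only if the flag is ``odd''. Since $\mathcal{P}_q$ is local and the flag is a genuine outcome that is classically communicated, every step is a legitimate qubit \Locc move, and the composite reproduces exactly the \Jwt image of the Fermionic branch. Because effects are even (Corollary~\ref{cor:effects}), the terminal measurements map to local qubit effects, so the entire protocol---including its outcome statistics---is matched by a qubit \Locc protocol at the cost of one communicated bit per operation, \ie a polynomial overhead.

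The main obstacle is precisely the non-locality of the \Jwt on odd operators (the Feynman problem in miniature): a locally Fermionic odd operation is mapped to a qubit operation acting on the other parties through the Jordan--Wigner string. The proof hinges on recognizing that this string is nothing but the product of the \emph{local} parity operators of the earlier parties, so it can be reproduced by classically-controlled local corrections. The delicate bookkeeping step is to verify that the two parity strings flanking the Kraus sum combine into a single conjugation $\mathcal{P}_q$---a nontrivial local unitary applied \emph{once} after $\mathcal{B}$---rather than a before-and-after pair that would cancel to the identity; getting this single controlled correction right is what makes the qubit \Locc protocol reproduce $J(\Transformation{T}_o)$ exactly.
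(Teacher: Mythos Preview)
Your proof is correct and follows essentially the same route as the paper: fix a mode ordering grouped by party, observe that under the corresponding \Jwt an even local Kraus operator stays local while an odd one factors as a local piece times the product of the earlier parties' parity operators, refine each local test into its even/odd branches, and have the acting party broadcast the parity bit so that each earlier party applies its local $P_q$ correction. The only cosmetic differences are that the paper orders Bob's modes first in the bipartite case (so Bob never needs corrections and only Alice sends bits), and in the $n$-partite case phrases the overhead as $n-1$ bits of cumulative parity passed down the chain rather than a broadcast; your uniform multipartite treatment is arguably cleaner, and the overhead is polynomial either way.
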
 
\begin{proof} 
  Let $\sC=\sC_1\ldots\sC_N$ be the Fermionic system made of $N$ {\Lfm}s, and let $\sA$ be one subsystems
  $\sA\coloneqq\sC_{i_1}\ldots\sC_{i_M}$ with \( i_j \in \chi_\sA \subseteq \Gamma_N \coloneqq \{ 1, \ldots, N
  \} \) made of $M$ {\Lfm}s.  Consider now the most general bipartite \Locc on $\sC$ between Alice controlling
  the subsystem $\sA$ and Bob controlling the subsystem $\sB$ complementary to $\sA$ (\ie $\sC = \sA\sB$). One
  can always sort the {\Lfm}s in the Jordan-Wigner representation so that the first $N-M$ {\Lfm}s correspond
  to Bob's subsystem $\sB$. Denoting with $E_\sX$ and $O_\sX$ an even and an odd Kraus operator for the
  subsystem $\sX$, the \Jwt maps single-Kraus transformations local on $\sA$ and $\sB$ in the following way 
  \begin{align}
    & J(O_A)=\bigotimes_{i\in\Gamma_N \setminus \chi_\sA}\!\!\!\!\!\! \sigma_i^z\ \otimes O^\prime_A,\qquad & 
      J(E_A)=I_B\otimes E^\prime_A, \label{eq:Fermionic-quantum-locc1} \\
    & J(O_B)=O^\prime_B \otimes I_A,\qquad & J(E_B)=E^\prime_B\otimes I_A, \label{eq:Fermionic-quantum-locc2} 
  \end{align} 
  where $O^\prime_{\sX}$, $E^\prime_{\sX}$ correspond to Kraus operators of quantum maps on the subsystem
  $\sX$.  Equations \eqref{eq:Fermionic-quantum-locc1}-\eqref{eq:Fermionic-quantum-locc2} show that if Alice
  and Bob perform a Fermionic \Locc protocol, this is equivalent to a quantum \Locc protocol in the \Jwt
  representation: Indeed, whenever Alice needs to apply a Fermionic transformation $\Transformation{T}$, she
  can achieve it in the qubit case by performing the test $\{\Transformation{T}_e,\Transformation{T}_o\}$, and
  then she just needs to tell Bob whether the event $o$ or $e$ occurred: in this way Bob knows if he has to
  apply a string of $\sigma_z$ operators locally on his subsystem or not. On the other hand, Bob's Fermionic
  transformations are local also in the qubit case. We conclude that the \Jwt mapping preserves the \Locc
  nature of bipartite transformations, with an overhead of one classical bit at each round in order to
  communicate the parity of the Kraus operators.
  
  In general, one can consider an $n$-partite \Locc. In this case let $\sA_1, \ldots,\sA_n$ be the $n$
  subsystems partitioning $\sC$, and let us sort the {\Lfm}s such that the ones belonging to the system
  $\sA_i$ precede the ones of $\sA_j$ if $i<j$. The $i$th party needs a bit of classical information to
  communicate to the $(i-1)$th party the total parity of all the Kraus operators occurred up to the
  $(n-i+1)$th round, in this way the $(i-1)$th party knows whether he needs to apply the string of
  $\sigma_z$'s on his subsystems or not.  Iterating this process we find that a Fermionic $n$-partite \Locc
  corresponds under \Jwt to an $n$-partite qubit \Locc with an overhead of $n-1$ bits of classical
  information.
\end{proof}

\subsubsection{Maximally entangled sets for two {\Lfm}s}\label{ss:Fermionic-mes}

As already stated, the concept of ``maximally entangled state'' has to be superseded by that of \Mes
\cite{Kraus} even for two {\Lfm}s.  In \Fqt, a single \Lfm $\rho$ is operationally equivalent to a bit, so we
can perform locally only the unitary gates with Kraus
\begin{equation*}
  \sigma^x, \qquad \sigma^y, \qquad  \sin \vartheta\ I + i \sin \vartheta\ \sigma^z, \quad
    \vartheta\in\left[0,2\pi\right),
\end{equation*}
which do not allow to transform the vectors $\ket{0}$, $\ket{1}$ into any superposition. Thus, given a state
$\KetBra{\Psi}$ with Schmidth decomposition \( \Ket{\Psi}=\alpha\ket{00}+\beta\ket{11} \), one cannot change
the magnitude of the coefficients $\alpha$ and $\beta$ by local unitary operations. By acting locally one can
simply change the parity sector by means of the Kraus \( \sigma^x \), \( \sigma^y \) (which locally are the
same), and apply an arbitrary relative phase $\exp(2 i \vartheta)$ via the Kraus \( \sin \vartheta\ I + i \sin
\vartheta\ \sigma^z \).

We can moreover get any arbitrary factorized state of the \Fqt---\ie projections on \( \Ket{00} \), \(
\Ket{01} \), \( \Ket{10} \), \( \Ket{11} \)---from any state in the \Mes by means of \Locc operations: Alice
measures her \Lfm in the computational basis by the Kraus operators $\{\KetBra{0},\KetBra{1}\}$ and
conditionally on the outcome she tells Bob the local operation he has to apply on his \Lfm---\ie the identity
with Kraus $I$ or the bit flip with Kraus $\sigma^x$.  Clearly, one cannot do the opposite. Hence, examples of
\Mes's for two {\Lfm}s are given by $MES_0$, $MES_1$, which are defined as
\begin{align*}
  & MES_0 \coloneqq \{\KetBra{\Psi_{\alpha,\beta}} \mid
    \ket{\Psi_{\alpha,\beta}}:=\alpha\ket{00}+\beta\ket{11},\  \alpha,\beta > 0 \}, \\
  & MES_1 \coloneqq \{ \KetBra{\Psi_{\alpha,\beta}} \mid
    \ket{\Psi_{\alpha,\beta}}:=\alpha\ket{01}+\beta\ket{10}, \  \alpha,\beta > 0 \}.
\end{align*}

\subsubsection{The Fermionic entanglement of formation}

In the usual quantum theory scenario the \emph{entanglement cost} of a given, generally entangled, state
$\rho\in\SetStates(\sA\sB)$ shared by distant observers Alice and Bob quantifies the amount of resources
needed by the two parties in order to create the state $\rho$. Consider then the protocol
\begin{equation*}
  \KetBra{\Sigma}^{\otimes m}\xrightarrow{\Locc} \rho^{\otimes n}
\end{equation*}
where $m$ singlet states $\KetBra{\Sigma}$ are converted into $n$ copies of the target state $\rho$ by means
of \Locc. Perfect transformation by \Locc is usually impossible and one requires it only asymptotically, say
in the limit where the number of created copies of $\rho$ approaches infinity.  The entanglement cost $E_c$ is
thus defined as the optimal asymptotic ratio $r=m/n$. The last one is very difficult to compute, while the
\emph{entanglement of formation}, which also has an operational interpretation, can be more easily computed in
terms of the density matrix $\rho$.

The definition of entanglement of formation is based on the result of Ref.~\citen{PhysRevA.56.R3319} for the
entanglement cost of pure states. In the paper the authors show that the entanglement cost of a pure state
$\rho=\KetBra{\Psi}[\Psi]$ coincides with the von Neumann entropy of either of its marginal states, say
$E_c(\KetBra{\Psi})=S(\Tr_\sA\KetBra{\Psi})$. Therefore to produce $\KetBra{\Psi}^{\otimes n}$ one needs $m
\approx n S(\Tr_\sA\KetBra{\Psi})$ singlets with the equality achieved in the asymptotic limit.  The
entanglement of formation of a mixed state $\rho\in\SetStates(\sA\sB) $ is then defined as
\begin{equation}\label{eq:entanglement-of-formation-qt}
  E(\rho) \coloneqq \min_{\mathcal{D}_\rho} \sum_i p_i S(\Tr_\sA\KetBra{\Psi_i}),
\end{equation}
where
\begin{equation*} 
  {\mathcal D}_\rho\coloneqq \{\{ p_i, \Ket{\Psi_i} \}\mid \rho = \sum_i p_i \KetBra{\Psi_i} \}
\end{equation*}
is the set of all the pure decompositions of the mixed state $ \rho $. The operational
interpretation\footnote{Notice that the
  entanglement of formation of a mixed state $\rho$ is not proven to correspond to its entanglement cost, and
  in general it is $E(\rho)\geq E_c(\rho)$. However, in Ref.~\citen{hayden2001asymptotic} it has been shown
  that
  \begin{equation*}
    E_c(\rho)=\lim_{n\to\infty} E(\rho^{\otimes n})/n,
  \end{equation*}
  where the right hand side of the equality is the so called \emph{regularized entanglement of formation}.
  If the entanglement of formation turns out to be additive, the entanglement cost will be equal to the
  entanglement of formation.} 
of the entanglement of formation has been pointed out by Wootters in Ref.~\citen{wootters2001entanglement},
where it is noticed that
\begin{equation}\label{eq:quantum-entanglement-of-formation-op}
  E(\rho) \equiv \lim_{n\to\infty} m_n(\rho)/ n,
\end{equation}
with $m_n(\rho)$ the minimum number of singlet states needed by two parties to prepare via \Locc random tensor
products $\bigotimes_{l=1}^n\KetBra{\Psi_{i_l}}$ of states in a decomposition $\{p_i,\KetBra{\Psi_i}\}$ of
$\rho$, sampled by the distribution $p(i_1,\ldots,i_n)=p(i_1)\ldots p(i_n)$, minimized over all possible
decompositions:
\begin{equation*}
  \KetBra{\Sigma}^{\otimes m}\xrightarrow{\Locc,\,{\mathcal D}_\rho} \rho^{\otimes n}.
\end{equation*}

In the Letters \citen{hill1997entanglement, PhysRevLett.80.2245} it is also provided a formula for evaluating
the entanglement of formation \eqref{eq:entanglement-of-formation-qt} of a state $\rho$ just in terms of its
density matrix. For a mixed state $ \rho$ of two qubits one has
\begin{equation}\label{eq:entanglement-of-formation-concurrence}
  E(\rho)=\mathcal{E}(C(\rho))  
\end{equation}
with $ \mathcal{E}(x) \coloneqq h(\tfrac{1+\sqrt{1-x^2}}{2}) $, $ h $ the binary Shannon entropy, and the
expression of the \emph{concurrence} $C(\rho) $ depending only on the density matrix $\rho$ (see
Refs.~\citen{hill1997entanglement, PhysRevLett.80.2245} for the explicit formula of the concurrence). As for
the entanglement of formation, also the concurrence of a generally mixed state $\rho$ is given by
\begin{equation}\label{eq:concurrence-qt}
  C(\rho) \coloneqq \min_{\mathcal{D}_\rho} \sum_i p_i C(\KetBra{\Psi_i}).
\end{equation}
Both the entanglement of formation and the concurrence are zero if and only if the state $ \rho $ is
separable, and for two qubits they reach the maximum value $ 1 $ if and only if $ \rho $ is a maximally
entangled state.

In analogy to the quantum case we can define the operational Fermionic entanglement of formation. Given a
Fermionic state $\rho=p_0\rho_0+p_1\rho_1$ its entanglement of formation is defined as
\begin{align}\label{eq:Fermionic-entanglement-of-formation-op}
  E_{\mathrm{F}}(\rho)=\lim_{n\to\infty} m_n(\rho)/n,
\end{align}
with $ m_n(\rho)$ the minimum number of states in a Fermionic \Mes needed by two parties to prepare via
Fermionic \Locc random tensor products $\bigotimes_{l=1}^n\KetBra{\Psi_{i_l}}$ of states in a decomposition
$\{p_i,\KetBra{\Psi_i}\}$ of $\rho$, sampled by the distribution $p(i_1,\ldots,i_n)=p_{i_1}\ldots p_{i_n}$,
minimized over all possible decompositions:
\begin{align}\label{eq:prot0}
  \bigotimes_{k=1}^{m} \KetBra{\Sigma_k} \quad \left(\KetBra{\Sigma_k}\in\Mes\right)
    \xrightarrow{\Locc_{\mathrm{F}},\,{\mathcal D}_{\rho}^{\mathrm{F}}}    \rho^{\otimes n}.
\end{align}
It is important to notice that the Fermionic entanglement of formation of a mixed state corresponds to the
convex-roof extension of the Fermionic entanglement of formation of pure states, as follows
\begin{equation}\label{eq:Fermionic-entanglement-of-formation-op-mixed}
  E_{\mathrm{F}} (\rho) = \min_{\mathcal{D}_\rho^{\mathrm{F}}} \sum_i p_i E_F(\KetBra{\Psi_i}),
\end{equation}
where $ \mathcal{D}_\rho^{\mathrm{F}} $ is the set of all the pure decompositions of $ \rho $ satisfying the
parity superselection rule\footnote{In Ref.~\citen{caves} the authors do the same for \Rqt considering the
  decompositions $\mathcal{D}_\rho^{\mathrm{R}}$ on real states.}. For pure states, we have the following
result.

\begin{proposition}\label{prop:Fermentoffpure}
  For pure states $\KetBra{\Psi}$, the function 
  \begin{equation}\label{eq:entanglement-of-formation-pure-fqt}
    \tilde E_F(\KetBra{\Psi}):=S(\Tr_A\KetBra{\Psi})
  \end{equation}
  is a lower bound for the Fermionic entanglement of formation
  \eqref{eq:Fermionic-entanglement-of-formation-op}.
\end{proposition}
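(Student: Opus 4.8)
The plan is to establish $\tilde E_F(\KetBra{\Psi})=S(\Tr_A\KetBra{\Psi})\le E_F(\KetBra{\Psi})$ by feeding the Fermionic \Locc protocol that defines $E_F$ through the \Jwt into the qubit picture, where the von Neumann entropy of the marginal is a standard entanglement monotone. Before doing so I would first check that $\tilde E_F$ really is the qubit entanglement entropy of the \Jwt image of $\Ket{\Psi}$: for a pure state of definite parity the Fermionic discarding rule of Eq.~\eqref{eq:discarding} retains only the terms whose $\sB$-indices agree in bra and ket, and on those terms the reordering sign $f(s,t)$ is trivial, so the Fermionic reduced operator $\Tr_A\KetBra{\Psi}$ coincides with the ordinary qubit partial trace. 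Consequently $S(\Tr_A\KetBra{\Psi})$ equals the qubit marginal entropy, which is additive over tensor copies.

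Next I would invoke Proposition~\ref{p:Fermionic-locc}: the conversion $\bigotimes_{k=1}^{m}\KetBra{\Sigma_k}\to\KetBra{\Psi}^{\otimes n}$ underlying the protocol of Eq.~\eqref{eq:prot0} (with $\rho=\KetBra{\Psi}$ pure) is mapped by the \Jwt to a genuine quantum \Locc protocol, the sole overhead being the classical communication of Kraus parities, which transmits no entanglement. Each resource state $\KetBra{\Sigma_k}$ of the \Mes of \S\ref{ss:Fermionic-mes} is a two-\Lfm state shared one mode each by Alice and Bob, so its \Jwt image is a two-qubit pure state carrying at most one ebit; by additivity the input $\bigotimes_k\KetBra{\Sigma_k}$ therefore carries at most $m$ ebits across the Alice--Bob cut, while the output $\KetBra{\Psi}^{\otimes n}$ carries exactly $n\,S(\Tr_A\KetBra{\Psi})$.

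Finally I would apply the monotonicity of the marginal von Neumann entropy under quantum \Locc --- the same fact underlying the pure-state entanglement-cost identity of Ref.~\citen{PhysRevA.56.R3319}. Since entanglement entropy cannot be increased by \Locc, the output entanglement is bounded by the input, so $n\,S(\Tr_A\KetBra{\Psi})\le m+o(n)$ for every admissible protocol. Minimizing over protocols gives $m_n\ge n\,S(\Tr_A\KetBra{\Psi})-o(n)$, hence $m_n/n\ge S(\Tr_A\KetBra{\Psi})-o(1)$, and letting $n\to\infty$ in Eq.~\eqref{eq:Fermionic-entanglement-of-formation-op} yields $E_F(\KetBra{\Psi})\ge S(\Tr_A\KetBra{\Psi})=\tilde E_F(\KetBra{\Psi})$, as claimed.

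The step I expect to be the main obstacle is not the entropy monotonicity, which is classical, but the bookkeeping that makes the comparison legitimate: confirming that the Fermionic discarding of Eq.~\eqref{eq:discarding} reproduces the qubit marginal so that the two entanglement entropies coincide, and verifying through Proposition~\ref{p:Fermionic-locc} that each unit of the Fermionic \Mes resource accounts for at most one qubit ebit while the parity-communication overhead contributes nothing to the entanglement balance. With these identifications secured, the inequality follows at once.
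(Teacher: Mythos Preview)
Your argument is correct and follows essentially the same route as the paper: both rely on Proposition~\ref{p:Fermionic-locc} to regard the Fermionic \Locc as a qubit \Locc, on the fact that each element of the \Mes carries at most one ebit, and on the von Neumann marginal entropy as a \Locc monotone. The paper packages the last step slightly differently---it inserts an explicit preliminary conversion $\KetBra{\Sigma}^{\otimes m'}\to\bigotimes_k\KetBra{\Sigma_k}$ from quantum singlets and then invokes the pure-state entanglement-cost identity of Ref.~\citen{PhysRevA.56.R3319}, whereas you appeal to entropy monotonicity directly---but the content is the same, and your extra care in checking that the Fermionic discarding of Eq.~\eqref{eq:discarding} reproduces the qubit partial trace is a point the paper leaves implicit.
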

\begin{proof}
  First notice that ${E}_{\mathrm{F}}(\KetBra{\Psi})$ in Eq.~\eqref{eq:Fermionic-entanglement-of-formation-op}
  corresponds to the maximal rate of conversion of states in the Fermionic \Mes to the state $\KetBra{\Psi}$
  via fermionic {\Locc}s, as in Eq.~\eqref{eq:prot0}.  Now consider the following protocol for qubit states
  \begin{align}\label{eq:prot2}
    \KetBra{\Sigma}^{\otimes m^\prime} 
       \xrightarrow{\Locc}
    \bigotimes_{k=1}^{m} \KetBra{\Sigma_k} \quad\left(\KetBra{\Sigma_k}\in\Mes\right)    
       \xrightarrow{\Locc_{\mathrm{F}}} \KetBra{\Psi}^{\otimes n}
  \end{align}
  where $m^\prime$ quantum singlets are converted via quantum \Locc into $m$ states in the Fermionic \Mes that
  are then converted at a rate $E^{\mathrm{op}}_{\mathrm{F}}(\KetBra{\Psi})$ into $n$ copies of the target
  state $\KetBra{\Psi}$ via Fermionic \Locc. Since any Fermionic state in the \Mes has a quantum entanglement
  of formation smaller than (or equal to) 1, the protocol \eqref{eq:prot2} allows for a conversion rate
  \begin{align}
    \frac{m^\prime}{m}{E}_{\mathrm{F}}(\KetBra{\Psi})\leq {E}_{\mathrm{F}}(\KetBra{\Psi}) .
  \end{align}
  Moreover, since any Fermionic \Locc is also a quantum \Locc (see Proposition~\ref{p:Fermionic-locc}) the
  protocol \eqref{eq:prot2} is a particular instance of the general protocol for \Locc conversion of $m'$
  singlet states to $n$ copies of the target state $\KetBra{\Psi}$, and then we have
  \begin{align}
    \tilde E_{\mathrm{F}}(\KetBra{\Psi}) = E(\KetBra{\Psi}) \leq \frac{m^\prime}{m}
      E_{\mathrm{F}}(\KetBra{\Psi}) \leq {E}_{\mathrm{F}}(\KetBra{\Psi}).
  \end{align}
  This proves the thesis.
\end{proof}

Now, if we extend the definition of $\tilde E_\mathrm F(\rho)$ to mixed states by convex-roof extension, we
have
\begin{align}
  \tilde{E}_{\mathrm{F}} (\rho) &\coloneqq \min_{\mathcal{D}_\rho^{\mathrm{F}}} \sum_i p_i \tilde E_\mathrm{F} 
    (\KetBra{\Psi_i})=p_0 E(\rho_0) + p_1 E(\rho_1), \label{eq:entanglement-of-formation-fqt} \\
  C_{\mathrm{F}} (\rho) &\coloneqq \min_{\mathcal{D}_\rho^{\mathrm{F}}} \sum_i p_i C(\KetBra{\Psi_i})=p_0
    C(\rho_0) + p_1 C(\rho_1), \label{eq:concurrence-fqt}
\end{align}
where we introduced the quantity $C_{\mathrm{F}} (\rho)$ that extends the notion of {\em concurrence} to the
Fermionic case\footnote{The expressions \eqref{eq:entanglement-of-formation-fqt} and
  Eq.~\eqref{eq:concurrence-fqt} were already proposed in Ref.~\citen{PhysRevA.76.022311}. Here we show that
  Eq.~\eqref{eq:entanglement-of-formation-fqt} provides a lower bound for the Fermionic entanglement of
  formation.}. The last equalities in Eqs.~\eqref{eq:entanglement-of-formation-fqt} and
\eqref{eq:concurrence-fqt} are obtained upon noticing that the state $\rho$ admits the unique
parity-decomposition $\rho = p_0 \rho_0 + p_1\rho_1$, with $p_0+p_1=1$ and $\rho_0$, $\rho_1$ states in the
even and odd parity sector respectively, and that all decompositions in $\mathcal{D}_\rho^{\mathrm{F}}$ shall
preserve the probabilities $p_0 $ and $p_1 $. Moreover, since $\mathcal{D}_{\rho_i}^{\mathrm{F}} \equiv
\mathcal{D}_{\rho_i} $, we have $ \tilde{E}_{\mathrm{F}} (\rho_i) = E(\rho_i) $ and $ C_{\mathrm{F}} (\rho_i)
= C(\rho_i) $.  Notice that for a pure state $\KetBra{\Psi}$ we have $C_F(\KetBra{\Psi})=C(\KetBra{\Psi})$.

Now, thanks to proposition \ref{prop:Fermentoffpure}, we clearly have
\begin{equation}
  \sum_ip_i\tilde E_\mathrm F(\KetBra{\Psi_i})\leq  \sum_ip_i E_\mathrm F(\KetBra{\Psi_i}),
\end{equation}
for every Fermionic pure state decomposition \( \{ p_i, \Ket{\Psi_i} \} \) of $\rho$, and then by
Eq.~\eqref{eq:Fermionic-entanglement-of-formation-op-mixed}
\begin{equation}
  \tilde E_\mathrm F(\rho)\leq E_\mathrm F(\rho).
\end{equation}

Notice that, unlike in \Qt \cite{hill1997entanglement} and in \Rqt \cite{caves}, the quantities $
E_{\mathrm{F}} $ and $ C_{\mathrm{F}} $ do not satisfy the relation $ E_{\mathrm{F}}(\rho) =
\mathcal{E}(C_{\mathrm{F}}(\rho)) $ (see Eq.~\eqref{eq:entanglement-of-formation-concurrence}). On the other
hand it is $\tilde E_{\mathrm{F}}(\rho) \geq \mathcal{E}(C_{\mathrm{F}}(\rho)) $, and for a state $\Phi$ with
$C_{\mathrm{F}}(\Phi)=1$ we have $\tilde E_{\mathrm{F}}(\Phi) = \mathcal{E}(C_{\mathrm{F}}(\Phi))=1 $.
Therefore, when $C_\mathrm F(\rho)=1$, the quantity $\tilde E_\mathrm F $ coincides with the operational
Fermionic entanglement of formation $E_\mathrm F$.

\subsubsection{Mixed states with maximal entanglement of formation} 

Using the quantities $ E_{\mathrm{F}} $ and $C_{\mathrm{F}} $, and the separability criterion we can show that
in \Fqt there are mixed states with maximal entanglement of formation. Consider the state
\begin{equation}\label{eq:entangled-mixed}
  \Phi\coloneqq\tfrac{1}{4}\left(I\otimes I+\sigma^x\otimes\sigma^x\right),
\end{equation}
corresponding to the mixture with $ p=1/2 $ of the Fermionic pure states $\KetBra{\Psi_0}$ and
$\KetBra{\Psi_1}$ with 
\begin{equation*}
  \Ket{\Psi_0} = \tfrac{1}{\sqrt{2}} \left( \Ket{00} + \Ket{11} \right), \qquad \Ket{\Psi_1} =
    \tfrac{1}{\sqrt{2}} \left( \Ket{01} + \Ket{10} \right).
\end{equation*}
Despite being mixed, $\Phi$ has maximal entanglement of formation and concurrence
\begin{align*}
  &  E_{\mathrm{F}}(\Phi) = \frac{1}{2} E( \KetBra{\Psi_0})+\frac{1}{2} E(\KetBra{\Psi_1})=1,\\
  &  C_{\mathrm{F}}(\Phi) = \frac{1}{2} C( \KetBra{\Psi_0})+\frac{1}{2} C(\KetBra{\Psi_1})=1.
\end{align*}
It is easy to verify that $\Phi$ is not separable; indeed $ \Phi $ does not satisfy the separability criterion
of Section~\ref{s:separability-criterion}
\begin{equation*}
  \Tr[\rho_\mathrm{sep}(\sigma^i\otimes\sigma^j)]=0,\qquad i,j=x,y.
\end{equation*}
Other mixed maximally entangled states can be found by replacing every occurrence of $\sigma^x_i$ with an
arbitrary linear combination of $\sigma^x_i$ and $\sigma^y_i$ in Eq.\eqref{eq:entangled-mixed}.  Notice that
all these states, which have maximal entanglement of formation, do not belong to a \Mes (see Section
\ref{ss:Fermionic-mes}).

Also \Rqt has mixed maximally entangled states. Being the rebit defined by the linear constraint $
\Tr[\sigma_y\rho]=0 $, in \Rqt a mixed maximally entangled state is achieved by replacing $ \sigma_x  $ with
$\sigma_y$ in the state of Eq.~\eqref{eq:entangled-mixed} \cite{caves}.

Notice that the state \eqref{eq:entangled-mixed} is separable in \Qt, since it is the mixture with $ p=1/2 $
of the pure product states $\KetBra{\Pi_+}$ and $\KetBra{\Pi_-}$ with
\begin{equation}
  \Ket{\Pi_+}:=\Ket{+}\Ket{+}, \quad  \Ket{\Pi_-}:=\Ket{-}\Ket{-},\qquad \text{and  }
    \Ket{\pm}=\tfrac{1}{\sqrt{2}}\left(\Ket{0}\pm\Ket{1}\right).
\end{equation}
Such a decomposition is not allowed neither in \Fqt nor in \Rqt, because of the violation of their respective
superselection rules by the vectors $ \Ket{\pm} $.

\subsubsection{Violation of entanglement monogamy}

The shareability of correlations between many parties is one of the main differences between quantum and
classical correlations.  While in the classical information theory correlations can be shared among arbitrary
many parties, in \Qt a system maximally entangled with a second system cannot share quantum correlations with
a third one.  This has been dubbed the ``monogamy of entanglement'' and a big effort has been devoted to its
quantification: see Refs.~\citen{PhysRevA.60.4344, PhysRevA.62.062314, PhysRevA.62.050302, PhysRevA.65.010301,
PhysRevA.61.052306, raggio1989quantum, fannes1988symmetric, terhal2004entanglement}, or
Refs.~\citen{PhysRevA.69.022309, RevModPhys.81.865} for a recent review on the subject.

Entanglement monogamy is usually stated by means of inequalities involving some entanglement measures, \ie
\begin{equation}\label{eq:monogamy}
  M(\rho_{\sA\sB}) + M(\rho_{\sA\sC}) \leq M(\rho_{\sA(\sB\sC)}),
\end{equation}
where $M(\rho_{\sA\sB})$ is a measure of the entanglement between systems $\sA$ and $\sB$. It is worth
mentioning that not every entanglement measure satisfies the inequality of Eq.~\eqref{eq:monogamy}, so not all
the entanglement measures are good indicators for monogamy. A measure satisfying the inequality
\eqref{eq:monogamy} is called {\em monogamous}. In \cite{PhysRevA.61.052306} it has been shown that in \Qt the
concurrence is monogamous and satisfies
\begin{equation}\label{eq:concurrence-monogamy}
  C^2(\rho_{\sA\sB}) +  C^2(\rho_{\sA\sC}) \leq 1.
\end{equation}
Notice that, if $\rho_{\sA\sB}$ has maximal concurrence---$C(\rho_{\sA\sB})=1$---then $\rho_{\sA\sC}$ must
have concurrence equal to $0$. 

The Fermionic entanglement (say the Fermionic concurrence) is not monogamous. For instance, consider the pure
state of three {\Lfm}s $\KetBra{\Phi'}$ with
\begin{equation}
  \Ket{ \Phi^\prime } \coloneqq \tfrac{1}{2}( \Ket{000} + \Ket{110} + \Ket{011} + \Ket{101} ).
\end{equation}
Tracing the state $ \KetBra{\Phi^\prime} $ over any one of the three {\Lfm}s, we find that the reduced
bipartite state is the mixed state $ \Phi $ of Eq.~\eqref{eq:entangled-mixed}, having maximal entanglement of
formation and concurrence. Therefore, in the \Fqt as well as in \Rqt \cite{wootters2012monogamy} each pair of
subsystems can share any amount of entanglement of formation.

\subsection{Fermionic computation}\label{sec:fermionic-computation}

Recently some authors have been wondering whether models of Fermionic quantum computation might support
universal computation and/or exhibit different computational power with respect to the standard quantum
computational model. As already stressed, one can build different computational models based on {\Lfm}s,
according to: (i) the degree of superselection on the states (\eg conservation of the parity number instead of
the total excitation number), and (ii) the admitted transformations of the theory. In
Ref.~\citen{Bravyi2002210} S.~B.~Bravyi and A.~Y.~Kitaev considered a \Lfm computational model with a parity
superselection where the unitary transformations are the parity-preserving ones (\ie the CP maps with a single
Kraus operator which is a linear combination of products of an even number of field operators). They showed
that such a computational model supports universal computation and that it can be simulated by regular unitary
gates of qubits with a computational overhead that goes as the logarithm of the number of the {\Lfm}s, thus
proving the computational equivalence of the two models.  The same result can be extended to the \Fqt
presented in this paper which is the largest computational model based on {\Lfm}s satisfying the assumptions
(\ref{a:causality})--(\ref{a:deteff}) in Section~\ref{sub:ass}.

In extending the results of Ref.~\citen{Bravyi2002210} to the \Fqt, we also review the original proofs for the
sake of completeness. The proofs for the \Fqt relies on the following observation: unitary transformations of
Fermionic quantum computation of Ref.~\citen{Bravyi2002210} are parity-preserving, while the \Fqt allows also
parity-changing transformations, \ie the sets of transformations of the \Fqt are strictly larger than the ones
considered in Ref.~\citen{Bravyi2002210}. However, a parity non preserving map \( \Transformation{T} \) on \(
N \) {\Lfm}s---\ie \( \Transformation{T} \) has Kraus operators that are linear combinations of products of an
odd number of field operators---can always be seen as the sequential composition \( \Transformation{X}_i \circ
{\Transformation{X}_i}^{-1} \circ \Transformation{T} \) acting on \( N \) {\Lfm}s, where \(
\Transformation{X}_i \) is the unitary map that flips the $i$th \Lfm from occupied to unoccupied, and vice
versa.  Notice that \( {\Transformation{X}_i}^{-1} \circ \Transformation{T} \) is now parity preserving. We
conclude therefore that in the \Fqt a non parity preserving map can be seen as the sequential composition \(
\Transformation{X}_i \circ \Transformation{R} \) of a parity preserving map \( \Transformation{R} \) and a
local flip \( \Transformation{X}_i \).

\subsubsection{Universality of computation}

We want to prove that in the \Fqt there is a finite set of Fermionic gates that allows us to build every \Fqt
circuit. Given a system of $N$ {\Lfm}s, there are the parity-preserving transformations and the parity
changing ones, which can be written as the sequential composition \( \Transformation{X}_i \circ
\Transformation{R} \), with \( \Transformation{R} \) parity-preserving. In Ref.~\citen{Bravyi2002210} a
universal set \( \Upsilon \) of \Lfm gates for the parity-preserving transformations is given; then it follows
that a universal set for the \Fqt is given by \( \Upsilon \cup \{ \Transformation{X}_i \} \) for some $i \in
\{ 1,\ldots,N \}$.

Let us now review the derivation of the universal set for parity preserving transformations.  The
proof\cite{Bravyi2002210} relies on the universality of computation in \Qt, and on the possibility of
expressing every parity-preserving \Lfm gate by means of qubit gates.  It is important to notice some
differences between the qubit computation and the Fermionic one. Consider a gate $\Transformation{G}$ acting
on $M$ qubits.  Such a gate is represented by a unitary operator $G$ acting on \( ({\Complexes^{2}})^{\otimes
M} \). When such a gate is used in a quantum circuit of $N>M$ qubits, its operator representative is always
given by the unitary operator $G \otimes I$, modulo a relabeling of the subsystems; more precisely since the
Hilbert space of $N$ qubits $ (\Complexes^{2})^{\otimes N}$ can be identified with $(\Complexes^{2})^{\otimes
M} \otimes (\Complexes^{2})^{\otimes (N-M)}$ by the qubit permutation $P:\,\Ket{s_1,\ldots,s_{N}}_\sQ \mapsto
\Ket{s_{j_1},\ldots,s_{j_{M}}}_\sQ \otimes \Ket{s_{j_{M+1}},\ldots,s_{j_N}}_\sQ$, the action of \(
\Transformation{G} \) on the qubits \( s_{j_1}, \ldots s_{j_M} \) is given by the operator
\begin{equation}\label{eq:pipperino}
  \tilde{G}(j_1, \ldots, j_M) := P^{-1}\ G \otimes I\ P.
\end{equation}
Clearly such a property is of paramount importance for the universality of computation, since the gate \(
\Transformation{G} \) is ``always'' represented by the operator $G$ irrespective of the number of the qubits
of the whole circuit, and irrespective of the specific choice of the qubits the gate acts on. 

In the \Lfm scenario the situation is very different due to the \Car.  For instance, a $2$-\Lfm gate
$\Transformation{F}$ behaves differently depending on the \Lfm subsystems $\Transformation{F}$ it acts on. For
example, let \( \F[1]^\dagger \F[2] \) be a parity-preserving Fermionic operator; when it is applied to the
{\Lfm}s \( j_1 \), \( j_2 \) of a multipartite system of \( N \) {\Lfm}s it behaves differently depending on
the chosen ordering for the $N$ subsystems, since
\begin{multline*} 
  \F[j_1]^\dagger \F[j_2] \Ket{\ldots,s_{j_1},\ldots,s_{j_2},\ldots}_\sF = \\
    \delta_{s_{j_1},0} \times \delta_{s_{j_2},1} \times (-1)^{\sum_{k=j_1+1}^{j_2-1}s_k}\times
    \Ket{\ldots,1,s_{j_1+1},\ldots,s_{j_2-1},0,\ldots}_\sF.
\end{multline*}
When we represent the {\Lfm} gate $\F[j_1]^\dagger \F[j_2]$ by means of qubits, such a difference in behaviour
is taken into account by the Jordan-Wigner transform thanks to the \( \sigma^z_k \) operators at the qubit
subsystems ranging from \( k=j_1+1 \) to \( k=j_2-1 \). This fact has the following consequence: a \Lfm
operator has many qubit representations according to the total number of {\Lfm}s involved.  However, whenever
\( j_1 \) and \( j_2 \) are nearest neighbours there is no contribution from the coefficient \(
(-1)^{\sum_{k=j_1+1}^{j_2-1}s_k} \) and every 2-\Lfm parity-preserving gate acting on nearest neighbour
{\Lfm}s admits an unambiguous qubit representation made of parity-preserving qubit gates (by means of the
\Jwt).  The same result holds for one-\Lfm parity-preserving transformations---due to the fact that
parity-preserving transformations are linear combinations of products of an even number of field operators.

This allows us to represent an arbitrary $2$-\Lfm gate $ \Transformation{T}( j,k ) $ acting on the {\Lfm}s
$j$, $k$ (w.l.o.g.~$j<k$) by means of qubits in an unambiguous way. Indeed, let us call by \( \fswap(j,j+1) \)
the Kraus operator of the unitary transformation performing the swap between the \( j \)th and the \( (j+1)
\)th \Lfm, \ie \( \fswap(j,j+1) \,\F[j]\,{\fswap(j,j+1)}^\dagger = \F[j+1] \), and \( \fswap(j,j+1)
\,\F[j+1]\,{\fswap(j,j+1)}^\dagger\, =\, \F[j] \), namely \( \fswap(j,j+1) = I - \F[j]^\dag \F[j] -
\F[j+1]^\dag \F[j+1] + \F[j+1]^\dag \F[j] + \F[j]^\dag \F[j+1] \). Such an operator acts in the following way:
\( \fswap(j,j+1) \Ket{\ldots,s_j,s_{j+1},\ldots}_\sF = (-1)^{j(j+1)} \Ket{\ldots, s_{j+1} ,s_{j},\ldots}_\sF
\).  Since the swap between the $j$th and the $(j+1)$th qubit of a circuit is given by the swap operator
$\qswap(j,j+1):\,\Ket{\ldots,s_j,s_{j+1},\ldots}_\sQ \mapsto \Ket{ \ldots, s_{j+1}, s_j, \ldots}_\sQ$, we have
that
\begin{equation*}
  J(\fswap(j,j+1)) = \qswap(j,j+1) D(j,j+1),
\end{equation*}
where $D(j,j+1):\ \Ket{\ldots,s_j,s_{j+1},\ldots}_\sQ \mapsto (-1)^{j(j+1)} \Ket{ \ldots, s_{j}, s_{j+1},
\ldots}_\sQ$ is the so-called \emph{swap defect operator}. Notice that also the swap defect operator $D$ is
parity-preserving and nearest-neighbour. Since
\begin{equation}\label{eq:asddsaasd}
  \Transformation{T}( j,k ) \equiv {\fswap(k-1,k)} \ldots {\fswap(j+1,j+2)} \Transformation{T}(j,j+1)
    \fswap(j+1,j+2) \ldots \fswap(k-1,k), 
\end{equation}
we have that
\begin{multline}\label{eq:qweewqqwe}
  J(\Transformation{T}( j,k )) = D(k-1,k) \ldots D(j+1,k) \\ 
  \qswap(k-1,k) \ldots \qswap(j+1,j+2) J(\Transformation{T}( j,j+1 )) \qswap(j+1,j+2) \ldots
    \qswap(k-1,k) \\ 
  D(j+1,k) \ldots D(k-1,k).
\end{multline}
Hence we have found that an arbitrary $2$-\Lfm parity-preserving operator is equivalent to a Fermionic circuit
involving only gates on nearest neighbour {\Lfm}s (Eq.~\eqref{eq:asddsaasd}), which can therefore be
represented unambiguously by the parity-preserving qubit circuit of Eq.~\eqref{eq:qweewqqwe}.  This method
works also for operators which act on more than two {\Lfm}s.  Notice that the term \( \qswap(j+1,j+2) \ldots
\qswap(k-1,k) \) in Eq.\eqref{eq:qweewqqwe} is just the permutation $P$ of Eq.~\eqref{eq:pipperino}.

Due to the equivalence between parity-preserving Fermionic gates and parity-preserving qubit gates we only
need a universal set of parity-preserving qubit gates in order to get a universal set of Fermionic
parity-preserving unitary transformations. A universal set for the qubits is given by\cite{Bravyi2002210}
\begin{equation}\label{eq:pp-univ}
  \Lambda(e^{i\pi/4}),\qquad \Lambda(\sigma^z) \equiv D,\qquad \tilde{H}:\ \Ket{a,b}_\sQ \mapsto
    \frac{1}{\sqrt{2}}\sum_{c}(-1)^{bc}\Ket{a \oplus b \oplus c,c}_\sQ,
\end{equation}
where \( \Lambda(U) \) denotes the controlled $U$ with the control system corresponding to the first qubit.  

The proof of universality of gates in Eq.~\eqref{eq:pp-univ} proceeds as follows (i) it is observed that any
parity-preserving qubit gate $U$ can be considered as a block-diagonal operators 
\begin{equation}\label{eq:parpres}
  U=\left(
  \begin{array}{c|c}
    W_0&0\\
    \hline 0&W_1
  \end{array}\right),
\end{equation}
where $W_i$ acts on the parity sector $\Hilb{H}_i=(\Complexes^2)^{\otimes N-1}$ of the Hilbert space \(
(\Complexes^2)^{\otimes N} \); (ii) it is shown how to get any parity-preserving operator having \( W_0=W_1
\); (iii) the operators having $W_0=I$ and $W_1= Y$, which transform the operators having $W_0=W_1$ to the
general form of Eq.~\eqref{eq:parpres}, are constructed.

Notice that any operator $G$ on $M-1$ qubits can be turned into a parity-preserving one $\ppext{G}$ on $M$
qubits by using an ancillary qubit:
\begin{multline}\label{eq:homo}
  \ppext{G} =\, V_M\,(I\otimes G)\,V_M,\\
  V_M^{-1}=V_M:\ \Ket{s_1,\ldots,s_{M}}_\sQ \mapsto \Ket{s_1\oplus\ldots\oplus s_{M},\,s_2,\ldots,s_{M}}_\sQ.
\end{multline}
Indeed, the unitary operator $V_M$ maps the parity sector $\ket j\otimes\Hilb{H}_i$ of $M$ qubits onto the
subspace $\Ket{i\oplus j}\otimes\Hilb{H}_i$, and then \( \ppext{G} \) is parity preserving, even if \( G \) is
not. Notice that if $G$ already preserves the parity then $\ppext{G}=I \otimes G$. This is the case of the
first two operators $\Lambda(e^{i\pi/4})$ and $\Lambda(\sigma^z)$ in Eq.~\eqref{eq:pp-univ}, while the last
universal gate $\ppext{H}$ is the parity-preserving extension of the usual Hadamard gate $H$.  Since for every
unitary $G$ on $N-1$ qubits the unitary $\ppext{G}$ of Eq.~\eqref{eq:homo} is parity-preserving, we have that
\( \ppext{G}\) is of the form of Eq.~\eqref{eq:parpres}. On the other hand, one can easily check that
$V_M(\sigma^x\otimes I_{M-1})V_M=\sigma^x\otimes I_{M-1}$, hence \( [ \sigma^x \otimes I_{M-1},\ppext{G} ] = 0
\).  Moreover, since $(\sigma^x\otimes I_{M-1})\ket i\otimes\ket{\xi_j}=\ket {i\oplus1}\otimes\ket{\xi_j}$, if
we identify the bases $\ket {0}\otimes\ket{s1,\ldots,s_{M-1}}$ and $\ket {1}\otimes\ket{s1,\ldots,s_{M-1}}$ in
the subspaces $\ket i\otimes\Hilb{H}_i$, we have
\begin{equation*}
  \sigma^x\otimes I_{M-1}=\left(
  \begin{array}{c|c}
    0&I\\
    \hline    I&0
  \end{array}\right),
\end{equation*}
which implies that for any $M-1$-qubits gate $G$ the parity-preserving extension $\ppext G$ has $W_0=W_1$.

Since Eq.~\eqref{eq:homo} defines a $*$-algebra homomorphism, any universal set of gates is mapped to a set of
parity-preserving gates that is universal on the even sector. The set of gates \( \{
\Lambda(e^{i\pi/4}),\,\Lambda(\sigma^z),\,H \} \) is known to be universal, then the corresponding
parity-preserving set given by Eq.~\eqref{eq:pp-univ} must be universal. Notice that the homomorphism
\eqref{eq:homo} satisfies the property: \( \ppext{\Lambda(X)} = {S^{c,p}}\, \Lambda(\ppext{X})\, S^{c,p} \)
where \( S^{c,p} \) is a swap between the control and the parity qubits.

We conclude that the set of Eq.~\eqref{eq:pp-univ} is universal for parity-preserving unitary gates having \(
W_0=W_1 \). We can use the same set to build parity-preserving unitary operators $K$ with $W_0=I$ and $W_1=Y$
to correct the first step. We add one ancillary qubit at the end of our $M$ qubits.  Let us define the
operator
\begin{equation*}
  Z:\ \Ket{s_1,\ldots,s_M,s_{M+1}}_\sQ \mapsto \Ket{s_1 \oplus \ldots \oplus s_M, s_2 ,\ldots,s_{M}, s_2
    \oplus \ldots \oplus s_{M+1}}_\sQ.
\end{equation*}
Let $K$ be a parity-preserving unitary operator with $W_0=I$ and $W_1=Y$, and let $\ppext{H}$ have diagonal
blocks $W_0=W_1=Y$.  Denoting by \( P \) the permutation $\Ket{s_1,s_2,\ldots,s_M,s_{M+1}}_\sQ \mapsto
\Ket{s_1,s_{M+1},s_2,\ldots,s_M}_\sQ$, we have
\begin{equation}
  Z^{-1}\,P^{-1} (\Lambda(\ppext{H}) \otimes I) P \, Z = (\,V^{-1} \Lambda(H) V\,) \otimes I = K\otimes I.
\end{equation}
We just need to represent the operator $Z$ by means of the operators in the universal set of
Eq.~\eqref{eq:pp-univ}. This task can be easily accomplished upon noticing that \( Z \equiv
\Lambda(\ppext{\sigma^x})(m-1,0,m) \Lambda(\ppext{\sigma^x})(m-2,0,m) \cdots
\Lambda(\ppext{\sigma^x})(1,0,m)\), where \( \Lambda(\ppext{\sigma^x})(j_1,j_2,j_3):\
\Ket{\ldots,s_{j_1},\ldots, s_{j_2},\ldots,s_{j_3},\ldots}_\sQ \mapsto \Ket{\ldots,s_{j_1},\ldots, s_{j_2}
\oplus s_{j_1},\ldots,s_{j_3} \oplus s_{j_1},\ldots}_\sQ \). Now, the operator \( \Lambda(\ppext{\sigma^x}) \)
acting on the qubits \( \sA \), \( \sB \), and \( \sC \) can be expressed in terms of the universal set as \(
\Lambda(\ppext{\sigma^x})_{\sA\sB\sC} = (\ppext{H}_{\sA\sC} \otimes I_\sB) \ ( \Lambda(\sigma^z)_{\sB\sC}
\otimes I_\sA ) \  ( \ppext{H}_{\sA\sC} \otimes I_\sB ) \).

Now we just need to represent the gates of Eq.~\eqref{eq:pp-univ} in terms of the creation and the
annihilation operators. The first two operators are
\begin{equation}\label{eq:lfm-univ}
  \Lambda(e^{i\pi/4})=\exp(i\frac{\pi}{4}\F[0]^\dagger \F[0]),\qquad
    \Lambda(\sigma^z)=\exp(i\pi \F[0]^\dagger \F[0]\F[1]^\dagger \F[1]).
\end{equation}
The gate $\ppext{H}$ can be represented in the {\Lfm} case by means of the decomposition
\begin{equation*}
\ppext{H}=\
  [ I \otimes \Lambda(-i) ] \cdot \ppext{G} \cdot [ I \otimes \Lambda(-i)],\qquad\ G=\left(
    \begin{array}{cc}
      1 & i \\ 
      i & 1 
    \end{array} \right) .
\end{equation*}
Hence a universal set for the parity-preserving gates of the \Fqt is given by the gates of
Eq.~\eqref{eq:lfm-univ} together with
\begin{equation*}
  \ppext{G}\,=\,
  \exp[-i\frac{\pi}{4}(\F[0]-\F[0]^\dagger)(\F[1]+\F[1]^\dagger)] \,=\,
  \exp[i\frac{\pi}{4}(\F[0]^\dagger \F[1]+\F[1]^\dagger \F[0])]
  \exp[i\frac{\pi}{4}(\F[1] \F[0]+\F[0]^\dagger \F[1]^\dagger)].
\end{equation*}

\subsubsection{Simulation}

We now address the issue of simulating a qubit circuit by means of a \Fqt circuit, and vice versa. The proof
of the universality given in the previous section gives already a way to simulate a \Lfm circuit by means of
qubits, relying on the Jordan-Wigner isomorphism between the qubit algebra and the Fermionic one. Moreover,
thanks to Section~\ref{sec:locc} we know that a Fermionic \Locc can be simulated by a \Locc on qubits. In
order to address the simulation in the other way round, and to tight the simulation cost of the previous
section, we will present the scheme of Ref.~\citen{Bravyi2002210} which does not rely on the identification \(
\Ket{s_1, \ldots, s_N}_\sF \leftrightarrow \Ket{s_1, \ldots, s_N}_\sQ \)---\ie the \Jwt.  As we will see, this
time the scheme of Ref.~\citen{Bravyi2002210} works out of the box even for the \Fqt.  In the following, for
the sake of convenience, we will label the \Lfm and the qubit systems starting from ``zero'', and not from
``one'' as we did in the rest of the paper.

Given a circuit of the \Fqt, a procedure to simulate a $K$-\Lfm gate \( \Transformation{T} \) can be
summarized as: (i) we embed the $M$ {\Lfm}s system in a $M$ qubits system, (ii) we add $K$ ancillary qubits
initialized in the state \( \Ket{0,\ldots,0}_\sQ \), (iii) we exchange the qubits corresponding to the {\Lfm}s
involved in the computation with the ancillas, taking into account possible global phases due to the
anticommutation relation of the original \Lfm systems (iv) we perform the computation on the ancilla by means
of the corresponding qubit gate, (v) we revert the extracted qubits in their original position, (vi) we
re-encode the resulting qubits---excluding the ancillary qubits---in the original $M$ {\Lfm}s.

Clearly, one possible way of encoding is given by the \Jwt, namely \( \Ket{s_0, \ldots, s_{M-1}}_\sF
\leftrightarrow \Ket{s_0, \ldots, s_{M-1}}_\sQ \). This is actually the same encoding used in the previous
section in order to derive the universal set for the Fermionic computation. In such a case the process of
embedding and of extraction (of the $j$th \Lfm) is synthetically given by
\begin{multline}\label{eq:JW-embedding}
  \Ket{s_0, \ldots, s_j, \ldots, s_{M-1}}_\sF \to \Ket{s_0, \ldots, s_j, \ldots, s_{M-1}}_\sQ \to \Ket{0, s_0,
    \ldots, s_j, \ldots, s_{M-1}}_\sQ \to \\ 
  \to (-1)^{s_j \oplus_{i=0}^{j-1} s_i} \Ket{s_j, s_0, \ldots, 0, \ldots, s_{M-1}}_\sQ.
\end{multline}

A simulation scheme resorting to the above \Jwt encoding is not very efficient, since every time we perform
the extraction of one qubit we shall evaluate a phase given by the coefficient \( (-1)^{s_j \oplus_{i=0}^{j-1}
s_i} \) of Eq.~\eqref{eq:JW-embedding}. Therefore, in the worst case scenario, for every $1$-\Lfm gate we have
to use $O(M)$ qubit gates. We can do better using a different embedding.

Let us introduce a partial ordering in the space of the binary strings: we say that the binary string \(
\alpha:=\alpha_{T-1}\ldots\alpha_0 \) precedes \(\beta:= \beta_{T-1}\ldots\beta_0 \), and we write \( \alpha
\preceq \beta\), whenever for some \( 0 \leq l_0 \leq T-1 \) we have \( \alpha_l = \beta_l \) for \( l \geq
l_0\), and \( \beta_{l} = 1 \) for \( l < l_0 \).  If we denote with \( j_\text{bin} \) the binary string
corresponding to the decimal number \( j \), we have \( j_\text{bin} \prec k_\text{bin} \implies j < k \),
where clearly \( j_\text{bin} \prec k_\text{bin} \Leftrightarrow ( j_\text{bin} \preceq k_\text{bin} ) \wedge
( j_\text{bin} \neq k_\text{bin} ) \).  Notice that given a binary string $ j_\text{bin} = \alpha$ of length
$T$, there are at most $T$ binary strings \( k_\text{bin} \) of the same length satisfying the relation
$j_\text{bin} \preceq k_\text{bin}$. Indeed, since for every \( 0 \leq l_0 \leq T-1 \) the strings greater
than or equal to \( j_\text{bin} \) are precisely those of the form \( \alpha_{T-1}\ldots\alpha_{l_0}1\ldots1
\) , and since there are at most $T$ different strings of this kind, an upper bound to the number of binary
strings greater than a given one is given by the string length $T$.

We can now consider the following encoding scheme:
\begin{multline}\label{eq:magical-encoding}
  \Ket{s_0, \ldots, s_{M-1}}_\sF \mapsto \Ket{x_0, \ldots, x_{M-1}}_\sQ, \\
  \text{where } x_j = \bigoplus_{ i \in S(j)} s_i, \text{ and } S(j) := \{ k \in \{ 0, \ldots, M-1 \} \mid
    k_\text{bin} \preceq j_\text{bin}  \}.
\end{multline}
It is very important to notice that since a $s_i$ appears in every \( x_j \) satisfying \( i_\text{bin}
\preceq j_\text{bin} \), we have that a $s_j$ appears at most in \( T \) of the $x_j$, with $T=
\operatorname{ceiling}(\log_2 M)$ being the number of bits required to binary encode the labels of the \Lfm
systems, ranging from \( 0 \) to \( M-1 \).

Let $\chi^{[l_0]}$ be the string having $\chi^{[l_0]}_l=1$ for $l \leq l_0$ and $\chi^{[l_0]}_l=0$ otherwise,
then the following two properties hold:
\begin{enumerate}[(i)]
\item the inversion of the relation \( x_j = \oplus_{i\in S(j)} s_i \) leads to \( s_j = x_j -
  \oplus_{i \in K(j)} x_i \), where for \( j_\text{bin} = \beta_{T-1}\ldots\beta_0 \)
  \begin{equation*}
    K(j) = \{ \alpha \mid \exists\ 0 \leq l_0 \leq T-1 \text{ s.t. } 
    \chi^{[l_0]}_l \beta_l =\chi^{[l_0]}_l, \  
    \alpha_l = \beta_l\oplus\delta_{ll_0}
    \};
  \end{equation*}
\item the quantity \( \oplus_{i=0}^{j-1} s_i \) can be written in terms of the encoded numbers \( x_l \) as \(
  \oplus_{i=0}^{j-1} s_i = \oplus_{i\in L(j)} x_i \) where \( j_\text{bin} = \beta_{T-1}\ldots\beta_0 \)
  \begin{equation*}
    L(j) = \{ \alpha \mid \exists\ 0 \leq l_0 \leq T-1 \text{ s.t. } \alpha_l =
    \beta_l\oplus\chi^{[l_0]}_l(\beta_l\oplus1),\ \beta_{l_0}(\alpha_{l_0} \oplus1) = 1\}.
  \end{equation*}
\end{enumerate}
Observe that also the sums appearing in the two above expressions contain at most \( \log_2
M \) elements.

While the extraction procedure of the qubits with the standard encoding given by the \Jwt requires a number of
computational steps linear in the number of the {\Lfm}s of the circuit, with this last encoding the situation
is improved: suppose to extract the $j$th qubit starting from the initial state \( \Ket{s_0, \ldots,
s_{M-1}}_\sF \) encoded in the qubit state \( \Ket{x_0, \ldots, x_{M-1}}_\sQ \). First of all we add the
ancillary qubit \( \Ket{0} \) at the beginning of the string (let us call it ``the qubit at the position
$-1$''), then the extraction goes as follows
\begin{multline*}
 \Ket{0, x_0, \ldots, x_{M-1}}_\sQ \overset{A}{\rightarrow}  \Ket{s_j, x_0, \ldots, x_{M-1}}_\sQ
   \overset{B}{\rightarrow} \Ket{s_j, x_0^\prime, \ldots, x_{M-1}^\prime}_\sQ \overset{C}{\rightarrow} \\
 \overset{C}{\rightarrow} (-1)^{s_j \oplus_{i=0}^{j-1}s_i} \Ket{s_j, x_0^\prime, \ldots, x_{M-1}^\prime}_\sQ,
\end{multline*}
where
\begin{enumerate}[A]

  \item is a unitary evolution  that evaluates the value of \( s_j \) from the encoded string \( x_{M-1} \ldots
    x_0 \) and writes it into the ancillary qubit. Such an operation can be achieved by means of the unitary map
    \begin{equation}
      A = \prod_{i\in K(j)\cup\{j\}} \Lambda(\sigma^x)(i,-1),
    \end{equation}
    where we remember that \( \Lambda(U)(i_0,\ldots,i_p) \) represents the controlled unitary \( U \) with
    control system \( i_0 \) and target systems \( i_1,\ldots,i_p \). Since the cardinality of \( K(j) \) is \(
    O(\log M) \), we will need \( O(\log M ) \) gates to perform $A$;

  \item turns the original \( s_j \) (not the copy in the ancillary qubit) to zero. The transformation $B$
    then must change the encoded string in such a way that the following diagram commutes:
    \begin{equation*}
      \begin{array}[c]{ccc}
        \Ket{s_0, \ldots, s_j, \ldots, s_{M-1}}_\sQ & \longrightarrow & \Ket{s_0, \ldots, s_j\oplus s_0, \ldots,
          s_{M-1}}_\sQ \\ 
        \downarrow{\scriptstyle E} & & \downarrow{\scriptstyle E} \\
        \Ket{x_0, \ldots, x_j, \ldots, x_{M-1}}_\sQ & \overset{B}{\longrightarrow} & \Ket{x^\prime_0, \ldots,
          x^\prime_j, \ldots, x^\prime_{M-1}}_\sQ 
      \end{array}.
    \end{equation*}
    This operation is achieved by the following unitary
    \begin{equation*}
      B = \prod_{i_\text{bin} \succeq j_\text{bin}} \Lambda(\sigma^x)(-1,i),
    \end{equation*}
    where again the number of gates required is \( O(\log M) \);

  \item is an unitary evolution that evaluates the phase due the exchange of the Fermionic wires:
    \begin{equation*}
      C = \prod_{j \in L(j)} \Lambda(\sigma^z)(-1,i).
    \end{equation*}
    Again, since \( | L(j) | \approx O( \log M ) \), the number of the required gates amounts to \( O( \log M
    ) \).

\end{enumerate}
In conclusion to simulate a 1-\Lfm gate in a circuit of $M$ {\Lfm}s by means of qubit gates we need \( O( \log
M ) \) qubit gates (instead of \( O( M ) \) gates needed using the \Jwt encoding). This result holds for every
\Lfm operator---\ie for a $K$-\Lfm gate (with $K\leq M$) one needs to extract $K$ qubits by means of the above
procedure. Moreover notice that the proof does not require the gates to be parity-preserving.  Indeed the
reviewed procedure of Ref.~\citen{Bravyi2002210} provides an efficient way to perform the qubit extraction (or
equivalently to take into account the phase factor given by the $\sigma^z$ of the \Jwt) irrespective of the
parity features of the gate we want to simulate. Hence, the 1-\Lfm transformation $\Transformation{X}(\rho) =
(\F[i]^\dagger + \F[i]) \rho (\F[i]^\dagger + \F[i]) $, which is parity changing, can also be achieved by
means of \( \log M \) qubit gates: we just need to perform a $\sigma^x$ on an extracted qubit.

As shown in \citen{Bravyi2002210}, an efficient simulation of a $N$-qubit circuit by means of a \Lfm circuit
is easier. First of all one performs the encoding of the $N$ qubits into $2N$ qubits through the isometric
embedding \( V:\ \Ket{ s_0, \ldots, s_{N-1} }_\sQ \mapsto \Ket{ s_0, s_0, \ldots, s_{N-1}, s_{N-1} }_\sQ \). A
quantum gate $\Transformation{G}$ acting on the $j$th and the $k$th qubits---thus represented by the unitary
operator $\tilde{G}(j,k)$ of Eq.~\eqref{eq:pipperino}---is represented on the $2N$ qubits by the gate \(
\tilde{G}^\prime( 2j, 2j+1, 2k, 2k+1 ) = V \tilde{G}(j,k) V^\dagger \). If we embed the resulting $2N$-qubit
circuit into $2N$ {\Lfm}s by means of the \Jwt, the resulting $J(\tilde{G}^\prime(j,j+1,k,k+1))$, besides
being parity-preserving, is also made of field operators acting only on the {\Lfm}s \( 2j \), \( 2j+1 \), \(
2k \), \( 2k+1 \), namely no field operators on the rest of the circuit are needed. In conclusion, every qubit
gate acting on $2$ qubits can be simulated by means of a $4$-\Lfm gate. The same result clearly generalizes
for gates with an arbitrary number of qubits.

\section{Acknowledgments}
This work has been supported in part by the Templeton Foundation under the project ID\# 43796 {\em A
Quantum-Digital Universe}.  


\end{document}